\documentclass[11pt]{article}

\usepackage{amsmath}
\usepackage{amssymb}
\usepackage{amsthm}
\usepackage{mathabx}
\usepackage{graphicx} 
\usepackage{fullpage}
\usepackage{hyperref}
\usepackage{thmtools}
\usepackage[nameinlink,capitalise]{cleveref}
\usepackage{xcolor}
\usepackage{multicol}
\usepackage{url}
\usepackage[utf8]{inputenc}
\usepackage{enumitem}
\usepackage{longtable}

\newcommand{\zo}{\{0,1\}}

\newcommand{\CSP}{\operatorname{CSP}}

\newcommand{\Q}{\mathbb{Q}}

\newcommand{\AND}{\mathbf{AND}}

\newcommand{\OR}{\operatorname{OR}}

\newcommand{\NRD}{\operatorname{NRD}}

\newcommand{\one}{{\bf 1}}
\newcommand{\wt}{\operatorname{wt}}

\newcommand{\Z}{\mathbb{Z}}

\newtheorem{theorem}{Theorem}
\numberwithin{theorem}{section}
\newtheorem{lemma}[theorem]{Lemma}
\newtheorem{claim}[theorem]{Claim}
\newtheorem{proposition}[theorem]{Proposition}

\newtheorem{conjecture}[theorem]{Conjecture}

\theoremstyle{definition}
\newtheorem{definition}[theorem]{Definition}

\allowdisplaybreaks

\title{Classification of Non-redundancy of Boolean Predicates of Arity 4}

\author{Joshua Brakensiek\thanks{University of California, Berkeley. Supported in part by NSF award DMS-2503280. Contact: \href{mailto:josh.brakensiek@berkeley.edu}{josh.brakensiek@berkeley.edu}} \and Venkatesan Guruswami\thanks{Simons Institute for the Theory of Computing and the University of California, Berkeley. Supported in part by a Simons Investigator award and NSF award CCF-2211972. Contact: \href{mailto:venkatg@berkeley.edu}{venkatg@berkeley.edu}} \and Aaron Putterman\thanks{Harvard University. Supported in part by the Simons Investigator awards of Madhu Sudan and Salil Vadhan, and AFOSR award FA9550-25-1-0112. Contact: \href{mailto:aputterman@g.harvard.edu}{aputterman@g.harvard.edu}}}

\begin{document}

\maketitle

\begin{abstract}
Given a constraint satisfaction problem (CSP) predicate $P \subseteq D^r$, the non-redundancy (NRD) of $P$ is maximum-sized instance on $n$ variables such that for every clause of the instance, there is an assignment which satisfies all but that clause. The study of NRD for various CSPs is an active area of research which combines ideas from extremal combinatorics, logic, lattice theory, and other techniques. Complete classifications are known in the cases $r=2$ and $(|D|=2, r=3)$. 

In this paper, we give a near-complete classification of the case $(|D|=2, r=4)$. Of the 400 distinct non-trivial Boolean predicates of arity 4, we implement an algorithmic procedure which perfectly classifies 397 of them. Of the remaining three, we solve two by reducing to extremal combinatorics problems---leaving the last one as an open question. Along the way, we identify the first Boolean predicate whose non-redundancy asymptotics are non-polynomial.
\end{abstract}

\section{Introduction}

In the study of constraint satisfaction problems (CSPs), an often-studied question is to find succinct representations of an input CSP instance such that some essential property of the original instance is preserved. The most common such name for such a task is that of \emph{sparsification}, although many of these formulations have subtly different objectives. For example, one can seek efficient transformations of an instance such that the transformed instance is satisfiable if and only if the input instance is (even if one cannot efficiently determine if the instance is satisfiable in the first place)~\cite{dell2014Satisfiability,jansen2019optimal,chen2020BestCase,lagerkvist2020Sparsification,carbonnel2022Redundancy}. A seemingly unrelated line of work seeks to (sparsely) reweight the clauses of an input CSP instance such that the fraction of clauses satisfied by every assignment is approximately preserved~\cite{DBLP:conf/soda/Karger93,BK96,BSS09,ST11,FHH11,KK15,FK17,SY19,BZ20,CCPS21,KKTY21a,JLS22,KKTY21b,Lee23,KPS24,khanna2024near,khanna2025efficient, brakensiek2025tight}.

A common thread between these two lines of work is the core combinatorial notions of \emph{non-redundancy} (NRD). The systematic study of NRD (and a related concept called \emph{chain length}) was initiated by Bessiere, Carbonnel, and Katsirelos~\cite{bessiere2020Chain} towards understanding the query complexity of CSPs in the partial membership query model~\cite{bessiere2013constraint}. Later, Carbonnel~\cite{carbonnel2022Redundancy} observed the utility of NRD in the study of CSP kernelization and further conjectured that the non-redundancy of any CSP is (approximately) an upper bound\footnote{The reason non-redundancy can only be an upper bound is that tractable CSPs (such as 2-SAT) have trivial kernels, although one can still study their NRD as a combinatorial quantity.} on the kernelization of any CSP. Very recently, Brakensiek and Guruswami~\cite{brakensiek2025redundancy} gave a direct link between combinatorial non-redundancy and the construction of (weighted) CSP sparsifiers, showing that the approaches of these disparate communities are (for the most part) working towards the same fundamental questions, often with overlapping results.

An example of these overlapping results is in the systematic classification of the non-redundancy of certain families of CSP predicates. For example, both Bessiere, Carbonnel, and Katsirelos~\cite{bessiere2020Chain} and Butti and {\v Z}ivn{\'y}~\cite{BZ20} (see also \cite{FK17}) classified all predicates of arity $2$, that is predicates of the form $P \subseteq D^2$ for some finite domain $D$. Although the classifications by these two works were in rather different settings, the dichotomies were identical:\footnote{We present the classification as it appears in \cite{bessiere2020Chain}, however due to a different ``satisfiability convention'' the classification in \cite{BZ20} is subtlely different. In particular, Butti and {\v Z}ivn{\'y}~\cite{BZ20} show if $P$ has a gadget reduction to $\AND_2 := \{11\}$, then $P$ has quadratic sparsifiability; otherwise $P$ has near-linear sparsifiability. These differing conventions each appear in various works, but for simplicity of exposition, we stick with the convention used by Bessiere, Carbonnel, and Katsirelos~\cite{bessiere2020Chain} which is consistent with the CSP kernelization literature. See Brakensiek and Guruswami~\cite{brakensiek2025redundancy} for further discussion.} if $P$ can define via a gadget reduction (more precisely an fgpp-reduction~\cite{carbonnel2022Redundancy}) the predicate $\OR_2 := \{00,01,10\}$, then $P$ has quadratic non-redundancy (where the growth rate is measured by the number of variables $n$ in the instance). Otherwise, $P$ has linear non-redundancy.

In a similar vein, both Chen, Jansen, and Pieterse~\cite{chen2020BestCase} and Khanna, Putterman, and Sudan~\cite{khanna2025efficient} (in effect) fully classified the non-redundancy of every ternary Boolean predicate $P \subseteq \{0,1\}^3$. The gadget reduction is very simple. If $P$ lacks a gadget reduction to $\OR_2$, then $P$ has linear non-redundancy. Otherwise, if $P$ has a gadget reduction to $\OR_2$ but not to $\OR_3 := \{0,1\}^3 \setminus \{000\}$, then $P$ has quadratic non-redundancy. Otherwise, $P$ has cubic non-redundancy. In particular, every such predicate has as its non-redundancy a polynomial function of the number of variables. This led Khanna, Putterman, and Sudan~\cite{khanna2025efficient} to conjecture that for their notion of CSP sparsifiability, the dependence on the number of variables is always a polynomial with an integral exponent.

However, such ``integral optimism'' turns out to be far from the truth. Both Bessiere, Carbonnel, and Katsirelos~\cite{bessiere2020Chain} and Carbonnel~\cite{carbonnel2022Redundancy} gave examples of predicates with natural non-redundancy upper bounds which were not integral polynomial functions of the number of variables, although they could not give lower bounds which proved the non-polynomial dependence is necessary. However, a few years later, Brakensiek and Guruswami~\cite{brakensiek2025redundancy} identified an explicit predicate $P \subseteq \{0,1,2\}^3$ whose non-redundancy as a function of the number of variables $n$ lies between $\Omega(n^{1.5})$ and $O(n^{1.6}\log n)$. Other predicates with fractional exponents were identified by Brakensiek, Guruswami, Jansen, Lagerkvist, and Wahlstr{\"o}m~\cite{brakensiek2025Richness}, including showing for every rational number $q \ge 1$ there exists a finite arity $r$ and a finite domain $D$ such that some $P \subseteq D^r$ has non-redundancy growing as $\Theta(n^q)$.

However, all such examples involve domains $D$ of size at least three. In particular, there are \emph{no known examples} of Boolean predicates $P \subseteq \{0,1\}^r$ whose non-redundancy fails to be an integral polynomial, thus leaving the conjecture of Khanna, Putterman, and Sudan~\cite{khanna2025efficient} open in this setting. That said, this conjecture is still generally believed to be false in the Boolean setting due to the existence of a notorious predicate\footnote{A succinct description of this predicate is the representation of the symmetric group $S_3$ over $\operatorname{GL}(3)$ with the identity matrix removed~\cite{brakensiek2025redundancy}.} $\mathsf{BoolBCK} \subseteq \{0,1\}^9$ independently identified by many different research groups~\cite{chen2020BestCase,lagerkvist2020Sparsification,khanna2025efficient}, all of which derive from a non-Boolean ternary predicate identified by Bessiere, Carbonnel, and Katsirelos~\cite{bessiere2020Chain}---see \cite{brakensiek2025redundancy,brakensiek2025Richness} for further discussion and analysis.

\subsection{Our Contributions} 
Note, however, that there is a large gap between the fully classified arity-$3$ predicates and this suspected-fractional predicate of arity $9$. The focus of this paper is to systematically understand Boolean predicates of arity $4$. In particular, using a computer search, we enumerate all $400$ non-trivial\footnote{Note that non-redundancy is unchanged under certain operations of the predicate, such as negating a bit or swapping two coordinates. This is why we are able to work with $400$ predicates instead of $2^{2^4}=65536$.} Boolean predicates $P \subseteq \{0,1\}^4$. To analyze these predicates, we implemented various techniques for both lower-bounding and upper-bounding the non-redundancy of $P$. In particular, for the lower bound, we checked the largest $r$ such that $P$ has a gadget reduction to $\OR_r$. For the upper bound, we used lattice-based techniques \cite{chen2020BestCase,khanna2025efficient} to identify low-degree polynomial representations of $P$. With only these two techniques, we were able to perfectly classify the non-redundancy of $397$ of the $400$ predicates, giving an integral exponent for each of them. The other three identified predicates are as follows (numbers derive from the table in \Cref{app:results}).
\begin{align*}
R_{181} &= \{0000, 0001, 0010, 0100, 0111, 1000, 1011, 1101, 1111\}\\
R_{299} &= \{ 0000, 0001, 0010, 0011, 0100, 0101, 1000, 1110, 1111\}\\
R_{317} &= \{ 0000, 0001, 0010, 0011, 0100, 0101, 1000, 1001, 1110\}.
\end{align*}
For each $R \in \{R_{181}, R_{299}, R_{317}\}$, our initial techniques show that $\NRD(R, n) \in [\Omega(n^2), O(n^3)]$, so further techniques are needed to resolve the non-redundancy of these predicates. For, $R_{317}$, we prove that $\Theta(n^3)$ is the correct answer. However, the lower bound does not following from a gadget reduction to $\OR_3$ but rather from the structure of \emph{linear 3-uniform hypergraphs}; see \Cref{subsec:317}.

For $R_{299}$, prove that $\NRD(R_{299}, n)$ does \emph{not} grow as a polynomial function of $n$, the first Boolean example of this phenomenon. This is also the first example of a predicate whose non-redundancy is asymptotically $n^{s-o(1)}$, where $s$ is a rational number. 
More precisely, we prove the following.
\begin{theorem}
$\NRD(R_{299}, n) \in \left[\frac{n^3}{2^{O(\sqrt{\log(n)})}}, \frac{n^3}{2^{\Omega(\log^*(n))}}\right]$.
\end{theorem}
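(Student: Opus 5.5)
The two bounds in the statement, $n^3/2^{O(\sqrt{\log n})}$ and $n^3/2^{\Omega(\log^* n)}$, are the standard lower and upper bounds in the $3$-uniform analogue of the Ruzsa--Szemer\'edi problem. That problem asks for the maximum number of tetrahedra (copies of $K_4^{(3)}$) in an $n$-vertex $3$-uniform hypergraph in which every triple lies in at most one tetrahedron. The lower bound there comes from a Behrend-type construction. The upper bound comes from the hypergraph removal lemma, whose tower-type dependence yields only a $2^{\Omega(\log^* n)}$ saving. My plan is therefore to show that $\NRD(R_{299},n)$ is equivalent, up to constant factors and lower-order terms, to this extremal quantity. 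I would prove an upper reduction (non-redundant instance $\Rightarrow$ "tetrahedron-unique" $3$-graph) and a lower construction (such a $3$-graph $\Rightarrow$ non-redundant instance).

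\textbf{Structure of $R_{299}$.} Split on the first coordinate. If $x_1=0$, the satisfying tuples are exactly those with $(x_2,x_3)\neq(1,1)$, with $x_4$ free. If $x_1=1$, the satisfying tuples are $1000,1110,1111$, i.e.\ $(x_2,x_3,x_4)\in\{000,110,111\}$. So a clause $(a,b,c,d)$ is violated exactly when one of the following holds:
\begin{itemize}
\item $a=0$ and $b=c=1$;
\item $a=1$ and $(b,c,d)\in\{001,010,011,100,101\}$.
\end{itemize}
For each clause $C$ of a non-redundant instance, fix a witness assignment $\sigma_C$ violating only $C$, and classify clauses by which violation pattern $\sigma_C$ uses.

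\textbf{Upper bound.} First, by pigeonholing on patterns and using a random $4$-partition of the variables, I pass to a $4$-partite sub-instance in which all witnesses have the same pattern; this loses only a constant factor. Patterns whose violation is "degree-$2$-like" (for example $a=0$, $b=c=1$) should contribute only $O(n^2)$ clauses, via the polynomial/lattice method the paper already uses for the generic $O(n^3)$ bound. For the remaining patterns, I map each clause $(a,b,c,d)$ to the tetrahedron on the four variables. The key claim is a uniqueness property: whenever two clauses share three variables in the same positions, the witness of one satisfies the other. From this I aim to deduce that every triple of the resulting $3$-graph lies in $O(1)$ tetrahedra coming from clauses. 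It must also be checked that the $3$-graph has no other tetrahedra that the removal lemma would count. I would handle this by letting $H$ contain only the triples induced by clauses, and arguing that any tetrahedron of $H$ not arising from a clause contradicts some witness. Given this, the removal lemma (with Gowers / Nagle--R\"odl--Schacht bounds) gives $o(n^3)$ clauses, specifically $n^3/2^{\Omega(\log^* n)}$.

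\textbf{Lower bound.} Take the Behrend-type $4$-partite construction. Its parts are copies of $[N]$ (or $\Z_N$), and its tetrahedra are indexed by $(x,d_1,d_2)$ with the step constrained to a set free of $3$-term progressions. This gives $n^3/2^{O(\sqrt{\log n})}$ tetrahedra, each triple in exactly one. Place one clause per tetrahedron, ordering its variables by part. For a clause $C=(a,b,c,d)$, I define $\sigma_C$ by setting $a=1$ and $b=c=d=0$, except $c=1$ on the single vertex of part $3$ in $C$. Equivalently, $\sigma_C$ assigns $1$ to all of part $1$, $0$ to all of parts $2$ and $4$, and, within part $3$, $1$ only on $C$'s vertex. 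Under this assignment $C$ reads $1010$, which is violated. Any other clause reads either $1000$, which is satisfied, or $1010$ on a different clause sharing $C$'s part-$3$ vertex. These shared-vertex clauses are the problem, so the witness must be refined so that it depends on a triple rather than a single vertex. This refinement is where the triple-uniqueness of the construction has to be used: the witness should violate exactly the unique tetrahedron through a given triple, perhaps by encoding part-$2$ or part-$4$ values as functions of the Behrend coordinates.

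\textbf{Main obstacle.} I expect the main difficulty to be finding the right witness patterns in both directions, namely assignments whose violation set is pinned down by a triple of variables. This is exactly what makes tetrahedron-uniqueness both necessary (upper bound) and sufficient (lower bound). I would first try enumerating which of the five $a=1$ violation patterns admit such triple-determined witnesses.
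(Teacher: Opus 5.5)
\textbf{There is a genuine gap: the upper-bound reduction fails at its key step, and the lower bound has no witness.} The underlying extremal problem is misidentified.

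\textbf{Upper bound.} Your observation that the witness patterns $011*$ and four of the five $a=1$ patterns are harmless is correct. Their first three bits are $011$, $101$ or $110$, so they lie outside $Q\times\{0,1\}$ with $Q=\{000,001,010,100,111\}$. That set is the zero set of $x_1x_2+x_1x_3+x_2x_3 \bmod 3$, so these patterns contribute only $O(n^2)$. The one remaining pattern, $1001$, is exactly where triple-uniqueness breaks:
\begin{itemize}
\item Fix $(a,b,c)$ and let $d$ range over all of part $4$. The $n$ clauses $(a,b,c,d)$ form a non-redundant family, all with witness pattern $1001$. Use the witness $a\mapsto 1$, $b,c\mapsto 0$, $d\mapsto 1$, and every other part-$4$ vertex $\mapsto 0$; the other clauses then read $1000\in R_{299}$.
\item Symmetrically, a triple in positions $\{2,3,4\}$ can lie in $n$ clauses: set the chosen part-$1$ vertex to $1$ and all others to $0$, so the other clauses read $0001$.
\end{itemize}
So ``every triple lies in $O(1)$ tetrahedra'' cannot be deduced. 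Indeed, near-optimal instances exist of the product form $X\times V_4$, in which every triple in positions $\{1,2,3\}$ lies in $\Theta(n)$ clauses. Tetrahedron-uniqueness is therefore not the governing structure.

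Even granting some reduction to $K_4^{(3)}$-packing, the $3$-uniform removal lemma is only known with Ackermann-type bounds. It would not give a $2^{\Omega(\log^* n)}$ saving; that saving is specific to Fox's bound for graph triangle removal.

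\textbf{Lower bound.} You exhibit no valid witness. The one you propose fails, as you note, and the needed ``refinement'' is exactly the content of that direction.

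\textbf{The missing idea} is that the fourth coordinate acts as a selector: $R_{299}=(Q\times\{0\})\cup(P\times\{1\})$ with $P=\{000,001,010,111\}$ and $Q=P\cup\{100\}$. Consequently, $\NRD(R_{299},n)$ equals $\Theta(n)\cdot\NRD(P\mid Q,n)$ up to the additive $O(n^2)$ coming from $Q\times\{0,1\}$.
\begin{itemize}
\item For the lower bound, take $X\times V_4$ with $X$ conditionally non-redundant for $P\mid Q$, and set only the chosen part-$4$ vertex to $1$.
\item For the upper bound, pigeonhole a part-$4$ vertex and observe that its link is a conditionally non-redundant instance of $P\mid Q$.
\end{itemize}

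That conditional problem is the \emph{graph} Ruzsa--Szemer\'edi problem.
\begin{itemize}
\item Lower bound: add one variable $z_i$ per induced matching $M_i$ of a Behrend-type RS graph, with clauses $(u,v,z_i)$ for $(u,v)\in M_i$. Set $z_i\mapsto 0$ and all other $z_j\mapsto 1$, and use the induced-matching property to satisfy every other clause.
\item Upper bound: after renaming, each slice $E[c]$ is a matching, and $V(E[c])$ induces at most two edges of $E[c']$ for $c'\neq c$. Hence the auxiliary tripartite graph has $O(n^2)$ triangles. Fox's triangle removal lemma, together with a convexity count of how many canonical triangles each deleted edge can destroy, gives $n^2/2^{\Omega(\log^* n)}$.
\end{itemize}
Multiplying both conditional bounds by $n$ yields the stated bounds for $\NRD(R_{299},n)$.
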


The proof proceeds by showing that the non-redundancy of $R_{299}$ is closely linked to the optimal answer to the Rusza-Szemeredi problem of packing induced matchings in a graph. Such a method was used in~\cite{bessiere2020Chain} to upper bound the non-redundancy of a predicate $\mathsf{BCK} \subseteq \{0,1,2\}^3$, but this upper bound approach was later established to not be tight~\cite{brakensiek2025redundancy,brakensiek2025Richness}. In contrast, we identify a predicate in which the Rusza-Szemeredi analysis is (essentially) tight. See \Cref{subsec:299} and \Cref{thm:Pred299} for further details.

For $R_{181}$, we could not get a an improved bound on its non-redundancy. However, show $R_{181}$ is closely related to a \emph{symmetric} Boolean predicate of arity 5 which also remains unresolved by known methods, see \Cref{subsec:181} for further discussion.

\subsection{Outline} 
In \Cref{sec:prelim}, we formally define non-redundancy as well as give many known properties of non-redundancy (and related notions such as conditional non-redundancy). In \Cref{sec:experiment}, we describe in detail the procedure we used to systematically classify the 400 Boolean predicates of arity 4. In \Cref{sec:exceptions}, we discuss in detail the three exceptional predicates. In \Cref{sec:conclusion}, we conclude the paper with some thoughts and open questions. In \Cref{app:results}, we provide a table with the full 400-predicate classification.

\section{Preliminaries}\label{sec:prelim}

Our notation is adapted from Brakensiek and Guruswami~\cite{brakensiek2025redundancy}. We define a \emph{predicate} (or \emph{relation}) to be a subset $P \subseteq D^r$, where $D$ is the \emph{domain} of the predicate and $r$ is the \emph{arity} of the predicate. In most situations, we assume that $D = \{0,1\}$---that is, $P$ is a \emph{Boolean} predicate. We say that a predicate is \emph{trivial} if $P = \emptyset$ or $P = D^r$, otherwise the predicate is \emph{non-trivial}.

Given an arity $r$ predicate $P$, an \emph{instance} of $\CSP(P)$ is an $r$-uniform hypergraph $H := (V, E)$ where $V$ is the set of \emph{variables} of the instance and $E \subseteq V^r$ is the \emph{constraints}. In general, a constraint may use the same variable multiple times, however we usually assume that the instance is $r$-partite that the variables $V$ have a partitioning $V = V_1 \sqcup V_2 \sqcup \cdots \sqcup V_r$ such that $E \subseteq V_1 \times V_2 \times \cdots \times V_r$. We can even assume that $|V_1| = \cdots = |V_r|$ (assuming $|V|$ is a multiple of $r$). Such assumptions only affects quantities like the non-redundancy of the predicate up to a factor of $O_r(1)$.

An \emph{assignment} to $H$ is a map $\psi: V \to D$, where $D$ is the domain of the predicate $P$. A constraint $(x_1, \hdots, x_r) \in E$ is $P$-\emph{satisfied} (or just \emph{satisfied}) by $\psi$ if $(\psi(x_1), \hdots, \psi(x_r)) \in P$. Otherwise, the constraint is $P$-\emph{unsatisfied} (or just \emph{unsatisfied}). We now define the key notion of non-redundancy.

\begin{definition}
An instance $H := (V, E)$ is a \emph{non-redundant} instance of $\CSP(P)$ if for every $e \in E$ there is an assignment $\psi_e : V \to D$ such that $e' \in E$ is satisfied by $\psi_e$ if and only if $e \neq e'$. We let $\NRD(P, n)$ denote the maximum value of $|E|$ among all non-redundant instances $(V, E)$ of $\CSP(P)$ with $|V| = n$.
\end{definition}

As an example, if we let $\OR_r := \{0,1\}^r \setminus \{0^r\}$, then $\NRD(\OR_r, n) = \binom{n}{r}$ \cite{FK17,carbonnel2022Redundancy,khanna2025efficient} as the complete $r$-uniform hypergraph on $n$ vertices is non-redundant: for each $S \in \binom{[n]}{r}$, assign $\psi_S(v) = \one[v \not\in S]$.

\subsection{Conditional Non-redundancy}

Another crucial notion formally defined by Brakensiek and Guruswami~\cite{brakensiek2025redundancy} (although implicit in the work of Bessiere, Carbonnel, and Katsirelos~\cite{bessiere2020Chain}) is the notion of \emph{conditional} non-redundancy.

\begin{definition}
Given predicates $P \subsetneq Q \subseteq D^r$, we say that an $r$-partite hypergraph $H := (V, E)$ is a \emph{conditionally non-redundant} instance of $\CSP(P \mid Q)$ if for every $e \in E$ there is an assignment $\psi_e : V \to D$ (called a witness) such that $e' \in E$ is $P$-satisfied by $\psi_e$ if and only if $e \neq e'$ and further $e$ is $Q \setminus P$-satisfied by $\psi_e$. 
We let $\NRD(P \mid Q, n)$ denote the maximum value of $|E|$ among all conditionally non-redundant instances $(V, E)$ of $\CSP(P \mid Q)$ with $|V| = n$. Note that $\NRD(P \mid Q, n) \le \NRD(P, n)$, with equality when $Q = D^r$.
\end{definition}

We make use of the following fact about conditional non-redundancy.

\begin{proposition}[Triangle Inequality \cite{brakensiek2025redundancy}]\label{prop:NRD-facts}
Let $P \subseteq Q \subseteq R \subseteq D^r$ be predicates, and let $n$ be a positive integer. Then,
$\NRD(P \mid R, n) \le \NRD(P \mid Q, n) + \NRD(Q \mid R, n).$
\end{proposition}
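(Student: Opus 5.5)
The plan is to take a maximum conditionally non-redundant instance $H=(V,E)$ of $\CSP(P\mid R)$ with $|V|=n$, and to split its edge set according to what the witness does on its own edge. For each $e\in E$, fix a witness $\psi_e$. By definition, $\psi_e$ $P$-satisfies every $e'\neq e$, while $e$ itself lands in $R\setminus P$. Since $P\subseteq Q\subseteq R$, we have $R\setminus P=(Q\setminus P)\sqcup(R\setminus Q)$. So we partition $E=E_1\sqcup E_2$, where $E_1$ consists of those $e$ with $\psi_e(e)\in Q\setminus P$ and $E_2$ of those with $\psi_e(e)\in R\setminus Q$. Since $|E|=|E_1|+|E_2|$, it suffices to show $|E_1|\le\NRD(P\mid Q,n)$ and $|E_2|\le\NRD(Q\mid R,n)$.

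For $E_1$, we consider the sub-hypergraph $H_1=(V,E_1)$, which is still $r$-partite on the same $n$ vertices. For $e\in E_1$ we reuse the witness $\psi_e$. Every $e'\in E_1\setminus\{e\}$ is $P$-satisfied, because it was $P$-satisfied in $H$. The edge $e$ is not $P$-satisfied and is $(Q\setminus P)$-satisfied. Hence $H_1$ is a conditionally non-redundant instance of $\CSP(P\mid Q)$, giving $|E_1|\le \NRD(P\mid Q,n)$.

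For $E_2$, we take $H_2=(V,E_2)$ and again reuse $\psi_e$ for each $e\in E_2$. Every $e'\in E_2\setminus\{e\}$ is $P$-satisfied, hence $Q$-satisfied since $P\subseteq Q$. Meanwhile $e$ maps into $R\setminus Q$, so it is not $Q$-satisfied but is $(R\setminus Q)$-satisfied. Thus $H_2$ is conditionally non-redundant for $\CSP(Q\mid R)$, giving $|E_2|\le\NRD(Q\mid R,n)$.

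Summing the two bounds gives the claim. The only point needing care is the degenerate case where $P=Q$ or $Q=R$, since the definition requires strict containment. In that case the corresponding part $E_1$ or $E_2$ is empty, because $Q\setminus P$ or $R\setminus Q$ is empty, and we interpret the corresponding $\NRD$ as $0$. With that convention, nothing else is an obstacle: the argument is purely a partition-and-restrict observation, and the only thing to check is that restricting to a sub-hypergraph preserves the witness conditions, which holds because they are conditions on individual edges.
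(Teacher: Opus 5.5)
Your proof is correct. You partition the edges by whether the fixed witness sends its own edge into $Q\setminus P$ or into $R\setminus Q$, then restrict to each part and reuse the same witnesses; this paper does not reprove the proposition but cites Brakensiek and Guruswami, and your partition-and-restrict argument is the standard one (presumably theirs), with the convention $\NRD=0$ for the degenerate cases $P=Q$ or $Q=R$ being consistent since no edge can then admit a valid witness.
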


We also make use of the following conditional non-redundancy trick due to Brakensiek et al.~\cite{brakensiek2025Richness}, where in some circumstances we can reduce the non-redundancy of a prediate of arity $r$ to a conditional NRD problem of arity $r-1$. We need the trick in more generality than their use case, so we give a full proof.

\begin{proposition}\label{prop:Magnus-trick}
Assume $\{0,1\} \subseteq D$. Let $R \subseteq D^r$ be a predicate such that $R = (Q \times \{0\}) \cup (P \times \{1\})$ with $P \subseteq Q \subseteq D^{r-1}$. Then,
\[
    \NRD(P \mid Q, n / 2) \cdot \Omega(n) \le \NRD(R, n) \le \NRD(P \mid Q, n ) \cdot O(n) + \NRD(Q \times \{0,1\}).
\]
\end{proposition}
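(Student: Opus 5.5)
We prove the two inequalities separately. Throughout we may assume the instances are $r$-partite, with the last coordinate ranging over a part $V_r$; by the preliminaries this costs only an $O_r(1)$ factor.

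\emph{Lower bound.} Let $(V',E')$ be a conditionally non-redundant instance of $\CSP(P \mid Q)$ on $n/2$ variables, with witnesses $\psi_e$. Add $n/2$ fresh variables $y_1,\dots,y_{n/2}$ and take as constraints all $(e,y_j)$ with $e \in E'$ and $j \in [n/2]$. This gives $|E'|\cdot n/2$ constraints on $n$ variables.

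For the constraint $(e,y_j)$, use the witness that equals $\psi_e$ on $V'$ and sets $y_j=1$ and $y_k=0$ for $k\ne j$. We check each type of constraint:
\begin{itemize}
\item A constraint $(e',y_k)$ with $k\neq j$ has last coordinate $0$. It is $R$-satisfied iff $\psi_e(e')\in Q$. This holds for $e'\ne e$ because $\psi_e(e') \in P\subseteq Q$, and for $e'=e$ because $\psi_e(e)\in Q\setminus P$.
\item A constraint $(e',y_j)$ with $e'\neq e$ has last coordinate $1$ and $\psi_e(e')\in P$, so it is satisfied.
\item The constraint $(e,y_j)$ has last coordinate $1$ but $\psi_e(e)\notin P$, so it is the unique unsatisfied constraint.
\end{itemize}
Hence the instance is non-redundant, which gives $\NRD(R,n)\ge \NRD(P\mid Q,n/2)\cdot \Omega(n)$.

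\emph{Upper bound.} Take a non-redundant instance $H=(V,E)$ of $\CSP(R)$. Fix for each constraint $(e,v)$, with $v\in V_r$, a witness $\psi=\psi_{(e,v)}$. We split $E$ into two classes.

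\emph{Class 1.} These are the constraints with $\psi(v)=1$ and $\psi(e)\in Q$. Fix $v$ and let $E_v=\{e : (e,v)\in E \text{ and } (e,v) \text{ is in Class 1}\}$. For $e' \in E_v$ with $e'\ne e$, the constraint $(e',v)$ is satisfied and has last coordinate $1$, so $\psi(e')\in P$. On the other hand, $(e,v)$ is unsatisfied with last coordinate $1$, so $\psi(e)\in Q\setminus P$. Thus $E_v$, restricted to the first $r-1$ coordinates, is a conditionally non-redundant instance of $\CSP(P\mid Q)$ on at most $n$ variables. Summing over the at most $n$ choices of $v$, Class 1 has size at most $n\cdot \NRD(P\mid Q,n)$.

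\emph{Class 2.} These are all remaining constraints: those with $\psi(e)\notin Q$, or with $\psi(v)\notin\{0,1\}$. (If $\psi(v)=0$, unsatisfiedness already forces $\psi(e)\notin Q$.) Every such $(e,v)$ fails $Q\times\{0,1\}$ under its witness. Every other constraint in $E$ is $R$-satisfied by the witness, and $R\subseteq Q\times\{0,1\}$, so it is $(Q\times\{0,1\})$-satisfied. Therefore Class 2 is a non-redundant instance of $\CSP(Q\times\{0,1\})$, of size at most $\NRD(Q\times\{0,1\},n)$.

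Adding the two classes gives the upper bound. The main point needing care is this case split. A witness with $\psi(v)=1$ need not place $\psi(e)$ in $Q$, which is what the conditional notion requires. The fix is to route the bad witnesses, along with non-Boolean values of $\psi(v)$ when $|D|>2$, into the $Q\times\{0,1\}$ term. An alternative would be the triangle inequality (\Cref{prop:NRD-facts}), but the direct split matches the stated bound exactly.
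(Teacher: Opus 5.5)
Your proof is correct and is essentially the paper's argument. The lower-bound construction (the instance $E'\times\{y_1,\dots,y_{n/2}\}$ with $y_j=1$ and $y_k=0$ for $k\neq j$) is identical. Your Class 1/Class 2 split is the triangle inequality $\NRD(R,n)\le \NRD(R\mid Q\times\{0,1\},n)+\NRD(Q\times\{0,1\},n)$ that the paper invokes, written out directly, followed by the same grouping of constraints by last coordinate; you sum over $v$ where the paper pigeonholes onto one popular $v$. Your version also correctly records $\psi(v)=1$ and $\psi(e')\in P$ for Class 1 witnesses, which is what the paper intends where its text mistakenly writes $(Q\setminus P)\times\{0\}$ and $\psi_f(x)=0$.
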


\begin{proof}
We first prove the lower bound on $\NRD(R, n)$. Assume $n$ is even, and let $X \subseteq [n/2]^3$ be a conditionally non-redundant instance of $\CSP(P \mid Q)$, with witnessing assignments $\psi_e : [n/2] \to D$ for all $e \in X$. We claim that $Y := X \times \{n/2+1, \hdots, n\}$ is a non-redundant instance of $\CSP(R)$. To see why, fix a clause $e := (x_1,x_2,x_3) \in X$ and point $x_4 \in \{n/2+1,\hdots, n\}$. And consider $f := (x_1,x_2,x_3,x_4) \in Y$. Our witness $\psi_f : [n] \to D$ is defined by $\psi_f\upharpoonright [n/2] = \psi_e$, $\psi_f(x_4) = 1$ and $\psi_f(y) = 0$ for all $y \in \{n/2+1, \hdots, n\} \setminus \{y\}$. Since $R = (P \times \{1\}) \cup (Q \times \{0\})$, we have that $\psi_f$ satisfies all clauses of $Y$ except $f$.

For the upper bound, first note that by \Cref{prop:NRD-facts}, we have that $\NRD(R, n) \le \NRD(R \mid Q \times \{0,1\}, n) + \NRD(Q \times \{0,1\}, n)$. Thus, it suffices to prove that $\NRD(R \mid Q \times \{0,1\}, n) \le \NRD(P \mid Q, n ) \cdot O(n)$. To see why, let $Y \subseteq [n]^4$ be a non-redundant instance of $\CSP(R \mid Q \times \{0,1\})$, and let $Y' \subseteq Y$ be a sub-instance such that $|Y'| \ge |Y|/n$ and all clauses $e \in Y'$ have the same last coordinate $x \in [n]$. Let $Z := \{(x_1,x_2,x_3) : (x_1,x_2,x_3,x) \in Y\}$, so $|Z| = |Y'| \ge |Y|/n$. We claim that $Z$ is a conditionally non-redundant instance of $\CSP(P \mid Q)$, proving $|Y| \le |Z| n \le \NRD(P \mid Q, n) \cdot n$, as desired.

For each $f := (x_1,x_2,x_3,x) \in Y$, we have a witness $\psi_f : [n] \to D$ such that $\psi_f(f) \in (Q \setminus P) \times \{0\}$, so $\psi_f(x) = 0$. Thus, the condition that $\psi_f(f') \in R$ for all other $f' \in Y$ implies $\psi_f(f') \in P \times \{0\}$. Therefore, if we use $\psi_f$ as the witness for $(x_1,x_2,x_3) \in Z$, we have that $\psi_f(e) \in P$ unless $e = (x_1,x_2,x_3)$. Thus, $Z$ is a conditionally non-redundant instance of $\CSP(P \mid Q)$.
\end{proof}

\subsection{Balanced Predicates and Polynomial Representations}\label{subsec:poly}

In general, the best known\footnote{It is a major open question to establish if a Mal'tsev embedding is indeed the only reason a predicate can have linear (or even near-linear) non-redundancy~\cite{chen2020BestCase,bessiere2020Chain,lagerkvist2020Sparsification,carbonnel2022Redundancy,brakensiek2025redundancy,brakensiek2025Richness}.} method for testing for establishing that $\NRD(P, n) = O(n)$ for some predicate $P \subseteq D^r$ is checking if $P$ has what is known as a \emph{Malt'sev embedding}~\cite{chen2020BestCase,lagerkvist2020Sparsification,bessiere2020Chain}. However, due to a recent result of Brakensiek et al.~\cite{brakensiek2025Richness}, for Boolean predicates $P$, such criteria is equivalent to the much simpler criteria of checking whether $P$ is \emph{balanced} (e.g., \cite{chen2020BestCase}).

\begin{definition}[e.g., \cite{chen2020BestCase}]\label{def:balanced}
A Boolean predicate $P \subseteq \{0,1\}^r$ is \emph{balanced} if for all odd integers $m \ge 1$ and all $t_1, \hdots, t_m \in P$, we have that
\[
    t_1 - t_2 + t_3 - \cdots + t_m \in \{0,1\}^r \implies t_1 - t_2 + t_3 - \cdots + t_m \in P,
\]
where addition and subtraction is done over $\mathbb Z^r$.
\end{definition}

Given a Boolean predicate $P$, determining whether $P$ is balanced can be done using a lattice-based algorithm (e.g., \cite{chen2020BestCase,khanna2025efficient}), see \Cref{subsec:lattice} for more details. However, \emph{verifying} whether $P$ is balanced can be done much more efficiently using a linear polynomial certificate. In particular, Chen, Jansen, and Pieterse (Theorems 4.10 and 4.11 of \cite{chen2020BestCase}) show that $P \in \{0,1\}^r$ is balanced if and only if there exists integers $c_0, c_1, \hdots, c_r, q$ such that
\[
    (x_1, \hdots, x_r) \in P \iff c_1x_1 + \cdots + c_rx_r \equiv c_0 \mod q.
\]
This is equivalent to a Mal'tsev embedding of $P$ in $\mathbb Z/q\mathbb Z$ (see Theorem 7.6 of \cite{chen2020BestCase}).

More generally, a method of establishing $\NRD(P, n) = O(n^d)$ for some $d \ge 1$ is to find a modulus $q$ and coefficients $\{c_S \in \mathbb Z : S \subseteq [r], |S| \le d\}$ such that
\[
    (x_1, \hdots, x_r) \in P \iff \sum_{\substack{S \subseteq [r]\\|S| \le d}} c_S \prod_{i \in S} x_i \equiv 0 \mod q.
\]
See \cite{jansen2019optimal,chen2020BestCase,lagerkvist2020Sparsification} for a proof of this fact\footnote{To be truly pedantic, the notion of sparsification considered in \cite{jansen2019optimal,chen2020BestCase,lagerkvist2020Sparsification} is not formally equivalent to non-redundancy; however as noted by (e.g., \cite{brakensiek2025Richness}) the proof methods extend to the non-redundancy setting with no modification.} and further discussion.

\section{Experiment Methodology}\label{sec:experiment}

The primary contribution of this paper is a comprehensive classification of the non-redundancy of Boolean predicates of arity 4. In general, the techniques build on the previous classifications of Boolean predicates of arity 3~\cite{chen2020BestCase,khanna2025efficient}; however due to the shear number of predicates of arity 4 (400 total), our entire verification pipeline needs to be automated. In this section, we describe the high-level technical details of implementing this classification algorithm. The full classification is available in \Cref{app:results} with sufficient detail that hand-verification is possible. At the highest level, our classification algorithm (implemented\footnote{A copy of our code is available at \url{https://github.com/jbrakensiek/Arity-4-NRD}. In terms of verifying correctness, \Cref{app:results} includes sufficient details to make that possible by hand.} in Python and SageMath~\cite{sagemath,passagemath}) consists of four components.

\begin{enumerate}
\item Enumeration of Boolean predicates of arity 4 with symmetry-breaking.
\item Calculation of NRD lower bounds using OR projections.
\item Calculation of NRD upper bounds using lattice-based methods.
\item Certifying NRD upper bound by extracting a polynomial equation.
\end{enumerate}

We now discuss the primary technical ideas behind each of these steps.

\subsection{Enumeration of Boolean Predicates of Arity 4}

Enumerating all Boolean predicates of arity 4 is the most straightforward (and standard) component of our algorithm. Given that there are only $2^{2^4} = 65536$ possible predicates, a brute-force enumeration is practical. However, it is easy to observe that the non-redundancy of a predicate $P \subseteq \{0,1\}^4$ is invariant under swaps of the coordinates as well as bit-flipping individual coordinates. For example, if $P = \{1000,0100\}$, bit-flipping the first coordinate and swapping the middle two coordinates yields the predicate $P' = \{0000,1010\}$ with the same non-redundancy. Thus, we can substantially reduce the number of predicates by picking the lexicographically first predicate in each equivalence class.

Although we implemented our symmetry-breaking entirely by hand, the total number of distinct nontrivial Boolean predicates we found, 400, precisely matches previous enumerations of such predicates in the literature by Hast~\cite{hast2005beating} in the context of the Max CSP problem and Austrin, H{\aa}stad and Martinsson~\cite{austrin2026usefulness} in the context of the Promise CSP problem. These 400 predicates are enumerated from 0 to 399 in \Cref{app:results}. For example, in our encoding of the predicates, the lexicographically earliest representation of $P= \{1000,0100\}$ is $P'' = \{0000,0011\}$ which is predicate number 3 in the table.

\subsection{NRD Lower Bound: OR Projections}

A well-known non-redundancy bound is that the predicate $\OR_k := \{0,1\}^k \setminus \{0^k\}$ satisfies $\NRD(\OR_k, n) = \Theta_k(n^k)$~\cite{dell2014Satisfiability,FK17,BZ20,chen2020BestCase,lagerkvist2020Sparsification,carbonnel2022Redundancy,khanna2025efficient}. Such an NRD lower bound can be transferred to an arbitrary Boolean predicate $P \subseteq \{0,1\}^r$ using a suitable gadget reduction. Following the notation of Khanna, Putterman, and Sudan~\cite{khanna2025efficient}, we say that $\OR_k$ is a \emph{projection} of $P \subseteq \{0,1\}^r$ if there exists literals $\ell_1, \hdots, \ell_r \in \{0, 1, x_1, \hdots, x_k, \bar{x}_1, \hdots, \bar{x}_k\}$ such that 
\begin{align}
    \OR_k(x_1, \hdots, x_k) = P(\ell_1, \hdots, \ell_r),\label{eq:or-projection}
\end{align}
where we interpret the predicates as Boolean functions $\OR_k : \{0,1\}^k \to \{0,1\}$ and $P : \{0,1\}^r \to \{0,1\}$ in the standard manner. In this case, we can deduce that $\NRD(P, n) = \Omega(\NRD(\OR_k, n)) = \Omega_k(n^k)$. These ``OR projections'' are a special case of a much more general family of methods known as functionally guarded primitive positive (fgpp) reductions~\cite{carbonnel2022Redundancy,brakensiek2025Richness}.

In our setting, we have $r = 4$ and $k \in \{1, 2, 3, 4\}$. Thus, for our 400 predicates, we can efficiently enumerate over all choices of literals $\ell_1, \hdots, \ell_4$ and check if \Cref{eq:or-projection} is satisfied. In \Cref{app:results}, we provide an explicit choice of literals certifying to our claimed lower bound; note that such a certificate need not be unique in general.

\subsection{NRD Upper Bound: Lattice Methods}\label{subsec:lattice}

Next, we discuss our algorithmic techniques for upper-bounding the non-redundancy of a given predicate $P \subseteq \{0,1\}^4$. As a warm up, we discuss testing whether $\NRD(P, n) = O(n)$. In practice, this balanced condition is easy to check using a Lattice algorithm. First, consider map $e_1 : \{0,1\}^r \to \Z^{r+1}$ defined by $e_1(x_1, \hdots, x_r) = (1, x_1, \hdots, x_r).$
We can then define a lattice $\Lambda_{P,1}$ corresponding to a predicate $P \subseteq \{0,1\}^r$ as follows.
\[
    \Lambda_{P,1} := \Bigl\{\sum_{t \in P} z_t e_1(t) \mid \forall t \in P, z_t \in \mathbb Z\Bigr\}.
\]
We can then observe the following fact (implicit in previous works~\cite{chen2020BestCase,lagerkvist2020Sparsification,khanna2025efficient,brakensiek2025Richness}).

\begin{proposition}\label{prop:lattice}
    $P$ is balanced iff for all $t \in \{0,1\}^r$, $e_1(t) \in \Lambda_{P,1}$ implies $t \in P$.
\end{proposition}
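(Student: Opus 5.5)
The plan is to unpack what membership in $\Lambda_{P,1}$ means and match it against \Cref{def:balanced}. A vector $e_1(t)=(1,t)$ lies in $\Lambda_{P,1}$ exactly when there are integers $z_s$ ($s\in P$) with $\sum_s z_s = 1$ and $\sum_s z_s s = t$. In other words, $t$ is an integer affine combination of elements of $P$ whose coefficients sum to $1$. So the proposition reduces to showing that alternating sums of odd length and affine integer combinations with coefficient sum $1$ describe the same set of points of $\mathbb Z^r$.

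\emph{Forward direction.} Assume $P$ is balanced, and let $t\in\{0,1\}^r$ with $e_1(t)\in\Lambda_{P,1}$. Take the integers $z_s$ above. Write each $z_s$ as a signed count of copies of $s$: list $s$ with sign $+$ exactly $\max(z_s,0)$ times and with sign $-$ exactly $\max(-z_s,0)$ times. Let $a$ be the total number of $+$ entries and $b$ the total number of $-$ entries. Then $a-b=\sum_s z_s = 1$, so $a=b+1$.

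Interleave these entries as $t_1 - t_2 + t_3 - \cdots + t_m$ with $m=2b+1$, placing the $+$ entries at odd positions and the $-$ entries at even positions. This is possible because $a=b+1$. The resulting alternating sum equals $\sum_s z_s s = t\in\{0,1\}^r$. Balancedness then gives $t\in P$.

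\emph{Converse.} Assume the lattice condition holds. Let $t_1,\dots,t_m\in P$ with $m$ odd, and suppose $t:=t_1-t_2+\cdots+t_m\in\{0,1\}^r$. Then $e_1(t)=\sum_i (-1)^{i+1}e_1(t_i)$, because the first coordinates contribute $(m+1)/2-(m-1)/2=1$. Hence $e_1(t)\in\Lambda_{P,1}$, and the hypothesis gives $t\in P$, so $P$ is balanced.

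There is no serious obstacle here. The only step needing care is the rearrangement in the forward direction, namely checking that $a=b+1$ so that the signed multiset can be ordered into an odd-length alternating sum. Repeated entries cause no trouble, since \Cref{def:balanced} allows the $t_i$ to repeat.
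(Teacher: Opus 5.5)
Your proposal is correct and matches the paper's proof: you rewrite an integer combination with coefficient sum $1$ as an odd-length alternating sum, with positive coefficients at odd positions and negative ones at even positions. For the other direction you note that the first coordinates of $e_1(t_1)-e_1(t_2)+\cdots+e_1(t_m)$ sum to $1$. The paper phrases that converse as a contrapositive, and you make explicit the count $a=b+1$ that the paper only asserts by saying $m=\sum_t |z_t|$ is odd.
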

\begin{proof}
For all $t' \in \{0,1\}^r$, The condition $e_1(t') \in \Lambda_{P,1}$ implies there exists integers $z^{t'}_t \in \mathbb Z$ such that $e_1(t') = \sum_{t \in P} z^{t'}_t e_1(t).$
Since the first coordinate of $e_1(t')$ is always $1$, we have that $\sum_{t \in P} z^{t'}_t = 1$. Let $m = \sum_{t \in P} |z^{t'}_t|$, which is odd. Then, there exists a choice of $t_1, \hdots, t_m \in P$, such that each $t \in P$ appears exactly $|z^{t'}_t|$ times further if $z^{t'}_t \ge 1$, then $t$ appears only in odd positions and if $z^{t'}_t \le -1$, then $t$ appears only in even positions. In other words, we have that $e_1(t') = e_1(t_1) - e_1(t_2) + e_1(t_3) - \cdots + e_1(t_m).$
Thus in particular, $t' = t_1 - t_2 + t_3 - \cdots + t_m$. Therefore, if $P$ is balanced, we have that $t' \in P$, so $e_1(t') \in \Lambda_P$ implies $t' \in P$.

Conversely, if $P$ is not balanced, then there are $t_1, \hdots, t_m \in P$ such that $t' := t_1 - t_2 + \cdots + t_m \in \{0,1\}^r \setminus P$. Then, it is clear that $e_1(t') = e_1(t_1) - e_1(t_2) + \cdots + e_1(t_m) \in \Lambda_P$. Thus, $e_1(t') \in \Lambda_P$ does not imply $t' \in P$.
\end{proof}

Using \Cref{prop:lattice}, we have a straightforward algorithm for checking if our predicate $P$ is balanced. We use the \textsf{IntegralLattice} library of SageMath~\cite{sagemath,passagemath} to build the lattice $\Lambda_P$. More precisely, we build a matrix $M$ whose rows are $e_1(t)$ for all $t \in P$. We compute the Hermite Normal Form of this matrix and extract the nonzero rows to get the basis $B$ of $\Lambda_P$. Using $B$, we build an \textsf{IntegralLattice} library which support membership queries. Thus, we can test for $t \in \{0,1\}^r$ whether $e_1(t) \in \Lambda_P$. If the test succeeds iff $t \in P$, then we know that $P$ is balanced and so $\NRD(P, n) = O(n)$.

\subsubsection{Higher Degrees} 
So far, we have discussed a method of checking if $\NRD(P, n) = O(n)$. However, as discussed in \Cref{subsec:poly} we can certify that $\NRD(P, n) = O(n^d)$ using a degree $d$ polynomial.  More precisely, for any $d\ge 1$, let $\binom{r}{\le d} := \sum_{i=0}^d \binom{r}{i}$ and consider the mapping $e_d : \{0,1\}^r \to \{0,1\}^{\binom{r}{\le d}}$ defined by
\[
    e_d(x_1, \hdots, x_r) = \left(\prod_{i \in S} x_i : \forall S \subseteq [r], |S| \le d\right).
\]
Note that $e_1$ coincides with our previous definition. For any $d \ge 1$ and predicate $P \subseteq \{0,1\}^d$, we can then build a lattice $\Lambda_{P,d}$ generated by $\{e_d(t): t \in P\}$. We then have the following analogue of \Cref{prop:lattice}

\begin{proposition}\label{prop:lattice-deg}
If for all $t \in \{0,1\}^r$, $e_d(t) \in \Lambda_{P,d}$ implies $t \in P$, then $\NRD(P, n) = O(n^d)$.
\end{proposition}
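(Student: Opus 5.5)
The plan is to turn the lattice condition into a polynomial certificate of the form in \Cref{subsec:poly}, and then prove the $O(n^d)$ bound for such certificates directly by a linear-algebra (rank) argument.

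\textbf{Step 1: a separating functional.} Let $N := \binom{r}{\le d}$, and let $\Lambda := \Lambda_{P,d} \subseteq \mathbb Z^N$. Choose a basis of $\mathbb Z^N$ adapted to $\Lambda$ via Smith normal form. Then $\mathbb Z^N/\Lambda \cong \bigoplus_i \mathbb Z/q_i\mathbb Z \oplus \mathbb Z^{k}$, and this group embeds into a finite product of cyclic groups.

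Since $\{0,1\}^r$ is finite, only finitely many points $e_d(t)$ with $t \notin P$ need to be separated from $\Lambda$. For each such $t$, the hypothesis gives $e_d(t) \notin \Lambda$. So some coordinate of the quotient map is nonzero on $e_d(t)$; if that coordinate is a free $\mathbb Z$ summand, reduce it modulo a large $q$. Either way we get a homomorphism $\phi_t : \mathbb Z^N \to \mathbb Z/q_t\mathbb Z$ that vanishes on $\Lambda$ and has $\phi_t(e_d(t)) \neq 0$.

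Now take $q := \prod_t q_t$ and combine the $\phi_t$ using a random (or CRT-chosen) integer combination so that no cancellation occurs. This gives integers $c_S$ with
\[
\sum_{|S|\le d} c_S \prod_{i\in S} x_i \equiv 0 \pmod q \iff x \in P .
\]
Alternatively, one can avoid combining altogether by keeping the tuple of functionals $(\phi_t)_t$ and working over the ring $\prod_t \mathbb Z/q_t\mathbb Z$; the argument below is the same.

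\textbf{Step 2: the polynomial method.} Let $p(x) := \sum_{|S|\le d} c_S \prod_{i\in S} x_i$, and fix a non-redundant instance $(V,E)$ with witnesses $\psi_e$. For each constraint $e=(v_1,\dots,v_r)$, define a polynomial $f_e$ on the variables $y_v$, $v\in V$, by $f_e(y) := p(y_{v_1},\dots,y_{v_r})$. After multilinearizing (using $y_v^2 = y_v$ on Boolean points), $f_e$ lies in the $\mathbb Z/q$-module spanned by multilinear monomials of degree at most $d$. That module has $O(n^d)$ generators.

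Evaluating at the witness $\psi_{e'}$ gives $f_e(\psi_{e'}) \equiv 0 \pmod q$ exactly when $e \neq e'$. So the matrix $\bigl(f_e(\psi_{e'})\bigr)_{e,e'}$ is diagonal with nonzero diagonal entries over $\mathbb Z/q$.

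\textbf{Step 3: the main obstacle.} A diagonal matrix with nonzero diagonal should force the $f_e$ to be "independent", so $|E|$ is at most the number of generators. The difficulty is that $\mathbb Z/q$ is not a field when $q$ is composite, so rank arguments need care.

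My approach is to write $q = \prod p_j^{a_j}$. Each diagonal entry is nonzero modulo some prime power $p_j^{a_j}$, so one prime power $p^a$ accounts for at least a $1/(\#\text{prime factors})$ fraction of the constraints. Restrict to those constraints, where the diagonal is nonzero mod $p^a$.

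For prime powers, rank can be measured by the length of modules over $\mathbb Z/p^a$. Equivalently, use the known reduction that a diagonal, nonzero-mod-$p^a$ evaluation pattern forces $|E| \le a \cdot \#\mathrm{monomials}$: peel off the $p$-adic valuation layers and apply the field argument over $\mathbb F_p$ at each layer. This yields $|E| = O_{q,r}(n^d)$. Since $P$ is fixed, $q$ and $a$ are constants, so this gives $\NRD(P,n) = O(n^d)$. This matches the fact already cited in \Cref{subsec:poly} from \cite{jansen2019optimal,chen2020BestCase,lagerkvist2020Sparsification}, which I may invoke directly once Step 1 produces the polynomial.
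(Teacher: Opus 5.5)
Your proof goes through, but only via the alternative you give at the end of Step~1. The single combined equation you propose first need not exist. Any $\phi:\mathbb{Z}^N\to\mathbb{Z}/q\mathbb{Z}$ vanishing on $\Lambda_{P,d}$ factors through $\mathbb{Z}^N/\Lambda_{P,d}$. If the torsion of that group is not cyclic, no choice of coefficients avoids cancellation.

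For example, take $E_3=\{000,011,101,110\}$, $P=E_3\times E_3\subseteq\{0,1\}^6$ and $d=1$. Then $\Lambda_{P,1}=\mathbb{Z}\times L\times L$ with $L=\{u\in\mathbb{Z}^3: u_1+u_2+u_3\equiv 0 \pmod 2\}$, so $\mathbb{Z}^7/\Lambda_{P,1}\cong(\mathbb{Z}/2\mathbb{Z})^2$ and the hypothesis of the proposition holds. The points $e_1(100000)$, $e_1(000100)$, $e_1(100100)$ hit all three nonzero classes, yet every homomorphism $(\mathbb{Z}/2\mathbb{Z})^2\to\mathbb{Z}/q\mathbb{Z}$ kills one of them.

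So delete the random/CRT sentence and run Steps~2--3 with the family of functionals:
\begin{itemize}
\item Sort the constraints by the excluded tuple $u=\psi_e(e)$.
\item Within each such group, sort further by a prime power $p^a\,\|\,q_u$ with $\phi_u(e_d(u))\not\equiv 0 \pmod{p^a}$. There are boundedly many classes in total.
\item Inside a class, set $f_e(y):=\phi_u(e_d(y_{v_1},\dots,y_{v_r}))$.
\end{itemize}
The evaluation matrix is then diagonal with nonzero diagonal over $\mathbb{Z}/p^a\mathbb{Z}$. Your length count finishes the proof: the image of $(\mathbb{Z}/p^a\mathbb{Z})^m$ under evaluation has length at most $am$, but it contains $\bigoplus_e p^{j_e}(\mathbb{Z}/p^a\mathbb{Z})\,\mathbf{1}_e$, whose length is at least the class size.

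With that fix, your route differs from the paper's in the middle step. The paper sets $R:=\Lambda_{P,d}\cap\{0,1\}^{\binom{r}{\le d}}$ and notes that $R$ generates the same lattice as $e_d(P)$. It deduces from Proposition~\ref{prop:lattice} that $R$ is balanced, imports the Chen--Jansen--Pieterse linear-equation representation of balanced relations, pulls it back along $e_d$, and cites the polynomial method.

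You skip the auxiliary relation and the balancedness black box. Instead you read the separating functionals directly off the Smith normal form of $\mathbb{Z}^N/\Lambda_{P,d}$; free versus torsion coordinates are exactly the two cases in the paper's certificate-extraction subsection. You also reprove the prime-power polynomial method rather than citing it. The paper's version is a short reduction to known results. Yours is self-contained and makes explicit that the $O(n^d)$ bound needs only one capturing polynomial per excluded tuple.

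Incidentally, the paper's proof also asserts a single modulus-$q$ equation for $R$. The example above shows that form is not available in general either. This is harmless there, since the cited sparsification results only need one capturing polynomial per excluded tuple.
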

\begin{proof}
Let $R := \Lambda_{P,d} \cap \{0,1\}^{\binom{r}{\le d}}$. Since $e_d(t) \in R$ for all $t \in P$, we have that $\Lambda_{P,d} \subseteq \Lambda_{R,1}$. Since $R \subseteq \Lambda_{P,d}$, we have that $\Lambda_{R,1} \subseteq \Lambda_{P,d}$. Thus, $\Lambda_{R,1} = \Lambda_{P,d}$. Thus, $R = \Lambda_{R,1} \cap \{0,1\}^{\binom{r}{\le d}}$, so $R$ is balanced by \Cref{prop:lattice}. Therefore, there exists a finite $q$ such that and coefficients $c_S \in \mathbb Z/q\mathbb Z$ for all $S \subseteq [r], |S| \le d$ such that
\[
    x \in R \iff \sum_{\substack{S \subseteq [r]\\|S| \le d}} c_S x_S \equiv 0 \mod q.
\]
In particular, for all $t \in \{0,1\}^r$, we have that $e_d(t) \in R$ (that is, $t \in P$) iff
\[
\sum_{\substack{S \subseteq [r]\\|S| \le d}} c_S \prod_{i \in S} t_i \equiv 0 \mod q.
\]
In other words, $P$ is the zeros of a degree $d$ polynomial over $\mathbb Z/q\mathbb Z$. Therefore, by \cite{jansen2019optimal,chen2020BestCase,lagerkvist2020Sparsification}, we have that $\NRD(P, n) = O(n^d)$.
\end{proof}

Using \Cref{prop:lattice-deg}, we may test for arbitrary bounds of the form $O(n^d)$ by testing when \Cref{prop:lattice-deg} holds for all $d \in [r]$. These degree bounds for all 400 predicates appear in \Cref{app:results}.

\subsection{Extracting a Polynomial Certificate}

Unlike the OR projections, the lattice algorithm does not immediately yield a succinct certificate of the claimed non-redundancy upper bound. However, by adapting the methods of Chen, Jansen, and Pieterse~\cite{chen2020BestCase} and Khanna, Putterman, and Sudan~\cite{khanna2025efficient}, we can extract a suitable polynomial for which hand-verification becomes feasible. We concern ourselves with the task of verifying whether $P$ is balanced---the extension of this technique to higher degrees follows with minimal adaptation.

Recall that $\Lambda_{P,1} \subseteq \Z^{r+1}$ is the lattice generated by $\{e_1(t) : t \in P\}$. We let $\Lambda_{P,1} \otimes \Q \subseteq \Q^{r+1}$ denote the vector space over $\mathbb Q$ generated by $\Lambda_{P,1}$. Recall that to verify if $P$ is balanced, we must check for every $t \in \{0,1\}^d \setminus P$ that $e_1(t) \not\in \Lambda_{P,1}$. This can happen for two reasons: either (1) $e_1(t) \not\in \Lambda_{P,1} \otimes \Q$, or (2) $e_1(t) \in \Lambda_{P,1} \otimes \Q$ but $e_1(t)\not\in \Lambda_{P,1}$.

The first case is equivalent to the existence of a vector $y \in \Q^{d+1}$ such that $\langle y, \Lambda_{P,1}\rangle = 0$ but $\langle y, e_1(t))\rangle \neq 0$. By picking a suitable positive integer $N$ such that $Ny$ has integral entries as well as a positive integer $q$ which does not divide $\langle Ny, e_1(t)\rangle$, then the equation $\langle Ny, e_1(t')\rangle \equiv 0 \mod q$ includes all $t' \in P$ but excludes $t' = t$.

In the second case, we can construct a similar map, by looking at the dual lattice
\[
    \Lambda_{P,1}^* := \{y \in \Q^{d+1}  \mid \forall x \in \Lambda_{P,1}, \langle y, x\rangle \in \mathbb Z\},
\]
and finding a map $y \in \Lambda_{P,1}^*$ such that $\langle y, e_1(t)\rangle \not\in \mathbb Z$. If $\langle y, e_1(t)\rangle = p/q$ with $p,q \in \mathbb Z$ and pick positive $N$ such that $Ny$ has integral entries then $Nq \langle y, e_1(t')\rangle \equiv 0 \mod Nq$ for all $t' \in P$ but $Nq \langle y, e_1(t)\rangle \not\equiv 0\mod Nq$.

In both cases, we have found a linear polynomial which has all elements of $P$ as zeros but not $t$. In Khanna, Putterman, and Sudan~\cite{khanna2025efficient}, the authors aggregate these linear polynomials for all $t \in \{0,1\}^r \setminus P$ to get a system\footnote{Equivalently, this system can be represented as a single linear equation over an Abelian group.} of linear equations witnessing $P$ is balanced. However, as predicated by \cite{chen2020BestCase}, we found (with a suitable brute-force search over small moduli), we could find a single linear equation excluding all $t \in \{0,1\}^r \setminus P$ lying in Case (1) as well as a single linear equation excluding all $t \in \{0,1\}^r \setminus P$ lying in Case (2). Given the freedom in Case (1) of choosing the modulus $q$, we could pick the two equations to be over relatively prime moduli, ensuring by the Chinese Remainder Theorem can be combined into a single equation. This extracted linear equation for each of the 400 predicates is stated in \Cref{app:results}.

\section{The Three Exceptional Predicates}\label{sec:exceptions}

As previously mentioned, the algorithmic techniques in \Cref{sec:experiment} prefectly classified the non-redundancy of 397 of the 400 predicates. In this section, we study the three exceptional predicates, namely Predicates 181, 299, and 317, providing tight theoretical analyses of their $\NRD$ for two of them. We start with the analysis of Predicates 317 and 299 which we provide tight bounds for, and conclude with the discussion of Predicate 181, which we leave as an open problem.

\subsection{Predicate 317}\label{subsec:317}

Predicate 317 has satisfying assignments 
\[
R_{317} = \{ 0000, 0001, 0010, 0011, 0100, 0101, 1000, 1001, 1110\}.
\]
From \Cref{app:results}, we see that $x \in \{0,1\}^4$ satisfies $x \in R_{317}$ if and only if $x_{1}x_2+x_1x_3+x_2x_3+x_1x_2x_4 \equiv 0\mod 3.$
Therefore, $\NRD(R_{317}, n) = O(n^3)$. We thus seek to prove that $\NRD(R_{317}, n) = \Omega(n^3)$. First, observe that
\[
R_{317}  = \{000, 001, 010, 100, 111\} \times \{0,1\} - \{1111\}.
\]

Thus, setting $P_{317} = \{000, 001, 010, 100\}, Q_{317} = P_{317} \cup \{111\}$, we have
$R_{317} = (Q_{317} \times \{0\}) \cup (P_{317} \times \{1\})$. 
By \cref{prop:Magnus-trick}, we know that $\NRD(P_{317} \mid Q_{317}, n / 2) \cdot \Omega(n) \le \NRD(R_{317}, n).$

\subsubsection{Lower Bounding Conditional Non-Redundancy}

We now proceed to a \emph{lower bound} on $\NRD(P_{317} \mid Q_{317}, n)$. For this, we use the following constructive property of linear hypergraphs:

\begin{theorem}[See, for instance \cite{deza1978intersection}]\label{thm:LHLB}
	There exist tripartite linear hypergraphs on vertex set $[n] \times [n] \times [n]$ with $\Omega(n^2)$ edges.
\end{theorem}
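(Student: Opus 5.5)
The plan is an explicit algebraic construction over a finite field, together with a direct check of linearity. A tripartite 3-uniform hypergraph $E \subseteq [n]\times[n]\times[n]$ is \emph{linear} if any two distinct edges share at most one vertex. In the tripartite setting this means: for any two coordinates, the values there determine the third. So the goal is a set of triples $(a,b,c)$ in which any two coordinates determine the remaining one. Equivalently, we want a partial Latin square of size $n \times n$ with $\Omega(n^2)$ filled cells.

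First take $n = p$ prime and identify $[n]$ with $\mathbb{F}_p$. Define
\[
E := \{(a,b,a+b) : a,b \in \mathbb{F}_p\}.
\]
This has exactly $p^2 = n^2$ edges. To verify linearity, check the three coordinate pairs:
\begin{itemize}
\item given $a$ and $b$, the third coordinate is forced to be $c = a+b$;
\item given $a$ and $c$, we get $b = c-a$;
\item given $b$ and $c$, we get $a = c-b$.
\end{itemize}
So two distinct edges agree in at most one coordinate. In a tripartite hypergraph, vertices in different parts are distinct, so this is exactly the statement that two distinct edges share at most one vertex.

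The only real obstacle is that a general $n$ need not be prime, and the construction lives on $\mathbb{F}_p$. To handle this, use Bertrand's postulate to pick a prime $p$ with $n/2 < p \le n$. Embed $\mathbb{F}_p$ into $[n]$ in each part and build the hypergraph above. This gives $p^2 \ge n^2/4 = \Omega(n^2)$ edges, and it remains linear on the larger vertex set.

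Alternatively, one can avoid primes entirely by working in $\mathbb{Z}/n\mathbb{Z}$ with $c = a+b \bmod n$. This is the Latin square given by the Cayley table of the cyclic group, and it yields exactly $n^2$ edges. The linearity argument is identical, since subtraction is well defined mod $n$.
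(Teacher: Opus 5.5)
Your proof is correct, and it supplies something the paper does not: the paper gives no argument for this statement and simply cites Deza, Erd\H{o}s, and Frankl. That reference works with general, non-partite uniform set systems with restricted intersections. Extracting a \emph{tripartite} example from it needs an extra, easy step, for instance a random $3$-partition of the vertex set that retains a constant fraction of the triples.

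Your Latin-square construction $E=\{(a,b,a+b \bmod n)\}$ is tripartite from the start and completely explicit. Since any two coordinates determine the third, two distinct edges agree in at most one part, so they share at most one vertex. That is exactly the property $|e\cap e'|\le 1$ used in the proof of \Cref{lem:pred317LB}.

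It also gives exactly $n^2$ edges for every $n$. This is optimal, because $(a,b,c)\mapsto(a,b)$ is injective on any tripartite linear hypergraph. Consequently your Bertrand-postulate version, while correct, is unnecessary once you have the $\mathbb{Z}/n\mathbb{Z}$ version.
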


With this, we show:

\begin{lemma}\label{lem:pred317LB}
	$\NRD(P_{317} \mid Q_{317}, n) = \Omega(n^2)$.
\end{lemma}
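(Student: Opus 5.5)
The plan is to take the conditionally non-redundant instance to be exactly a tripartite linear hypergraph from \Cref{thm:LHLB}. For each edge, the witness will set that edge's three vertices to $1$ and all other variables to $0$. Recall that $P_{317}$ is the set of weight-$\le 1$ triples $\{000,001,010,100\}$ and $Q_{317}\setminus P_{317} = \{111\}$. So we need, for each clause $e$, an assignment that makes $e$ evaluate to $111$ and makes every other clause have Hamming weight at most $1$. Linearity is precisely the property that delivers the second requirement.

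Concretely, apply \Cref{thm:LHLB} with parameter $m = \lfloor n/3 \rfloor$. This gives a tripartite linear $3$-uniform hypergraph $H=(V,E)$ with $V = V_1 \sqcup V_2 \sqcup V_3$, $|V_i| = m$, $E \subseteq V_1\times V_2\times V_3$ and $|E| = \Omega(m^2) = \Omega(n^2)$. Being linear means any two distinct edges share at most one vertex. Since we only have $3m \le n$ variables, the lower bound on $\NRD(P_{317}\mid Q_{317}, n)$ follows, padding with isolated variables if desired.

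For each $e = (a,b,c) \in E$, define $\psi_e(v) = \one[v \in \{a,b,c\}]$. Then $\psi_e(e) = 111 \in Q_{317}\setminus P_{317}$, as required by the conditional definition. Now take any $e' = (a',b',c') \in E$ with $e'\neq e$. The number of ones in $\psi_e(e')$ equals $|\{a',b',c'\}\cap\{a,b,c\}|$, because the instance is tripartite and so each coordinate of $e'$ can only coincide with the corresponding coordinate of $e$. By linearity this is at most $1$, so $\psi_e(e') \in P_{317}$. Hence $H$ is a conditionally non-redundant instance of $\CSP(P_{317}\mid Q_{317})$, and $\NRD(P_{317}\mid Q_{317}, n) = \Omega(n^2)$.

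There is essentially no obstacle beyond invoking \Cref{thm:LHLB}. The only points to double-check are that the definition of conditional non-redundancy is satisfied in both directions (the chosen clause lands in $Q\setminus P$, and every other clause lands in $P$), and that the vertex count rescaling from $[m]^3$ to $n$ variables only costs a constant factor.
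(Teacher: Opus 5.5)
Your proposal is correct and matches the paper's proof essentially verbatim. You take a tripartite linear hypergraph from \Cref{thm:LHLB} and use the indicator witness $\psi_e(v)=\one[v\in e]$; linearity then forces every other clause to have weight at most $1$, so it lies in $P_{317}$ (your explicit rescaling to parts of size $\lfloor n/3\rfloor$ is a harmless detail the paper leaves implicit).
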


\begin{proof}
We consider a tripartite linear hypergraph $H = (V, E)$ as guaranteed by \cref{thm:LHLB}, with $|E| = \Omega(n^2)$. We claim that this hypergraph is conditionally non-redundant, thus yielding the lemma. To see why this hypergraph is conditionally non-redundant, we consider any hyperedge $e \in E$, along with the assignment $\psi_e$ which maps $\psi_e(x) = 1$ if $x \in e$, and otherwise maps variables to $0$. Note then that $\psi_e(e) = 111$, as we desire. It remains only to show that $\psi_e(e') \in \{000, 001, 010, 100\}$ for any $e' \neq e$. This follows because $|e' \cap e| \leq 1$, and so at least two variables in $e'$ are mapped to $0$ by $\psi_e$. This concludes the proof.
\end{proof}

Thus, we are now able to prove the following:

\begin{theorem}\label{thm:Pred317}
	$\NRD(R_{317}, n) = \Theta \left ( n^3 \right )$.
\end{theorem}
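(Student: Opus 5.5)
The upper bound $\NRD(R_{317}, n) = O(n^3)$ has already been established. From \Cref{app:results}, $R_{317}$ is exactly the zero set modulo $3$ of the degree-$3$ polynomial $x_1x_2+x_1x_3+x_2x_3+x_1x_2x_4$. By the polynomial method of \cite{jansen2019optimal,chen2020BestCase,lagerkvist2020Sparsification}, as recalled in \Cref{subsec:poly}, this gives $\NRD(R_{317}, n) = O(n^3)$. It therefore remains to prove the matching lower bound $\Omega(n^3)$.

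For the lower bound, I will combine the decomposition $R_{317} = (Q_{317}\times\{0\}) \cup (P_{317}\times\{1\})$ with the lower-bound half of \Cref{prop:Magnus-trick}, which gives
\[
\NRD(R_{317}, n) \ge \NRD(P_{317}\mid Q_{317}, n/2)\cdot \Omega(n).
\]
Before applying it, I check that the hypothesis $P_{317}\subseteq Q_{317}$ holds, which it does by definition.

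Next, I apply \Cref{lem:pred317LB} with $n/2$ variables to get $\NRD(P_{317}\mid Q_{317}, n/2) = \Omega((n/2)^2) = \Omega(n^2)$. One bookkeeping point needs care. \Cref{thm:LHLB} produces a tripartite hypergraph on $[m]\times[m]\times[m]$, which has $3m$ vertices in total. To fit within $n/2$ variables, I take $m = \lfloor n/6\rfloor$, which still gives $\Omega(n^2)$ edges. If $n$ is not divisible by the relevant constant, I pad with unused variables; this only changes constants.

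Multiplying the two bounds gives $\NRD(R_{317}, n) = \Omega(n^3)$. Together with the upper bound, this yields $\Theta(n^3)$.

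There is no real obstacle, since all the work is in \Cref{lem:pred317LB} and \Cref{prop:Magnus-trick}. The only point to verify is that the witness built in \Cref{prop:Magnus-trick} is valid for this $R_{317}$. The witness sets the fourth coordinate of the chosen clause to $1$, so that clause evaluates to $\psi_e(e)\times\{1\} = 111\times 1 = 1111\notin R_{317}$. Every other clause $f'$ falls into one of two cases:
\begin{itemize}
\item if $f'$ shares the chosen fourth variable, its value lies in $P_{317}\times\{1\}\subseteq R_{317}$;
\item if $f'$ has a different fourth variable, its value lies in $Q_{317}\times\{0\}\subseteq R_{317}$.
\end{itemize}
So the chosen clause is the only violated one, and the construction is non-redundant, which confirms the lower bound.
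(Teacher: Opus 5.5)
Your proof is correct and follows the paper's argument: the $O(n^3)$ upper bound comes from the cubic mod-$3$ representation, and the $\Omega(n^3)$ lower bound comes from plugging \Cref{lem:pred317LB} into the lower-bound half of \Cref{prop:Magnus-trick}. Your extra bookkeeping on the $3m$-vertex count and your re-check of the witness are both accurate, and they spell out details the paper leaves implicit.
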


\begin{proof}
	By \cref{prop:Magnus-trick}, we know that $\NRD(P_{317} \mid Q_{317}, n / 2) \cdot \Omega(n) \le \NRD(R_{317}, n).$ Plugging in \cref{lem:pred317LB} then yields $\NRD(R_{317}, n) = \Omega(n^3)$. Since $R_{317}$ has a cubic polynomial representation, we know that $\NRD(R_{317}, n) = O(n^3)$, completing the proof.
\end{proof}

\subsection{Predicate 299}\label{subsec:299}

Predicate 299 has satisfying assignments
\[
R_{299} = \{ 0000, 0001, 0010, 0011, 0100, 0101, 1000, 1110, 1111\} .
\]

In this section, we show the following theorem:

\begin{theorem}\label{thm:Pred299}
	$\NRD(R_{299}, n) \geq  \frac{n^3}{2^{O(\sqrt{\log(n)})}} $ and $\NRD(R_{299}, n)  \leq \frac{n^3}{2^{\Omega(\log^*(n))}} $.
\end{theorem}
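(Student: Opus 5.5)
The plan is to use the same decomposition as for Predicate 317. Splitting $R_{299}$ by its last coordinate gives
\[
R_{299} = (Q_{299}\times\{0\}) \cup (P_{299}\times\{1\}),\qquad P_{299}=\{000,001,010,111\},\quad Q_{299}=P_{299}\cup\{100\}.
\]
By \Cref{prop:Magnus-trick},
\[
\NRD(P_{299}\mid Q_{299},n/2)\cdot\Omega(n)\le \NRD(R_{299},n)\le \NRD(P_{299}\mid Q_{299},n)\cdot O(n)+\NRD(Q_{299}\times\{0,1\},n).
\]
The predicate $Q_{299}$ consists exactly of the points of Hamming weight $0,1,3$. So $x_1x_2+x_1x_3+x_2x_3\equiv 0 \bmod 3$ is a quadratic representation, and $\NRD(Q_{299}\times\{0,1\},n)=O(n^2)$. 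Hence the theorem reduces to
\[
n^2/2^{O(\sqrt{\log n})}\le\NRD(P_{299}\mid Q_{299},n)\le n^2/2^{\Omega(\log^* n)}.
\]
In a conditionally non-redundant instance $X\subseteq A\times B\times C$, the witness for $e=(a,b,c)$ must send $e$ to $100$ and send every other edge outside $\{100,110,101,011\}$.

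\textbf{Lower bound.} Take a Ruzsa--Szemer\'edi graph on $B\sqcup C$ (Behrend construction) whose edge set is partitioned into $n$ induced matchings $M_a$, $a\in A$, each of size $n/2^{O(\sqrt{\log n})}$. Put $X=\{(a,b,c):(b,c)\in M_a\}$. For $e=(a,b,c)$, let $\psi_e$ be $1$ on $a$ and on $V(M_a)\setminus\{b,c\}$, and $0$ elsewhere.
\begin{itemize}
\item The edge $e$ itself becomes $100$, as required.
\item Any other edge $(a,b',c')$ with $(b',c')\in M_a$ becomes $111$, because $M_a$ is a matching.
\item An edge $(a',b',c')$ with $a'\ne a$ has first bit $0$, so it only fails if it becomes $011$. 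That would require $(b',c')$ to be an edge with both endpoints in $V(M_a)$. Inducedness of $M_a$ forces $(b',c')\in M_a$, which contradicts edge-disjointness of the matchings.
\end{itemize}
So $X$ is conditionally non-redundant, which gives the lower bound.

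\textbf{Upper bound.} First I extract structure from the witnesses. Fix $e=(a,b,c)$ with witness $\psi$, so $\psi(a)=1$ and $\psi(b)=\psi(c)=0$.
\begin{itemize}
\item Any other edge through $a$ must map to $111$. This forbids $(a,b,c')$ and $(a,b',c)$, so each pair $(a,b)$ and each pair $(a,c)$ lies in at most one edge. In particular each $M_a=\{(b,c):(a,b,c)\in X\}$ is a matching.
\item Any edge $(a',b,\cdot)$ or $(a',\cdot,c)$ with $a'\neq a$ must have $\psi(a')=0$.
\end{itemize}
I then form the tripartite graph $G$ whose edges are the pairs $ab$, $ac$, $bc$ over all $(a,b,c)\in X$. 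The goal is to show that $G$ has at most $O(|X|)$ triangles, i.e.\ few triangles beyond the $|X|$ triangles coming from $X$ itself, while $X$ contains $|X|$ triangles that are essentially edge-disjoint. The triangle removal lemma with Fox's bound then gives $|X|\le n^2/2^{\Omega(\log^* n)}$.

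\textbf{Main obstacle.} The difficulty is the $bc$ edges: a pair $(b,c)$ is not a priori in a unique edge, and a spurious triangle $a,b,c$ can arise from three different edges $(a,b,c_1)$, $(a,b_2,c)$, $(a_3,b,c)$. My first attempt is to combine the witnesses of these three edges. The witness of $(a,b,c_1)$ forces $\psi(b_2)=\psi(c)=1$ and $\psi(a_3)=0$; I would look for further consequences that make such configurations rare. Failing that, I would bound the multiplicity $|\{a:(a,b,c)\in X\}|$ on average, or handle pairs of high multiplicity separately (they lie in a bounded-degree structure). After that, I would prune to one $a$ per $bc$ pair, losing only a constant factor, and then apply the removal lemma to the resulting edge-disjoint triangle family.
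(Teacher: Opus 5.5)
Your reduction to $\NRD(P_{299}\mid Q_{299},n)$, the quadratic bound for $Q_{299}\times\{0,1\}$, and your lower bound are all correct. Your Ruzsa--Szemer\'edi construction, with the matching index in the special coordinate, is exactly the paper's; the paper states it for the relabeled $P'_{299}$, where its $\psi(z_i)=0,\ \psi(z_j)=1$ are your $\psi(a)=1,\ \psi(a')=0$. The upper bound, however, has a genuine gap. You never show that $G$ has $O(n^2)$ triangles; this bound, rather than $O(|X|)$, is what Fox's lemma needs. Your last paragraph lists directions without carrying any out. The pruning step is also unjustified as stated: you claim keeping one $a$ per $bc$ pair loses only a constant factor, but you have no control on the multiplicities $\ell_{bc}=|\{a:(a,b,c)\in X\}|$.

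The missing idea is this lemma: for $a\neq a'$, at most two edges of $M_{a'}$ have both endpoints in $V(M_a)$. You were one step from it. Suppose $M_{a'}$ has three such edges; they are pairwise disjoint. Let $(b,c_1)$ be the edge of $M_a$ through the $B$-endpoint $b$ of one of them, and take the witness $\psi$ of $(a,b,c_1)$. As you observed, every other edge through $a$ goes to $111$, so $\psi\equiv 1$ on $V(M_a)\setminus\{b,c_1\}$. Also $\psi(b)=0$ forces $\psi(a')=0$. At most one of the other two edges of $M_{a'}$ meets $\{b,c_1\}$, so the remaining one is mapped to $011\notin P_{299}$, a contradiction.

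This lemma supplies both missing ingredients.
\begin{itemize}
\item A spurious triangle $\{a,b,c\}$ yields some $a_3\neq a$ with $(b,c)\in M_{a_3}$ and $b,c\in V(M_a)$. So there are at most $2n^2$ spurious triangles, and $G$ has at most $3n^2$ triangles.
\item Any common edge of $M_a$ and $M_{a'}$ lies inside $V(M_a)$, so $|M_a\cap M_{a'}|\le 2$ and hence $\sum_{bc}\binom{\ell_{bc}}{2}\le n^2$.
\end{itemize}
The second point is what controls multiplicity, in either of two ways. You can charge each deleted $bc$-edge with the $\ell_{bc}$ canonical triangles it destroys and use convexity, as the paper does. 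Alternatively, since $\sum_{bc}\ell_{bc}^2\le 3n^2$, Cauchy--Schwarz shows there are at least $|X|^2/(3n^2)$ distinct $bc$ pairs. Pruning then loses at most a square, which is harmless against a $2^{\Omega(\log^* n)}$ saving.
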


Towards proving this theorem, observe that we can write 
\[
R_{299} = \{000, 001, 010, 100, 111\} \times \{0, 1\} - \{1001\}.
\]
Now, let us set $P_{299} = \{ 000, 001, 010, 111\}$ and $Q_{299} = P_{299} \cup \{100\}$. Immediately, we can observe that $R_{299} = (Q_{299} \times \{0\}) \cup (P_{299} \times \{1\})$. Thus, by \cref{prop:Magnus-trick}, we know that 
\[
\NRD(P_{299} \mid Q_{299}, n / 2) \cdot \Omega(n) \le \NRD(R_{299}, n) \le \NRD(P_{299} \mid Q_{299}, n ) \cdot O(n) + \NRD(Q_{299} \times \{0,1\}).
\]

Next, we show that $Q_{299} \times \{0,1\}$ has non-redundancy $O(n^2)$:

\begin{claim}\label{clm:QBoundPred299}
	$\NRD(Q_{299} \times \zo, n) = O(n^2)$.
\end{claim}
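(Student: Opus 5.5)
The plan is to exhibit a degree-$2$ polynomial representation of $Q_{299} \times \zo$ and then invoke the polynomial criterion recalled in \Cref{subsec:poly}. That criterion says a predicate cut out by a degree-$d$ polynomial congruence modulo some $q$ has non-redundancy $O(n^d)$; it is the same fact used in \Cref{prop:lattice-deg}, due to \cite{jansen2019optimal,chen2020BestCase,lagerkvist2020Sparsification}. No conditional non-redundancy or \Cref{prop:Magnus-trick} is needed here.

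First observe that $Q_{299} = \{000,001,010,100,111\}$ is exactly the set of $x \in \zo^3$ with Hamming weight in $\{0,1,3\}$, i.e., weight different from $2$. Since $Q_{299} \times \zo$ places no constraint on the fourth coordinate, I would use the symmetric quadratic
\[
    p(x_1,x_2,x_3,x_4) := x_1x_2 + x_1x_3 + x_2x_3 ,
\]
which ignores $x_4$. On a Boolean input whose first three bits have weight $w$, this polynomial equals $\binom{w}{2}$. The values for $w=0,1,2,3$ are $0,0,1,3$. Reducing modulo $3$ gives $0,0,1,0$, so
\[
    (x_1,x_2,x_3,x_4) \in Q_{299} \times \zo \iff x_1x_2 + x_1x_3 + x_2x_3 \equiv 0 \pmod 3 .
\]
By the cited polynomial method with $d = 2$ and $q = 3$, this gives $\NRD(Q_{299} \times \zo, n) = O(n^2)$.

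There is essentially no obstacle; the only care needed is the routine check of the four weight classes. As an independent sanity check, one can also argue directly. The fourth coordinate of each clause is irrelevant to satisfaction, so two clauses with the same first three variables are satisfied by exactly the same assignments; hence a non-redundant instance cannot contain both. This gives $\NRD(Q_{299}\times\zo,n) \le \NRD(Q_{299},n)$, and the same quadratic mod $3$ bounds the latter by $O(n^2)$.
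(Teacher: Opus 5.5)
Your proposal is correct and is the same argument as the paper's: the paper also certifies the bound via the quadratic $x_1x_2+x_1x_3+x_2x_3 \equiv 0 \pmod 3$ (the table entry for predicate 318) and then applies the polynomial method. Your extra direct reduction to $\NRD(Q_{299},n)$, using that clauses sharing their first three variables are satisfied by exactly the same assignments, is valid but not needed.
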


\begin{proof}
    $Q_{299} \times \zo$ corresponds to predicate 318 in \Cref{app:results}. The upper bound of $O(n^2)$ follows from the fact that for any $x \in \{0,1\}^4$, we have that $x \in Q_{299} \times \zo$ if and only if
$x_{1}x_2 + x_1x_3 + x_2x_3 \equiv 0 \mod 3.$\qedhere
\end{proof}

We make an additional observation that simplifies our notation going forward: because $\NRD$ is unchanged when permuting and negating variables in a predicate, we are free to re-define $P'_{299} = \{001, 011, 101, 110\}$ and  $Q'_{299} = P'_{299} \cup \{000\}$, while ensuring that $\NRD(P'_{299} \mid Q'_{299}, n) = \NRD(P_{299} \mid Q_{299}, n)$. Note that here we have simply negated the first bit, and then swapped the first bit with the third bit. Now, our goal is to bound $\NRD(P'_{299} \mid Q'_{299} , n)$.

\subsubsection{Upper Bounding the Conditional Non-Redundancy}

Before bounding the conditional non-redundancy of these predicates, we first recall the statement of Fox's triangle removal lemma \cite{fox2011new}:

\begin{theorem}[Fox's Triangle Removal Lemma, simplified]\label{thm:triangleRemoval}
	Let $G$ be a graph on $n$ vertices with at most $c n^2$ many triangles, where $c$ is an absolute constant. Then, there exists a set of $\frac{n^2}{2^{\Omega_c(\log^*(n))}}$ many edges that can be removed such that $G$ becomes triangle-free.
\end{theorem}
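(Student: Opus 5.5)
Since \Cref{thm:triangleRemoval} is quoted from \cite{fox2011new}, I would not reprove it from scratch. I would reduce it to Fox's quantitative removal lemma: for every $\varepsilon>0$ there is $\delta = \delta(\varepsilon) \ge 1/\mathrm{tower}(C\log(1/\varepsilon))$ such that any $n$-vertex graph with fewer than $\delta n^3$ triangles can be made triangle-free by deleting at most $\varepsilon n^2$ edges.

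\textbf{Step 1: the reduction.} Given $G$ with at most $cn^2$ triangles, I would choose $\varepsilon$ as small as possible subject to $cn^2 < \delta(\varepsilon) n^3$, that is, $\delta(\varepsilon) > c/n$. By the tower bound it suffices that $\mathrm{tower}(C\log(1/\varepsilon)) < n/c$, so $C\log(1/\varepsilon) \le \log^*(n) - O_c(1)$. This gives $\varepsilon = 2^{-\Omega_c(\log^* n)}$, hence an edge set of size $n^2/2^{\Omega_c(\log^*(n))}$ whose removal leaves $G$ triangle-free, exactly as claimed.

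\textbf{Step 2: Fox's lemma.} I would follow the regularity-plus-counting template, replacing Szemer\'edi's regularity lemma by a more economical partition argument. Start from the trivial partition and iteratively refine it. The goal is not full $\varepsilon$-regularity of almost all pairs, but only a weaker property: every remaining edge lies in a pair of parts that is dense and locally regular enough for the triangle counting lemma to apply. Progress is tracked by a mean-entropy (or weighted mean-square) density index rather than the usual energy. This lets each refinement step increase the index by an amount comparable to the current parameter rather than $\varepsilon^{O(1)}$, so only $O(\log(1/\varepsilon))$ refinements occur. Each refinement at most exponentiates the number of parts, which yields a tower of height $O(\log(1/\varepsilon))$.

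\textbf{Step 3: cleaning and counting.} With the final partition in hand, delete all edges inside parts, all edges between irregular pairs, and all edges between pairs of density below $\varepsilon/4$. In total this removes at most $\varepsilon n^2$ edges. If a triangle survives, its three parts are pairwise dense and regular, and the counting lemma gives at least $\mathrm{poly}(\varepsilon)\cdot(n/M)^3$ triangles, where $M$ is the number of parts. Since $M \le \mathrm{tower}(O(\log(1/\varepsilon)))$, this contradicts having fewer than $\delta n^3$ triangles.

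\textbf{Main obstacle.} The hard step is the increment argument in Step 2: designing the potential function so that it controls triangle counts and still guarantees a substantial increase whenever the partition fails to be good. This is exactly Fox's contribution, and I would import his key lemma rather than reprove it. Steps 1 and 3 are routine.
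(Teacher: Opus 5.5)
Your proposal is correct and matches the paper. The paper gives no proof of its own; it simply cites Fox's removal lemma, in which $\delta^{-1}$ is a tower of height $O(\log(1/\varepsilon))$. Your Step 1 is exactly the implicit specialization: take $\delta \approx c/n$, giving $\varepsilon = 2^{-\Omega_c(\log^*(n))}$, and handle small $n$ trivially. Steps 2--3 only sketch Fox's argument and are not needed beyond the citation.
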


\begin{lemma}\label{lem:299UB}
	$\NRD(P'_{299} \mid Q'_{299}, n) \leq   \frac{n^2}{2^{\Omega(\log^*(n))}}$.
\end{lemma}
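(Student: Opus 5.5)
The plan is to translate conditional non-redundancy of $\CSP(P'_{299} \mid Q'_{299})$ into a Ruzsa--Szemer\'edi-type configuration and then invoke \Cref{thm:triangleRemoval}.

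\textbf{Reformulation.} First observe that $P'_{299} = \{001,011,101,110\}$ is exactly the set of $x \in \zo^3$ with $x_3 = 1 - x_1x_2$. Hence a tripartite instance $E \subseteq V_1 \times V_2 \times V_3$ is conditionally non-redundant iff for each $e=(a,b,c)\in E$ there is $\psi_e$ with $\psi_e(a)=\psi_e(b)=\psi_e(c)=0$ and $\psi_e(c') = 1-\psi_e(a')\psi_e(b')$ for every other $(a',b',c')\in E$.

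\textbf{Step 1: uniqueness of pairs.} I would show that each pair $(a,c)\in V_1\times V_3$ lies in at most one clause, and likewise each pair $(b,c)\in V_2\times V_3$. Suppose $e=(a,b,c)$ and $e'=(a,b',c)$ are distinct clauses. Then $\psi_e(a)=0$ forces $\psi_e(c)=1$ on $e'$, contradicting $\psi_e(c)=0$. The $(b,c)$ case is symmetric.

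This already gives $|E|\le n^2$. It also means that for each $c\in V_3$, the clauses through $c$ form a matching $M_c \subseteq V_1\times V_2$. So $E$ is a union of $n$ matchings on $V_1\times V_2$, which is the Ruzsa--Szemer\'edi shape.

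\textbf{Step 2: the triangle graph.} Next, form the tripartite graph $G$ on $V_1\sqcup V_2\sqcup V_3$ whose edges are all pairs contained in clauses of $E$. Every clause is a triangle of $G$. The aim is to show that $G$ has only $O(n^2)$ triangles, ideally that (after a cleaning step) the triangles of $G$ are essentially only the clauses themselves. The tool is the witnesses. Suppose $(a,b,c)$ is a triangle whose three edges come from clauses $e_1 \ni (a,c)$, $e_2 \ni (b,c)$ and $e_3 \ni (a,b)$. Applying $\psi_{e_1}$, $\psi_{e_2}$ and $\psi_{e_3}$ to the other two clauses pins down values on the third vertices of those clauses. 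I would try to derive a contradiction unless the triangle is a clause. Equivalently, I would try to show each $M_c$ is an induced matching in the graph $\bigcup_{c'} M_{c'}$ restricted to the edges that interact with $c$.

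\textbf{Step 3: removal.} Granting $O(n^2)$ triangles, \Cref{thm:triangleRemoval} lets us delete $n^2/2^{\Omega(\log^*n)}$ edges of $G$ and destroy every triangle, in particular every clause. Deleting a $V_1V_3$ or $V_2V_3$ edge destroys at most one clause by Step 1. A deleted $V_1V_2$ edge $(a,b)$ could lie in many clauses. To handle this, I would apply the removal lemma to $G$ with the $V_1V_2$ edges replaced as follows: either charge each clause to its unique $(a,c)$ pair by running the removal argument on a suitably modified graph, or first pass to a sub-instance where each $(a,b)$ pair has bounded multiplicity, losing only a constant factor. Then $|E| \le O(\#\text{deleted edges}) = n^2/2^{\Omega(\log^*n)}$.

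\textbf{Main obstacle.} I expect Step 2 to be the main obstacle: controlling triangles of $G$ that are not clauses, and handling the possibly repeated $(a,b)$ pairs. Step 1 gives only partial rigidity. The first thing I would try is a case analysis on the three witnesses $\psi_{e_1},\psi_{e_2},\psi_{e_3}$ using the rule $x_3 = 1 - x_1x_2$. If that fails, I would restrict to a random sub-instance in which each $V_1V_2$ pair appears once, so that $G$ becomes a linear $3$-graph's shadow where the triangle count can be bounded directly.
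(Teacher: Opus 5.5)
\textbf{Verdict: there is a genuine gap.} Your Step 1 is correct; it is the paper's first claim, that each $M_c$ is a matching. Your overall plan (triangle graph, Fox's removal lemma, then charging deleted edges to clauses) is also the paper's. But the argument depends on a structural lemma you never obtain, and none of the routes you sketch for Step 2 can work.

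\emph{Why the proposed routes fail.}
\begin{itemize}
\item Non-clause triangles genuinely occur, and $M_c$ need not be induced. The clauses $(a_1,b_1,c),(a_2,b_2,c),(a_1,b_2,c'),(a_2,b_1,c')$ form a conditionally non-redundant instance: the witness for $(a_1,b_1,c)$ is $a_1=b_1=c=0$, $a_2=b_2=c'=1$, and the other witnesses are symmetric. Yet $(a_1,b_2,c)$ is a triangle of $G$ that is not a clause. So ``derive a contradiction unless the triangle is a clause'' cannot succeed.
\item Your fallback fails too, because linearity alone bounds nothing. $\{(i,j,i+j \bmod n)\}$ is a tripartite linear $3$-graph whose shadow is $K_{n,n,n}$, with $n^3$ triangles. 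The witnesses, not linearity, must be used.
\item In Step 3 you cannot pass to bounded $V_1V_2$-multiplicity ``losing only a constant factor'' for free. The instance $\{(a,b,c_i)\}_{i\le n}$ is conditionally non-redundant, so a single pair can carry $n$ clauses.
\end{itemize}

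\emph{The missing lemma.} The missing idea is the paper's second claim: for $c\neq c'$, at most two edges of $M_{c'}$ have both endpoints in $V(M_c)$. Suppose three such edges $(a_i,b_i)$ existed, and take the clause $e\in M_c$ through $a_1$.
\begin{itemize}
\item Its witness has $\psi(a_1)=0$, which forces $\psi(c')=1$.
\item Since $\psi(c)=0$, every vertex of $V(M_c)\setminus e$ is forced to $1$.
\item Because $M_{c'}$ is a matching, one of $(a_2,b_2),(a_3,b_3)$ avoids $e$. Together with $c'$ it evaluates to $111\notin P'_{299}$, a contradiction.
\end{itemize}

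\emph{How it closes your open steps.}
\begin{itemize}
\item Triangle count: every non-clause triangle $(a,b,c)$ has $(a,b)\in M_{c'}$ for some $c'\neq c$ with $a,b\in V(M_c)$. So there are at most $2n^2$ of them, and $G$ has $O(n^2)$ triangles, as the removal lemma requires.
\item Multiplicity: the lemma gives $|M_c\cap M_{c'}|\le 2$, hence $\sum_{(a,b)}\binom{\ell_{ab}}{2}\le 2\binom n2$, where $\ell_{ab}$ is the number of clauses containing the pair $(a,b)$. Let $E'$ be the set of deleted edges. By Cauchy--Schwarz, the clauses destroyed through $V_1V_2$ edges of $E'$ number $O(n\sqrt{|E'|}+|E'|)$. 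This is still $n^2/2^{\Omega(\log^* n)}$ when $|E'|$ is, which completes your Step 3.
\end{itemize}
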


\begin{proof}
We show that $\NRD(P'_{299} \mid Q'_{299}, 3n) \geq \frac{n^2}{2^{\Omega(\log^*(n))}}$, from which the stated lemma follows. 

	Let $E = \{e_1 \dots e_m \} \subseteq [n] \times [n] \times [n]$ denote the constraints which are a conditionally non-redundant instance of $P'_{299} \mid Q'_{299}$ (at a constant factor loss in size, we may assume that that non-redundant instances are tripartite). That is, for each constraint $e_i \in E$, there exists a corresponding assignment $\psi_{e_i} \in \zo^n$ such that $\psi_{e_i}(e_i) = 000$, and for all other $e_j \neq e_i$, we have that $\psi_{e_i}(e_j) \in \{001, 011, 101, 110 \}$.
	
	Now, we consider the following construction: we create an auxiliary graph $G$ with vertex set $A = [n]$, $B = [n]$ and $C = [n]$, and for every edge $e_i = (u,v,w)$, we add an edge from $u \in A$ to $v \in B$, from $u \in A$ to $w \in C$, and from $v \in B$ to $w \in C$. We define these triangles $(u,v, w)$ to be \textbf{canonical triangles} and note that their number is $|E|$. As our goal is to bound $|E|$, we will show that one can remove all triangles in $G$ (and in particular all the canonical triangles) by deleting few edges, and that each edge on average cannot remove too many canonical triangles.
	
	For ease of analysis later, we will also use the following subgraphs: for a choice of $c \in [n]$, we let $E[c]$ denote the graph which contains edges $\{e-c: e \in \{e_1 \dots e_m\}, e[3] =c \}$ (in words, we are taking all constraints which have variable $c$ in the third position, and considering the graph they induce on $A \cup B$). We only include those vertices in this graph which touch one of these edges. 
	Importantly, we have the following two claims about these induced graphs:
	
	\begin{claim}\label{clm:matching}
		Let $E = \{e_1 \dots e_m \} \subseteq [n] \times [n] \times [n]$ denote the (ordered) constraints which form a witness to the non-redundancy of $P'_{299} \mid Q'_{299}$, and let $E[c]$ be defined as above for $c \in [n]$. Then, $E[c]$ is a matching.
	\end{claim}

	\begin{claim}\label{clm:boundedInducedIntersection}
		Let $E = \{e_1 \dots e_m \} \subseteq  [n] \times [n] \times [n]$ denote the (ordered) constraints which form a witness to the non-redundancy of $P'_{299} \mid Q'_{299}$, and let $E[c]$ be defined as above for $c \in [n]$. Then, for any $c \neq c' \in [n]$, $V(E[c])$ induces at most $2$ edges of $E[c']$.
	\end{claim}
	
	We defer the proofs of these claims for now, but, as we shall see, they suffice for deriving our ultimate bound on the size of our starting conditionally non-redundant instance. Towards this end, we first use the above claims to show that the graph $G$ has only $O(n^2)$ many triangles. To see why, consider any triple of vertices $a \in A, b \in B, c \in C$ which form a triangle (note that no triangle can have two vertices in any of $A, B, C$, as there are no internal edges in these parts). If $a, b, c$ form a triangle, this means that there must be edges $(a, b), (b, c), (a,c )$ in the graph $G$. There are two cases for us to consider:
	\begin{enumerate}
		\item Suppose that the edge $(a, b) \in E[c]$. Then, this trivially means that $(a, b), (b,c), (a,c) \in G$ which form the aforementioned canonical triangles.
		\item  Suppose that the edge $(a, b) \notin E[c]$. Then, it must be the case that there is some edge $(a, y) \in E[c]$, some edge $(x, b) \in E[c]$, and some $c'$ for which $(a, b) \in E[c']$. It may even be the case that $x=a$ and $y=b$ (i.e., $E[c']$ and $E[c]$ may have shared edges).
        Importantly, this means that the edge $(a, b)$ from $E[c']$ is \emph{induced} by $V(E[c])$. We call these \textbf{non-canonical triangles}.
	\end{enumerate}
	
	Now we bound the number of these triangles. First, we see that the number of canonical triangles in $G$ is bounded by $n^2$ because $\sum_{c \in [n]} |E[c]| \leq n \cdot \max_{c \in [n]} |E[c]| \le n^2$, as by \cref{clm:matching}, we know that each $E[c]$ is a matching (and so has at most $n$ edges).
	
	Next, we bound the number of non-canonical triangles. The key observation here, as remarked above, is that any non-canonical triangle contains an edge $(a,b) \in E[c']$ which is induced by $V(E[c])$. To bound the number of non-canonical triangles, we see (by \cref{clm:boundedInducedIntersection}):
	\[
	\sum_{c \in [n]} \sum_{c' \neq c \in [n]} \left |\{e \in E[c']: e \text{ induced by } V(E[c])\}\right | \leq \sum_{c \in [n]} \sum_{c' \neq c \in [n]} 2 \le 2n^2.
	\]
	
	So, for this graph $G$, we know that $G$ has at most $3n^2$ triangles. By Fox's triangle removal lemma (\cref{thm:triangleRemoval}), we then know that there exists a set of $\frac{n^2}{2^{\Omega(\log^*(n))}}$ many edges that can be removed from $G$ such that the resulting graph is triangle-free. 
	
	Importantly, this implies that removing $\frac{n^2}{2^{\Omega(\log^*(n))}}$ many edges from $G$ removes \emph{all} of the canonical triangles from the graph $G$. Let $E'$ denote this set of edges which are deleted from $G$. 
    We now analyze how many \emph{canonical triangles} are removed by the removal of each edge $e \in E'$, call it $\ell_e$.
	\begin{enumerate}
		\item Suppose that a deleted edge in $E'$ is of the form $(a, c)$ or $(b, c)$ (WLOG we consider $(a, c)$). In this case, the only canonical triangles which are lost are those which correspond to edges in $E[c]$ which contain the vertex $a$. But, because $E[c]$ is a matching (by \cref{clm:matching}), we know that there is at most one edge in $E[c]$ which contains $a$. Thus, there is at most one canonical triangle which is lost. Thus, we set $\ell_{(a, c)} = 1$. 
		\item Suppose a deleted edge in $E'$ is of the form $(a, b)$. In this case, a canonical triangle $(a, b, c)$ is lost if and only if $(a,b) \in E[c]$. Thus, the number of canonical triangles which is removed is exactly $\ell_{(a,b)}$: the number of choices of $c \in [n]$ such that $(a, b) \in E[c]$. 
	\end{enumerate}
	
	Now, by \cref{clm:boundedInducedIntersection}, we know that every pair of labels $c, c' \in [n]$ has at most two edges in common. Hence, 
	\[
	2 \cdot \binom{n}{2} \geq \sum_{e \in E'} \binom{\ell_e}{2} \geq |E'| \cdot \binom{\frac{1}{|E'|} \cdot \sum_{e \in E'} \ell_e}{2},
	\]
	where the second inequality above is simply using convexity. The above then implies that $|E'| \cdot \frac{(\sum_{e \in E'} \ell_e)^2}{|E'|^2} \leq 2 \cdot \binom{n}{2} \leq n^2$,
	or equivalently, that $\sum_{e \in E'} \ell_e \leq n \cdot \sqrt{|E'|}.$
	By the above, we know that $|E'| = \frac{n^2}{2^{\Omega(\log^*(n))}}$, and so this immediately implies that $\sum_{e \in E'} \ell_e \leq \frac{n^2}{2^{\Omega(\log^*(n))}}.$
	To conclude, we have shown that deleting the set of edges $E'$ removes all triangles from the graph $G$. At the same time, this process removes at most $\frac{n^2}{2^{\Omega(\log^*(n))}}$ canonical triangles from the graph. Together, this means that $G$ has at most $ \frac{n^2}{2^{\Omega(\log^*(n))}} $ many canonical triangles, and so the number of constraints in our starting non-redundant CSP is at most $\frac{n^2}{2^{\Omega(\log^*(n))}}$.
\end{proof}

Now we prove our auxiliary claims. We start by proving \cref{clm:matching}.

\begin{proof}[Proof of \cref{clm:matching}]
	As before, let the constraints in the non-redundant CSP instance be $E = \{e_1, \dots e_m\}$, and consider $E[c]$ for any $c \in [n]$. Now, for any edge $(a,b) \in E[c]$, we know that $(a, b, c)$ is a constraint in our CSP instance. Importantly, because our CSP instance is non-redundant, we know there must exist an assignment $\psi \in \zo^n$ such that  $\psi(a, b, c) = 000$, while for every other constraint $e'$, $\psi(e') \in \{ 001, 011, 101, 110 \}$. Now, in order for $\psi(a, b, c) = 000$, this means that all of $\psi(a), \psi(b), \psi(c) = 0$. Now, suppose for the sake of contradiction that there were any other edge in $E[c]$ which shared a vertex with $(a, b)$ (WLOG assume $a$), and denote this edge by $(a, b')$. Then, under the assignment $\psi$, we see that $\psi(a, b', c) = 0 \circ \psi(b') \circ 0 \notin \{ 001, 011, 101, 110 \}$ for any choice of $\psi(b')$. Thus, this instance could not be a witness to the non-redundancy of $P'_{299} \mid Q'_{299}$. So, in any non-redundant instance of $P'_{299} \mid Q'_{299}$, it must be the case that each vertex in $E[c]$ has degree $\leq 1$ (i.e., a matching).
\end{proof}

\begin{proof}[Proof of \cref{clm:boundedInducedIntersection}]
	Consider any $c, c' \in [n]$, and let us look at $E[c]$ and $E[c']$. Now, suppose for the sake of contradiction that there are three edges $e_1 = (a_1, b_1), e_2 = (a_2, b_2), e_3 = (a_3, b_3)$ from $E[c']$ which are induced by $V(E[c])$. Because there are three distinct edges, and $E[c']$ is a matching, it must be the case that there are three distinct left vertices $a_1, a_2, a_3$ and likewise three distinct right vertices  $b_1, b_2, b_3$.
	
	Now, let us consider the edge $e \in E[c]$ such that $a_1 \in e$, along with the assignment $\psi$ which is the non-redundant certificate for this edge $e$. By definition, it must then be the case that $\psi(a_1) = 0$. Importantly, this now forces the value of $\psi(c')$: indeed, conditioned on the first variable being $0$, the only value of the third variable which is in $P_{299}$ is $1$. Thus, we know that $\psi(c') = 1$.
	
	At the same time, we know that for every variable $x$ in $V(E[c]) - e$, it must be the case that $\psi(x) = 1$. This is because the only tuple which is in $P_{299}$ with final variable $0$ is $110$ (and by the above, we know that $\psi(c) = 0$, effectively forcing these other assignments). To reach a contradiction then, we know that because $a_1, a_2, a_3$ are distinct, and $b_1, b_2, b_3$ are distinct, one of $(a_2, b_2)$ or $(a_3, b_3)$ must \emph{not} intersect $e$ (WLOG let us say that it is $(a_3, b_3)$). But, because $(a_3, b_3)$ is induced by $V(E[c])$, this means that $a_3, b_3 \in  V(E[c])$, and so $\psi(a_3) = \psi(b_3) = 1$. At the same time, know that $\psi(c') = 1$, which means that $\psi(a_3, b_3, c') = 111$ which is \emph{not} in $P_{299}$. Thus, this violates the fact that we started with an NRD instance. 
	
	So, it must be that $V(E[c])$ induces at most $2$ edges from $E[c']$.
\end{proof}

\subsubsection{Lower Bounding the Conditional Non-Redundancy}

We now provide a lower bound on the conditional non-redundancy $\NRD(P_{299} \mid Q_{299})$. For our construction, we will make use of the fascinating \emph{Ruzsa-Szemeredi} graphs~\cite{ruzsa1978triple}.

\begin{definition}
	We say that a graph $G = (V, E)$ on $n$ vertices is an $(r, t)$-RS graph if $E$ can be partitioned in a set of $t$ matchings $M_1, \dots M_t$, each of size $r$ such that for every $i \in [t]$, $M_i$ is an \emph{induced} matching in $G$.
\end{definition}

The key fact we will use is the following. 
\begin{theorem}[\cite{behrend1946sets, fischer2002monotonicity}]\label{thm:goodRSExists}
	There exist bipartite $(r,t)$-RS graphs on vertex set $|L| = |R| = n$ with $t = \Omega(n)$ and $r \geq \frac{n}{2^{O(\sqrt{\log n})}}$.
\end{theorem}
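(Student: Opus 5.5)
The plan is the classical Ruzsa--Szemer\'edi construction built on a Behrend set. We take Behrend's theorem~\cite{behrend1946sets} as given: for every $m$ there is a set $S \subseteq [m]$ with no non-trivial three-term arithmetic progression and with $|S| \ge m/2^{O(\sqrt{\log m})}$. Here non-trivial means that $s + s'' = 2s'$ with $s,s',s'' \in S$ forces $s = s' = s''$. The whole argument then reduces to turning $S$ into a bipartite graph whose matchings are indexed by translates.

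\textbf{Construction.} Let $m = \lfloor n/3 \rfloor$ and take $L = R = [3m] \subseteq [n]$, padding with isolated vertices if needed. For each $a \in [m]$ define
\[
M_a := \{(a+s,\ a+2s) : s \in S\} \subseteq L \times R,
\]
and let $G$ be the union of the $M_a$. We will have $t = m = \Omega(n)$ matchings, each of size $r = |S| \ge n/2^{O(\sqrt{\log n})}$.

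\textbf{Checks, in order.}
\begin{enumerate}
\item Each $M_a$ is a matching of size $|S|$. Both maps $s \mapsto a+s$ and $s \mapsto a+2s$ are injective.
\item The $M_a$ partition $E(G)$. Given an edge $(u,v)$, the value $s = v-u$ determines $s$, and then $a = u - s$ is determined, so no edge lies in two different matchings.
\item Each $M_a$ is induced; this is the key step. Suppose $G$ has an edge between $a+s \in L$ and $a+2s' \in R$ with $s,s' \in S$. That edge lies in some $M_b$, so $a+s = b+s''$ and $a+2s' = b+2s''$ for some $s'' \in S$. Subtracting the two equations gives $s - s'' = 2s' - 2s''$, that is, $s + s'' = 2s'$. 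By the progression-freeness of $S$ we get $s = s' = s''$, hence $b = a$. So the only edges of $G$ inside $V(M_a)$ are those of $M_a$ itself.
\end{enumerate}

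The only genuinely deep input is Behrend's density bound for progression-free sets. I would cite it rather than reprove it; if a proof were required, I would use the standard argument of taking integers whose base-$d$ digits are small and lie on a sphere, where strict convexity of the sphere rules out progressions. The equal-size requirement on the matchings is automatic here, since every $M_a$ has exactly $|S|$ edges. The bipartite form with $|L| = |R| = n$ follows directly from the construction above, matching the form stated in~\cite{fischer2002monotonicity}.
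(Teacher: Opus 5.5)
Your proof is correct: the translates $M_a=\{(a+s,a+2s):s\in S\}$ of a Behrend progression-free set $S\subseteq[m]$ give an edge-partition into $\lfloor n/3\rfloor$ matchings of size $|S|$. Your identity $s+s''=2s'$ correctly shows that each of these matchings is induced. The paper gives no proof of its own and simply cites Behrend and Fischer et al., whose argument is this standard Ruzsa--Szemer\'edi construction, so your approach is essentially the intended one.
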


With this in hand, we now show the following.
\begin{lemma}\label{lem:299LB}
	$\NRD(P_{299}  \mid Q_{299}, n) \ge \frac{n^2}{2^{O(\sqrt{\log n})}}$.
\end{lemma}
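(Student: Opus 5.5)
The plan is to work with the relabeled pair $P'_{299} = \{001, 011, 101, 110\}$ and $Q'_{299} = P'_{299}\cup\{000\}$. We noted above that $\NRD(P'_{299}\mid Q'_{299}, n) = \NRD(P_{299}\mid Q_{299}, n)$, so it suffices to build a conditionally non-redundant instance for the primed pair. The instance will be read off directly from a Ruzsa--Szemer\'edi graph, as given by \Cref{thm:goodRSExists}. The structural facts behind the upper bound are \Cref{clm:matching}, which says each $E[c]$ is a matching, and \Cref{clm:boundedInducedIntersection}, which says $V(E[c])$ induces few edges of other $E[c']$. The construction should realize the extreme case of both: each third coordinate carries an induced matching.

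\textbf{Construction.} Take a bipartite $(r,t)$-RS graph on $L\sqcup R$ with $|L|=|R|=n$, $t=\Omega(n)$ and $r\ge n/2^{O(\sqrt{\log n})}$. Let its induced matchings be $M_1,\dots,M_t$. Introduce $t$ new vertices $c_1,\dots,c_t$, and for each $i$ and each $(a,b)\in M_i$ add the constraint $(a,b,c_i)$. The instance is tripartite on $L\times R\times\{c_1,\dots,c_t\}$ and has $O(n)$ vertices. It has $rt \ge n^2/2^{O(\sqrt{\log n})}$ constraints, and these are distinct because the matchings partition the edge set. Rescaling $n$ by a constant factor gives the claimed bound.

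\textbf{Witnesses.} For the constraint $e=(a,b,c_i)$, define $\psi_e$ as follows:
\begin{itemize}
\item $\psi_e(a)=\psi_e(b)=\psi_e(c_i)=0$;
\item $\psi_e(x)=1$ for every $x\in V(M_i)\setminus\{a,b\}$;
\item $\psi_e(c_j)=1$ for all $j\neq i$;
\item $\psi_e(x)=0$ for every other vertex of $L\cup R$.
\end{itemize}
Then $\psi_e(e)=000\in Q'_{299}\setminus P'_{299}$, as required. It remains to check that every other constraint evaluates into $P'_{299}$.

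\textbf{Other constraints in $M_i$.} A constraint $(a',b',c_i)$ with $(a',b')\neq(a,b)$ in $M_i$ is disjoint from $\{a,b\}$, since $M_i$ is a matching. So it evaluates to $110\in P'_{299}$.

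\textbf{Constraints in $M_j$ with $j\neq i$.} A constraint $(x,y,c_j)$ has third bit $1$. The values with third bit $1$ that lie in $P'_{299}$ are $001, 011, 101$, so we only need to rule out $\psi_e(x)=\psi_e(y)=1$. That would force $x,y\in V(M_i)\setminus\{a,b\}$, which means the edge $(x,y)$ of $M_j$ has both endpoints in $V(M_i)$. Since $M_i$ is induced, every edge of $G$ inside $V(M_i)$ belongs to $M_i$. But the matchings partition $E(G)$, so $(x,y)\notin M_i$, a contradiction.

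\textbf{Main obstacle.} The only delicate point is this last step: ensuring that no edge of another matching has both endpoints valued $1$. This is exactly where the induced-matching property is used, and it is why Ruzsa--Szemer\'edi graphs, rather than arbitrary matching decompositions, are the right object here.
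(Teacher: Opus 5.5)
Your proposal is correct and follows the paper's proof essentially verbatim. It uses the same Ruzsa--Szemer\'edi construction with one distinguished variable per induced matching, the same witness (zero on the clause, one on the rest of $V(M_i)$ and on the other distinguished variables, zero elsewhere), and the same two-case check via the matching and induced-matching properties. You are also slightly more careful than the paper in explicitly invoking the equivalence with the primed pair $P'_{299}\mid Q'_{299}$, which the paper's proof uses only implicitly.
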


\begin{proof}
	First, we fix a (bipartite) $(r, t)$ RS graph on $2n$ vertices (vertex set $L, R$) with $t = \Omega(n)$ and $r = \frac{n}{2^{O(\sqrt{\log(n)})}}$ as guaranteed by \cref{thm:goodRSExists}. To build our NRD instance, we now add $t$ distinguished variables $z_1, \dots z_t$, along with $2n$ variables $x_1, \dots x_n, y_1, \dots y_n$ for the vertices in the graph. Now, for each matching $M_i: i \in [t]$, we add a constraint of $\{u, v, z_i\}$ 
    for $e = (u,v) \in M_i$ (where we understand that $R_{299}$ operates on $x_u, y_v, z_i$, where $e = (u,v)$). Thus, the total number of constraints in the resulting CSP (denote this CSP by $C$) is exactly $|E| = t \cdot r = \frac{n^2}{2^{O(\sqrt{\log n})}}.$
	
	It remains only to show that $C$ is a (conditionally) non-redundant instance. To see this, let us fix a single constraint $R_{299}(e \cup \{z_i\})$  for $e = (u,v) \in M_i$, with the understanding that $u \in L$, $v \in R$. We must show that there is an assignment $\psi: \{x_1, \dots x_n, y_1, \dots y_n, z_1, \dots z_t\} \rightarrow \zo$ such that $\psi(e \cup \{z_i\}) = 000$, while for every other constraint $\psi(e' \cup \{z_j\}) \in \{001, 011, 101, 110 \}$.
	
	For this, we consider the assignment $\psi$ such that
	\[
	\psi(z_i) = 0 \quad \quad \quad \quad \quad \quad \quad \quad \quad \quad \quad \quad \quad \quad \quad \quad \quad \forall j \neq i \in [t]: \psi(z_j) = 1,
	\]
	\[
	\psi(x_a)= \begin{cases}
		0 \text{ if } a = u \\
		1 \text{ if } a \in L \cap V(M_i) - \{ u\} \\
		0 \text{ if } a \in L - V(M_i)
	\end{cases}, \quad 	\psi(y_b)= \begin{cases}
		0 \text{ if } b = v \\
		1 \text{ if } b \in R\cap V(M_i) - \{ v\} \\
		0 \text{ if } b \in R - V(M_i)
	\end{cases}.
	\]
	By construction, we know that $\psi(e \cup \{z_i\}) = 000$. So, it remains only to show that $\psi(e' \cup \{z_j\}) \in \{001, 011, 101, 110 \}$ for every $e' \in M_j$. We do this by cases:
	\begin{enumerate}
		\item Suppose $e' \in M_i$, where $M_i$ is the same matching that contains $e$. Observe then that $e' \cap e = \emptyset$ as $M_i$ is a matching. This means that both vertices in $e'$ are given value $1$ by the assignment $\psi$, i.e., $\psi(e') = 11$. By construction, we know that $\psi(z_i) = 0$, so $\psi(e' \cup \{z_j\}) = 110$. 
		\item Suppose $e' \in M_j$ for $j \neq i$. Then, we know that $\psi(z_j) = 1$ by construction. Likewise, we know that $|e' \cap V(M_i)| \leq 1$, as $M_i$ was an \emph{induced} matching in the graph $G$ (which implies that every edge $e' \in G - M_i$ has at least one vertex outside of $V(M_i)$). Thus, $\psi(e') \in \{00, 01, 10\}$. Thus, $\psi(e' \cup \{z_j\}) \in \{001, 011, 101\}$, as we desire. \qedhere
	\end{enumerate}
\end{proof}

Thus, we are now able to prove \Cref{thm:Pred299}. 

\begin{proof}[Proof of \cref{thm:Pred299}.]
By \cref{prop:Magnus-trick}, we know that 
\[
\NRD(P_{299} \mid Q_{299}, n / 2) \cdot \Omega(n) \le \NRD(R_{299}, n) \le \NRD(P_{299} \mid Q_{299}, n ) \cdot O(n) + \NRD(Q_{299} \times \{0,1\}).
\]
Plugging in \cref{lem:299UB}, \cref{lem:299LB} and \cref{clm:QBoundPred299}, then yields the desired claim. 
\end{proof}

\subsection{Predicate 181 and Connections to Hypergraph Intersection Problems}\label{subsec:181}

Predicate 181 has satisfying assignments 
\[
R_{181} = \{0000, 0001, 0010, 0100, 0111, 1000, 1011, 1101, 1111\}.
\]

    From \Cref{app:results} 
    we know that $\NRD(R_{181}, n) \in [\Omega(n^2), O(n^3)]$. More precisely, for the lower bound, we can observe that $R_{181}(0,0,\bar{x}_1,\bar{x}_2) = \OR_2(x_1, x_2)$. For the upper bound, $R_{181}$ has a polynomial representation of \[
    2x_1x_2 + x_1x_3 + x_1x_4+x_2x_3+x_2x_4+x_3x_4+2x_1x_2x_4 \mod 3.
    \] 
    However, we are unable to further classify the non-redundancy of $R_{181}$.

\paragraph*{Connection to a Symmetric Arity $5$ Predicate.}
Consider the predicate $P_{181} = \{x \in \zo^5: \wt(x) \in \{0, 2, 3\} \}$. Immediately, we can observe that predicate $Q_{181}  = \{x \in \zo^5: \wt(x) \in \{0, 2, 3, 5\} \}$ satisfies $\NRD(Q_{181}, n) = O(n^2)$, as $Q_{181}$ is exactly the zeros of the degree polynomial 
$f(x) = |x| \cdot (|x| - 2) \mod 3.$

Thus, by \cref{prop:NRD-facts}, we know that $\NRD(P_{181} , n) \leq \NRD(P_{181} \mid Q_{181} , n) + \NRD(Q_{181}, n) \leq \NRD(P_{181} \mid Q_{181} , n) + O(n^2)$. So understanding the $\NRD$ of $P_{181}$ reduces to understanding the $\NRD$ of $P_{181} \mid Q_{181}$ (up to an additive error of $O(n^2)$).

A priori, it may seem mysterious that we mention this arity $5$ predicate $P_{181}$ in the discussion of Predicate 181. However, we can consider the restriction of $P_{181}$ that is achieved by forcing $x_4 = x_5$: this then yields satisfying assignments
\[
(P_{181})_{x_4 = x_5} = \{0000, 0001, 0110, 1100, 1010, 1001, 0101, 0011, 1110\}.
\]
Even further, if we now negate the last variable, we see that:
\[
(P_{181})_{x_4 = x_5, \neg x_4} = \{0001, 0000, 0111, 1101, 1011, 1000, 0100, 0010, 1111 \}
\]
\[
= \{0000, 0001, 0010, 0100, 0111, 1000, 1011, 1101, 1111 \} = R_{299}
\]

Thus, by using \cref{prop:NRD-facts}, we have:

\begin{claim}
    $\NRD(R_{181}, n) \leq \NRD(P_{181}, n) \leq \NRD(P_{181} \mid Q_{181}, n) + \NRD(Q_{181},n) \leq \NRD(P_{181} \mid Q_{181}, n) + O(n^2)$.
\end{claim}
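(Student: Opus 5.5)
The claim is a chain of three inequalities, and I would prove them from right to left.

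The rightmost inequality is immediate. The paper has already observed that $Q_{181}=\{x\in\zo^5:\wt(x)\in\{0,2,3,5\}\}$ is exactly the zero set of the degree-$2$ polynomial $|x|(|x|-2)$ over $\mathbb Z/3\mathbb Z$. Expanding $|x|^2$ with $x_i^2=x_i$ makes this a genuine multilinear polynomial of degree $2$. I would confirm directly that it vanishes precisely when $|x| \equiv 0$ or $2 \pmod 3$, that is, when $\wt(x)\in\{0,2,3,5\}$. The polynomial-representation bound from \Cref{subsec:poly} then gives $\NRD(Q_{181},n)=O(n^2)$.

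The middle inequality is \Cref{prop:NRD-facts} applied to the chain $P_{181}\subseteq Q_{181}\subseteq\zo^5$. This uses the convention $\NRD(P\mid\zo^5,n)=\NRD(P,n)$, which holds for both $P_{181}$ and $Q_{181}$.

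The leftmost inequality, $\NRD(R_{181},n)\le\NRD(P_{181},n)$, is the only step with content. I would first verify that identifying $x_4=x_5$ and then negating that shared coordinate turns $P_{181}$ into $R_{181}$. The displayed set in the text is labelled $R_{299}$, but it coincides element-by-element with $R_{181}$, so I would check the nine tuples directly.

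I would then give a simulation argument. Take a non-redundant instance $(V,E)$ of $\CSP(R_{181})$ on $n$ variables with witnesses $\psi_e$.
\begin{enumerate}
\item Replace each constraint $(v_1,v_2,v_3,v_4)$ by the $P_{181}$-constraint $(v_1,v_2,v_3,v_4,v_4)$, keeping the same variable set.
\item Use the witness $\psi_e$ unchanged.
\end{enumerate}
By the identity above, $(a,b,c,d)\in R_{181}$ if and only if $(a,b,c,\bar d,\bar d)\in P_{181}$.

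This raises the only real issue: the negation of the repeated coordinate. There are two ways to handle it. The first is to use the fact that $\NRD$ is invariant under negating coordinates, so replacing $R_{181}$ by its variant with the last coordinate negated changes nothing, and that variant is literally the $x_4=x_5$ restriction of $P_{181}$. The second is to use a complemented witness on the $v_4$-positions. Since instances may be taken $4$-partite at an $O(1)$ loss, one can define $\psi'_e(v)=1-\psi_e(v)$ on the fourth part and $\psi'_e(v)=\psi_e(v)$ elsewhere.

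Either way, distinct $R_{181}$-constraints map to distinct $P_{181}$-constraints, and each witness separates exactly its own constraint. Hence $|E|\le\NRD(P_{181},n)$, up to the $O(1)$ factor from partiteness, which suffices for the asymptotic statement.

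Repeated variables within a single constraint are allowed by the definition of an instance. If one insists on $5$-partite instances, I would instead duplicate each fourth-part variable into a fresh copy that is assigned the same value. This doubles $n$ and costs only a constant factor.
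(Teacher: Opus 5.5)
Your proposal is correct and follows the paper's route. You obtain $R_{181}$ from $P_{181}$ by identifying $x_4=x_5$ and negating that coordinate; you are right that the paper's label $R_{299}$ on the displayed set is a typo. You then apply \Cref{prop:NRD-facts} with $\zo^5$ on top, and bound $\NRD(Q_{181},n)$ via the degree-$2$ polynomial $|x|(|x|-2) \bmod 3$. Your explicit simulation argument, with the partite or complemented-witness handling of the negated coordinate, makes rigorous the gadget step the paper leaves implicit. It gives the first inequality up to the $O(1)$ factor that the paper's conventions already absorb.
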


Therefore, one promising approach towards bounding $\NRD(R_{181},n )$ is via an upper bound on $\NRD(P_{181}\mid Q_{181}, n)$.

\paragraph*{Connecting $\NRD(P_{181} \mid  Q_{181}, n)$ to Hypergraph Intersection Problems.} 
To better understand $\NRD(P_{181} \mid Q_{181}, n)$, we consider the hypergraph interpretation of this quantity: any hypergraph $H = (V, E)$ which constitutes a conditionally non-redundant instance for $\NRD(P_{181} \mid Q_{181}, n)$ must satisfy the following property: for each $e \in E$ (each $e$ is of size $5$) there is an assignment $\psi_e: V \rightarrow \zo$ such that $(\psi_e)(e) = 11111$, but for every other $e' \neq e \in E$, that $\wt(\psi_e(e')) \in \{0, 2, 3\}$. 

A natural approach towards building such non-redundant hypergraphs is to consider the set of assignments $\psi_e(x_i) = \mathbf{1}[i \in e]$, i.e., where $\psi_e$ only maps variables in $e$ to $1$, and all other variables are mapped to $0$. In this setting, we can observe that a hypergraph is conditionally non-redundant \emph{if and only if} for every pair of hyperedges $e, e'$, $|e \cap e'| \in \{0, 2, 3\}$. This motivates a natural \emph{extremal hypergraph intersection problem}:
    \emph{What is the maximum number of hyperedges in a hypergraph $H = (V, E)$ on $n$ vertices of arity $5$ such that for every pair of hyperedges $e, e'$, $|e \cap e'| \in \{0, 2, 3\}$?}

This style of question is studied exactly in the work of Deza, Erd{\"o}s, and Frankl \cite{deza1978intersection}, where they showed that any such hypergraph must \emph{necessarily have} $O(n^2)$ many hyperedges. In turn, this immediately implies that the ``canonical'' set of assignments $\psi_e$ defined above cannot create conditionally non-redundant instances of size larger than $n^2$. Note that an upper bound of $O(n^3)$ is standard by the linear algebra/polynomial method of Ray-Chaudhuri, Wilson~\cite{RayChaudhuriWilson1975} (and prime modulus versions in \cite{FranklWilson1981}). Deza {\it et al}~\cite{deza1978intersection} provide a powerful refinement of this approach to derive a tighter $O(n^2)$ bound, which we can hope to emulate in our CSP setting.

\paragraph*{Barriers in Bounding $\NRD(P_{181} \mid Q_{181}, n)$.}
Note that the discussion above not provide a general bound on the maximum size non-redundant instance however, as there is no condition which forces the assignments $\psi_e$ to be the degenerate family of indicator assignments for $e \in E$. Nevertheless, it does raise the question of whether the \emph{techniques} from \cite{deza1978intersection} can be modified to work in this new conditionally non-redundant setting. At a high level, the key insight from \cite{deza1978intersection} is the following:

\begin{claim}\label{clm:coreStructure}
    Let $H = (V, E)$ be a hypergraph of arity $5$ with $|e \cap e'| \in \{0, 2, 3\}$. For any vertex $v \in V$, let $H_v = (V, E_v)$ be the hypergraph with all hyperedges containing $v$. If $|E_v| = \omega(n)$, then there exists another vertex $u \in V$ such that for all $e \in E_v$, $(u,v) \subseteq e$.
\end{claim}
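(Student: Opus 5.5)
We pass to the link of $v$. Let $\mathcal F_v := \{e \setminus \{v\} : e \in E_v\}$, a family of $4$-subsets of $V \setminus \{v\}$. For distinct $e, e' \in E_v$ we have $|e \cap e'| \ge 1$ since both contain $v$. The hypothesis then forces $|e \cap e'| \in \{2,3\}$. Hence the corresponding links $f, f' \in \mathcal F_v$ satisfy $|f \cap f'| \in \{1,2\}$. So $\mathcal F_v$ is a $4$-uniform family on at most $n$ vertices in which any two members meet in one or two points, and by assumption $|\mathcal F_v| = \omega(n)$. The claim is then equivalent to showing that all members of $\mathcal F_v$ share a common vertex $u$, i.e. that $\mathcal F_v$ is a star.

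The main tool will be the Erd\H{o}s--Rado sunflower lemma, used in the style of Deza--Erd\H{o}s--Frankl. Any $4$-uniform family with no sunflower of $k$ petals has size at most $4!(k-1)^4 = O(1)$ for fixed $k$. We first pass to the pairs and triples that are kernels of large sunflowers in $\mathcal F_v$. Let $\mathcal K$ be the collection of sets $K$ with $|K| \in \{1,2\}$ that are kernels of sunflowers in $\mathcal F_v$ with at least $k := 10$ petals, say. (Kernels of size $0$ or $3$ are impossible, since intersections lie in $\{1,2\}$.) The standard Deza--Erd\H{o}s--Frankl argument, which iteratively applies the sunflower lemma to the members not yet ``covered'', should show the following. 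All but $O(1)$ members of $\mathcal F_v$ contain some $K \in \mathcal K$. Moreover, every $f \in \mathcal F_v$ must meet every such kernel $K$ in a set compatible with the allowed intersections. Indeed, if $f$ avoided a kernel $K$, then $f$ would have to hit each of $\ge 10$ pairwise disjoint petals. That is impossible for a $4$-set once there are more than $4$ petals, so every $f$ meets every kernel in $\mathcal K$.

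Next we count. The number of members of $\mathcal F_v$ containing a fixed pair $K$ is at most about $n/2$, since their complements in $f$ are pairwise disjoint pairs (intersections are at most $2$). So if $|\mathcal F_v| = \omega(n)$, there are $\omega(1)$ kernels in $\mathcal K$, or some singleton kernel $\{u\}$ with $\omega(n)$ sets through it. Every member of $\mathcal F_v$ meets every kernel. If two kernels $K_1, K_2$ were disjoint, then every $f$ would have to meet both, and from this we would extract a contradiction with the growth $\omega(n)$ by repeating the disjoint-petal argument on the petals of $K_1$. Hence $\mathcal K$ is an intersecting family of sets of size at most $2$ with $\omega(1)$ members. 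A large intersecting family of $\le 2$-sets must be a star centered at some vertex $u$, since triangles only give $3$ edges. Then $u$ lies in every kernel, and because every $f \in \mathcal F_v$ meets all the $\omega(1)$ kernels $\{u, w_i\}$, a $4$-set $f$ avoiding $u$ would need to contain $\omega(1)$ distinct points $w_i$, a contradiction. Thus $u \in f$ for all $f$, i.e. $\{u,v\} \subseteq e$ for all $e \in E_v$.

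The main obstacle is the middle step: making rigorous that the $\omega(n)$ members are ``explained'' by many kernels all meeting one another. I would first try the cleanest route. Using $|f \cap f'| \le 2$, a pair lies in at most $n/2$ members and a singleton's link is itself a $3$-uniform family with pairwise intersections in $\{0,1\}$, hence of size $O(n^2)$, so that alone is not enough. The real leverage must come from the lower bound $|f \cap f'| \ge 1$, i.e. $\mathcal F_v$ is intersecting. I would therefore attempt a direct argument: take a maximal set of members forming a sunflower with a singleton kernel $\{u\}$. If it is large, every other member must meet $u$ or hit all of its many disjoint petals, which is impossible, so every member contains $u$. If no large singleton-kernel sunflower exists, we bound $|\mathcal F_v| = O(n)$ via pair-kernels together with the sunflower lemma.
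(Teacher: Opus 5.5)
Your argument is correct. The paper gives no proof of this claim; it simply cites Deza--Erd\H{o}s--Frankl, whose general structure theorem for large $L$-intersecting uniform families yields it when applied to $E_v$ with $L=\{2,3\}$.

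You instead specialize the DEF sunflower/kernel method to the link $\mathcal F_v$. Because $\mathcal F_v$ is intersecting, the general machinery (intersection-closed kernel families and their ranks) collapses to three elementary facts:
\begin{enumerate}
\item every member meets every kernel of a $10$-petal sunflower;
\item any two such kernels meet;
\item a pair lies in at most $n/2$ members.
\end{enumerate}
A singleton kernel $\{u\}$, or at least five pair kernels (necessarily a star at some $u$), then forces $u\in f$ for all $f$. Otherwise $|\mathcal F_v|\le 2n+4!\cdot 9^4$. So your route is self-contained, gives an explicit threshold in place of the vague $\omega(n)$, and shows exactly where the common vertex $v$ is used.

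Several steps are stated more tentatively than necessary.
\begin{itemize}
\item The ``all but $O(1)$ members contain a kernel'' step is one line. The members containing no kernel form a subfamily with no $10$-petal sunflower, since such a sunflower's kernel would lie in $\mathcal K$.
\item The disjoint-kernels step needs no growth hypothesis. The members of the sunflower with kernel $K_1$ must each meet $K_2$. If $K_1\cap K_2=\emptyset$, they do so inside their pairwise disjoint petals, forcing $|K_2|\ge 10$.
\item ``$\omega(1)$ kernels'' should read ``at least five''.
\item The singleton-kernel branch closes immediately, since every $f$ meets $\{u\}$.
\end{itemize}

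So the worry in your last paragraph is unfounded: the argument in your middle paragraphs already closes. The alternative sketched there also has an incomplete justification as literally stated. When many pair kernels pass through one vertex, counting pairs gives no $O(n)$ bound. That case is exactly the star, which your main argument already handles.
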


Essentially, \cite{deza1978intersection} shows that as soon as a hypergraph $H$ with appropriate intersection sizes reaches a critical density, it must be the case that the hypergraphs $H_v$ all have cores of $2$ shared vertices. This gives \cite{deza1978intersection} sufficient structural information to then derive their $O(n^2)$ upper bound. Unfortunately, in our setting, an analog of \cref{clm:coreStructure} is false. Indeed, a candidate analog would be the following:

\begin{conjecture}[False]\label{clm:coreStructureAnalog}
    Let $H = (V, E)$ be a hypergraph of arity $5$ which is conditionally non-redundant for $P_{181} \mid Q_{181}$. For any vertex $v \in V$, let $H_v = (V, E_v)$ be the hypergraph with all hyperedges containing $v$. If $|E_v| = \omega(n)$, then there exists another vertex $u \in V$ such that for all $e \in E_v$, $(u,v) \subseteq e$.
\end{conjecture}

As an example, let us partition the vertex set $V - \{v \}$ into two parts of equal size, denoted by $V_1, V_2$. Now, inside $V_1$, we select a vertex $u_1$ arbitrarily, and within $V_2$ we select a vertex $u_2$ arbitrarily.  On the remaining vertices $V_1 - \{u_1\}$, we build a hypergraph $H_1$ of arity $3$, with size $\Omega(n^2)$ such that for all $f, f' \in H_1$, $|f \cap f'| \leq 1$ (note such a construction is possible, see for instance \cite{deza1978intersection}).\footnote{Formally, this is called a linear hypergraph.} Likewise, we build $H_2$ on $V_2 - \{u_2\}$. We now let our hypergraph be $E_v = \{\{u_1,v\} \cup e: e \in H_1 \} \cup \{\{u_2,v\} \cup e: e \in H_2 \}$. Importantly, we can observe that this hypergraph \emph{does not} satisfy the structural condition of \cref{clm:coreStructureAnalog}. Nevertheless, we claim it is conditionally non-redundant for $P_{181} \mid Q_{181}$. To see this, consider any hyperedge $\{(u_1, v), e\} \in E$, assuming (WLOG) that $e \in H_1$. Now, consider the assignment $\psi_e$ which maps variables in $e$ to $1$, maps $v, u_1,u_2$ to $1$, and maps every other variable to $0$. We see that:
\begin{enumerate}
    \item $\psi_e(\{(u_1, v), e\}) = 11111$.
    \item For any $\{u_1,v\} \cup e': e' \in H_1$, $u_1, v$ both map to $1$, and because $|e' \cap e| \leq 1$, we see that $\wt(\{u_1,v\} \cup e') \geq 2$ and $\wt(\{u_1,v\} \cup e') \leq 3$, as at most one other variable in $e'$ maps to $1$.
    \item For any $\{u_2,v\} \cup e': e' \in H_2 $, $u_2, v$ both map to $1$, and $e'$ maps to $0$ (recall, these variables are disjoint from $V_1$). Thus, $\wt(\{u_2,v\} \cup e') = 2$.
\end{enumerate}
Thus, this is indeed a non-redundant instance of $P_{181} \mid Q_{181}$, thereby showing that there is no possible characterization along the lines of \cref{clm:coreStructureAnalog}.

\section{Conclusion and Open Questions}\label{sec:conclusion}

In this paper, we classified the non-redundancy of 399 of 400 Boolean predicates of arity 4. We leave the remaining case $R_{181}$ as our main open question. The most notable case we classified, $R_{299}$, is the first example of a predicate Boolean or otherwise whose non-redundancy is provably not of the form $\Theta(n^{\beta})$ for some rational number $\beta$. Does there exist a Boolean predicate (of arity $r \ge 4$) whose non-redundancy is $\Theta(n^{\beta})$ for some $\beta \in \mathbb Q \setminus \mathbb N$. We know that $\beta$ cannot lie in the range $(r-1, r)$~\cite{chen2020BestCase,carbonnel2022Redundancy,khanna2025efficient}. Does there exist any Boolean predicate whose non-redundancy lies in the range $[\omega(n), o(n^2)]$?

\bibliographystyle{alphaurl}
\bibliography{references}

\raggedbottom
\pagebreak

\appendix

\section{Full table of results}\label{app:results}

In this appendix, we present the complete classification of Boolean predicates of arity 4. We give a brief explanation of each column of the table.

\begin{itemize}
\item \textbf{Predicate.} The specific predicate $P \subseteq \{0,1\}^4$ written in binary in the following ordering:
\[
    (0000, 0001, 0010, 0011, 0100, 0101, 0110, 0111, 1000, 1001, 1010, 1011, 1100, 1101, 1110, 1111).
\]

\item \textbf{OR.} The largest $k \in [4]$ such that some projection of $P$ is equal to $\OR_k$.

\item \textbf{OR Certificate.} A specific projection of $P$ certifying the OR degree. 

\item \textbf{Deg.} TA minimal degree of a polynomial representation of $P$.

\item \textbf{Polynomial.} A polynomial representation of $P$. We use $x_S$ as shorthand for $\prod_{i \in S} x_i$.

\end{itemize}

\addtolength{\tabcolsep}{-0.4em} 
\footnotesize
\begin{longtable}{ccccccc}
Num. & Predicate & OR & OR Cert. & Deg & Polynomial\\
\hline
\endhead
0 & 1000000000000000 & 1 & $(0,0,0,\bar{x}_1)$ & 1 & $x_{4}+x_{3}+x_{2}+x_{1}\mod 5$\\
1 & 1000000000000001 & 1 & $(0,0,0,\bar{x}_1)$ & 1 & $x_{4}+x_{3}+x_{2}+2x_{1}\mod 5$\\
2 & 1000000100000000 & 1 & $(0,0,0,\bar{x}_1)$ & 1 & $x_{4}+x_{3}+5x_{2}+3x_{1}\mod 7$\\
3 & 1001000000000000 & 1 & $(0,0,0,\bar{x}_1)$ & 1 & $x_{4}+6x_{3}+2x_{2}+2x_{1}\mod 7$\\
4 & 1001000000000100 & 1 & $(0,0,0,\bar{x}_1)$ & 1 & $x_{4}+4x_{3}+2x_{2}+2x_{1}\mod 5$\\
5 & 1001000000000110 & 1 & $(0,0,0,\bar{x}_1)$ & 1 & $6x_{4}+6x_{3}+x_{2}+5x_{1}\mod 12$\\
6 & 1001000000001000 & 2 & $(\bar{x}_1,\bar{x}_1,\bar{x}_2,\bar{x}_2)$ & 2 & $x_{4}+x_{3}+x_{34}+x_{2}+2x_{1}+x_{14}+x_{13}\mod 3$\\
7 & 1001000000001001 & 1 & $(0,0,0,\bar{x}_1)$ & 1 & $x_{4}+4x_{3}+2x_{2}+3x_{1}\mod 5$\\
8 & 1001010000000000 & 1 & $(0,0,0,\bar{x}_1)$ & 1 & $x_{4}+4x_{3}+4x_{2}+3x_{1}\mod 5$\\
9 & 1001010000000010 & 1 & $(0,0,0,\bar{x}_1)$ & 1 & $x_{4}+4x_{3}+4x_{2}+2x_{1}\mod 5$\\
10 & 1001010000100000 & 2 & $(\bar{x}_1,\bar{x}_2,\bar{x}_1,\bar{x}_2)$ & 2 & $x_{4}+x_{3}+x_{34}+x_{2}+x_{24}+x_{1}+x_{13}\mod 3$\\
11 & 1001010000101000 & 2 & $(\bar{x}_1,\bar{x}_1,\bar{x}_2,\bar{x}_2)$ & 2 & $x_{4}+x_{3}+x_{34}+x_{2}+x_{24}+x_{1}+x_{13}+x_{12}\mod 3$\\
12 & 1001010000101001 & 1 & $(0,0,0,\bar{x}_1)$ & 1 & $x_{4}+2x_{3}+2x_{2}+x_{1}\mod 3$\\
13 & 1001010001000000 & 1 & $(0,0,0,\bar{x}_1)$ & 1 & $x_{4}+4x_{3}+4x_{2}+4x_{1}\mod 5$\\
14 & 1001010001000010 & 1 & $(0,0,0,\bar{x}_1)$ & 1 & $4x_{4}+2x_{3}+2x_{2}+2x_{1}\mod 6$\\
15 & 1001011000000000 & 1 & $(0,0,0,\bar{x}_1)$ & 1 & $3x_{4}+3x_{3}+3x_{2}+4x_{1}\mod 6$\\
16 & 1001011001000000 & 2 & $(\bar{x}_1,\bar{x}_1,\bar{x}_2,\bar{x}_2)$ & 2 & $x_{4}+x_{3}+x_{34}+2x_{2}+x_{1}+x_{14}+2x_{13}+2x_{12}\mod 3$\\
17 & 1001011001100000 & 2 & $(\bar{x}_1,\bar{x}_2,\bar{x}_1,\bar{x}_2)$ & 2 & $x_{4}+x_{3}+x_{34}+2x_{2}+2x_{1}\mod 3$\\
18 & 1001011001101000 & 2 & $(\bar{x}_1,\bar{x}_1,\bar{x}_2,\bar{x}_2)$ & 2 & $x_{4}+x_{3}+3x_{34}+x_{2}+3x_{24}+3x_{23}+x_{1}+3x_{14}+3x_{13}+3x_{12}\mod 5$\\
19 & 1001011001101001 & 1 & $(0,0,0,\bar{x}_1)$ & 1 & $3x_{4}+3x_{3}+3x_{2}+3x_{1}\mod 6$\\
20 & 1100000000000000 & 1 & $(0,0,\bar{x}_1,0)$ & 1 & $x_{3}+x_{2}+x_{1}\mod 5$\\
21 & 1100000000000010 & 2 & $(\bar{x}_1,\bar{x}_1,\bar{x}_1,\bar{x}_2)$ & 2 & $x_{3}+x_{2}+2x_{23}+x_{1}+x_{14}+2x_{13}+2x_{12}\mod 3$\\
22 & 1100000000000011 & 1 & $(0,0,\bar{x}_1,0)$ & 1 & $x_{3}+x_{2}+x_{1}\mod 3$\\
23 & 1100001000000000 & 2 & $(0,\bar{x}_1,\bar{x}_1,\bar{x}_2)$ & 2 & $x_{3}+x_{2}+x_{24}+x_{23}+2x_{1}+x_{13}+x_{12}\mod 3$\\
24 & 1100001000000001 & 2 & $(0,\bar{x}_1,\bar{x}_1,\bar{x}_2)$ & 2 & $x_{3}+x_{2}+x_{24}+x_{23}+x_{1}+2x_{13}+2x_{12}\mod 3$\\
25 & 1100001000000010 & 2 & $(0,\bar{x}_1,\bar{x}_1,\bar{x}_2)$ & 2 & $x_{3}+x_{2}+x_{24}+x_{23}+x_{1}+2x_{12}\mod 3$\\
26 & 1100001000010000 & 2 & $(0,\bar{x}_1,\bar{x}_1,\bar{x}_2)$ & 2 & $x_{3}+x_{2}+x_{24}+x_{23}+2x_{1}+2x_{14}+x_{13}+x_{12}\mod 3$\\
27 & 1100001000010010 & 2 & $(0,\bar{x}_1,\bar{x}_1,\bar{x}_2)$ & 2 & $x_{3}+2x_{2}+2x_{24}+2x_{1}+2x_{14}+x_{13}\mod 3$\\
28 & 1100001000100000 & 2 & $(0,\bar{x}_1,\bar{x}_1,\bar{x}_2)$ & 2 & $x_{3}+x_{2}+x_{24}+x_{23}+x_{1}+x_{14}+x_{13}\mod 3$\\
29 & 1100001000100001 & 2 & $(0,\bar{x}_1,\bar{x}_1,\bar{x}_2)$ & 2 & $x_{3}+2x_{2}+2x_{24}+2x_{1}+2x_{14}\mod 3$\\
30 & 1100001000100100 & 2 & $(0,\bar{x}_1,\bar{x}_1,\bar{x}_2)$ & 2 & $x_{3}+x_{34}+x_{2}+x_{23}+x_{1}+x_{14}+x_{13}\mod 3$\\
31 & 1100001000100101 & 2 & $(0,\bar{x}_1,\bar{x}_1,\bar{x}_2)$ & 2 & $x_{3}+x_{2}+x_{24}+x_{23}+x_{1}+x_{14}+x_{13}+2x_{12}\mod 3$\\
32 & 1100001000101000 & 2 & $(0,\bar{x}_1,\bar{x}_1,\bar{x}_2)$ & 2 & $x_{3}+2x_{2}+2x_{24}+2x_{1}+2x_{14}+2x_{12}\mod 3$\\
33 & 1100001000101001 & 2 & $(0,\bar{x}_1,\bar{x}_1,\bar{x}_2)$ & 2 & $x_{3}+x_{2}+x_{24}+3x_{23}+x_{1}+2x_{14}+3x_{13}+3x_{12}\mod 5$\\
34 & 1100001100000000 & 1 & $(0,0,\bar{x}_1,0)$ & 1 & $x_{3}+4x_{2}+2x_{1}\mod 5$\\
35 & 1100001100100000 & 2 & $(\bar{x}_1,0,\bar{x}_1,\bar{x}_2)$ & 2 & $x_{3}+x_{2}+x_{23}+x_{1}+x_{14}+x_{13}+2x_{12}\mod 3$\\
36 & 1100001100100100 & 2 & $(\bar{x}_1,0,\bar{x}_1,\bar{x}_2)$ & 2 & $x_{3}+2x_{2}+x_{1}+x_{14}+x_{13}+2x_{12}\mod 3$\\
37 & 1100001100101000 & 2 & $(\bar{x}_1,0,\bar{x}_1,\bar{x}_2)$ & 2 & $x_{3}+x_{2}+x_{23}+2x_{1}+2x_{14}\mod 3$\\
38 & 1100001100110000 & 1 & $(0,0,\bar{x}_1,0)$ & 1 & $x_{3}+2x_{2}+2x_{1}\mod 3$\\
39 & 1100001100111000 & 2 & $(1,\bar{x}_1,\bar{x}_1,\bar{x}_2)$ & 2 & $x_{3}+x_{34}+2x_{2}+4x_{24}+2x_{23}+3x_{1}+4x_{14}+x_{13}\mod 5$\\
40 & 1100001100111100 & 1 & $(0,0,\bar{x}_1,0)$ & 1 & $3x_{3}+3x_{2}+3x_{1}\mod 6$\\
41 & 1110000000000000 & 2 & $(0,0,\bar{x}_1,\bar{x}_2)$ & 2 & $x_{34}+x_{2}+x_{1}+2x_{12}\mod 3$\\
42 & 1110000000000001 & 2 & $(0,0,\bar{x}_1,\bar{x}_2)$ & 2 & $x_{34}+x_{2}+x_{1}\mod 3$\\
43 & 1110000000000100 & 2 & $(0,0,\bar{x}_1,\bar{x}_2)$ & 2 & $x_{34}+x_{2}+2x_{1}+2x_{14}+x_{12}\mod 3$\\
44 & 1110000000000101 & 2 & $(0,0,\bar{x}_1,\bar{x}_2)$ & 2 & $x_{34}+2x_{2}+2x_{24}+2x_{1}+2x_{14}+2x_{13}+x_{12}\mod 3$\\
45 & 1110000000000110 & 2 & $(0,0,\bar{x}_1,\bar{x}_2)$ & 2 & $x_{34}+x_{2}+x_{1}+x_{14}+x_{13}\mod 3$\\
46 & 1110000000000111 & 2 & $(0,0,\bar{x}_1,\bar{x}_2)$ & 2 & $x_{34}+x_{2}+2x_{1}+2x_{14}+2x_{13}+x_{12}\mod 3$\\
47 & 1110000000001000 & 2 & $(0,0,\bar{x}_1,\bar{x}_2)$ & 2 & $x_{34}+x_{2}+2x_{1}+2x_{14}+2x_{13}\mod 3$\\
48 & 1110000000001001 & 2 & $(0,0,\bar{x}_1,\bar{x}_2)$ & 2 & $x_{34}+x_{2}+x_{1}+x_{14}+x_{13}+x_{12}\mod 3$\\
49 & 1110000000001100 & 2 & $(0,0,\bar{x}_1,\bar{x}_2)$ & 2 & $x_{34}+x_{2}+x_{24}+x_{23}+2x_{1}+2x_{14}\mod 3$\\
50 & 1110000000001101 & 2 & $(0,0,\bar{x}_1,\bar{x}_2)$ & 2 & $x_{34}+x_{2}+2x_{1}+2x_{13}\mod 3$\\
51 & 1110000000001110 & 2 & $(0,0,\bar{x}_1,\bar{x}_2)$ & 2 & $x_{34}+x_{2}+x_{1}+x_{12}\mod 3$\\
52 & 1110000100000000 & 2 & $(0,0,\bar{x}_1,\bar{x}_2)$ & 2 & $x_{34}+x_{2}+x_{23}+2x_{1}+2x_{13}+x_{12}\mod 3$\\
53 & 1110000100000001 & 2 & $(0,0,\bar{x}_1,\bar{x}_2)$ & 2 & $x_{34}+x_{2}+x_{23}+x_{1}+2x_{12}\mod 3$\\
54 & 1110000100000100 & 2 & $(0,0,\bar{x}_1,\bar{x}_2)$ & 2 & $x_{34}+x_{2}+x_{23}+2x_{1}+2x_{14}+x_{12}\mod 3$\\
55 & 1110000100000101 & 2 & $(0,0,\bar{x}_1,\bar{x}_2)$ & 2 & $x_{34}+x_{2}+x_{23}+x_{1}+x_{14}+x_{13}\mod 3$\\
56 & 1110000100000110 & 2 & $(0,0,\bar{x}_1,\bar{x}_2)$ & 2 & $x_{34}+2x_{2}+x_{1}+x_{14}+x_{13}+2x_{12}\mod 3$\\
57 & 1110000100001000 & 2 & $(0,0,\bar{x}_1,\bar{x}_2)$ & 2 & $x_{34}+x_{2}+x_{23}+x_{1}+x_{14}+x_{13}+x_{12}\mod 3$\\
58 & 1110000100001001 & 2 & $(0,0,\bar{x}_1,\bar{x}_2)$ & 2 & $x_{34}+2x_{2}+x_{1}+x_{14}+x_{13}\mod 3$\\
59 & 1110000100001100 & 2 & $(0,0,\bar{x}_1,\bar{x}_2)$ & 2 & $x_{34}+x_{2}+x_{23}+x_{1}+x_{12}\mod 3$\\
60 & 1110000100001110 & 2 & $(0,0,\bar{x}_1,\bar{x}_2)$ & 2 & $x_{34}+x_{2}+x_{23}+2x_{1}+2x_{13}\mod 3$\\
61 & 1110000100010000 & 2 & $(0,0,\bar{x}_1,\bar{x}_2)$ & 2 & $x_{34}+x_{2}+x_{23}+x_{1}+x_{13}\mod 3$\\
62 & 1110000100010001 & 2 & $(0,0,\bar{x}_1,\bar{x}_2)$ & 2 & $x_{34}+x_{2}+x_{1}+x_{12}\mod 2$\\
63 & 1110000100010100 & 2 & $(0,0,\bar{x}_1,\bar{x}_2)$ & 2 & $x_{34}+x_{2}+x_{24}+x_{1}+x_{14}+2x_{12}\mod 3$\\
64 & 1110000100010110 & 2 & $(0,0,\bar{x}_1,\bar{x}_2)$ & 2 & $x_{34}+x_{2}+x_{23}+x_{1}+x_{14}\mod 3$\\
65 & 1110000100011000 & 2 & $(0,0,\bar{x}_1,\bar{x}_2)$ & 2 & $x_{34}+x_{2}+3x_{23}+x_{1}+2x_{14}+x_{13}+3x_{12}\mod 5$\\
66 & 1110000100011001 & 2 & $(0,0,\bar{x}_1,\bar{x}_2)$ & 2 & $x_{34}+x_{2}+x_{23}+x_{1}+x_{14}+x_{12}\mod 3$\\
67 & 1110000100011100 & 2 & $(0,0,\bar{x}_1,\bar{x}_2)$ & 2 & $x_{34}+x_{2}+x_{23}+2x_{1}\mod 3$\\
68 & 1110000100011110 & 2 & $(0,0,\bar{x}_1,\bar{x}_2)$ & 2 & $x_{34}+x_{2}+x_{1}\mod 2$\\
69 & 1110010000000000 & 2 & $(0,0,\bar{x}_1,\bar{x}_2)$ & 2 & $x_{34}+x_{2}+2x_{24}+x_{1}\mod 3$\\
70 & 1110010000000010 & 2 & $(0,0,\bar{x}_1,\bar{x}_2)$ & 2 & $x_{34}+x_{2}+2x_{24}+x_{1}+x_{14}+x_{13}\mod 3$\\
71 & 1110010000000011 & 2 & $(0,0,\bar{x}_1,\bar{x}_2)$ & 2 & $x_{34}+x_{2}+2x_{24}+x_{23}+x_{1}\mod 3$\\
72 & 1110010000010000 & 2 & $(0,0,\bar{x}_1,\bar{x}_2)$ & 2 & $x_{34}+x_{2}+2x_{24}+x_{1}+x_{14}+2x_{12}\mod 3$\\
73 & 1110010000010001 & 2 & $(0,0,\bar{x}_1,\bar{x}_2)$ & 2 & $x_{34}+x_{2}+2x_{24}+x_{1}+x_{14}\mod 3$\\
74 & 1110010000010010 & 2 & $(0,0,\bar{x}_1,\bar{x}_2)$ & 2 & $x_{34}+x_{2}+2x_{24}+x_{23}+x_{1}+x_{14}\mod 3$\\
75 & 1110010000010011 & 2 & $(0,0,\bar{x}_1,\bar{x}_2)$ & 2 & $x_{34}+x_{2}+2x_{24}+x_{1}+x_{13}\mod 3$\\
76 & 1110010000011000 & 2 & $(0,0,\bar{x}_1,\bar{x}_2)$ & 2 & $x_{34}+x_{2}+2x_{24}+x_{1}+x_{13}+x_{12}\mod 3$\\
77 & 1110010000011010 & 2 & $(0,0,\bar{x}_1,\bar{x}_2)$ & 2 & $x_{34}+2x_{2}+x_{24}+2x_{1}+2x_{12}\mod 3$\\
78 & 1110010000011011 & 2 & $(0,0,\bar{x}_1,\bar{x}_2)$ & 2 & $x_{34}+x_{2}+x_{24}+x_{1}\mod 2$\\
79 & 1110010000100000 & 2 & $(0,0,\bar{x}_1,\bar{x}_2)$ & 2 & $x_{34}+x_{2}+2x_{24}+x_{1}+2x_{13}\mod 3$\\
80 & 1110010000100001 & 2 & $(0,0,\bar{x}_1,\bar{x}_2)$ & 2 & $x_{34}+x_{2}+2x_{24}+x_{23}+x_{1}+x_{14}+2x_{13}\mod 3$\\
81 & 1110010000100100 & 2 & $(0,0,\bar{x}_1,\bar{x}_2)$ & 2 & $x_{34}+x_{2}+2x_{24}+2x_{1}+x_{13}+x_{12}\mod 3$\\
82 & 1110010000100110 & 2 & $(0,0,\bar{x}_1,\bar{x}_2)$ & 2 & $x_{34}+2x_{2}+x_{24}+2x_{1}+x_{13}+x_{12}\mod 3$\\
83 & 1110010000100111 & 2 & $(0,0,\bar{x}_1,\bar{x}_2)$ & 2 & $x_{34}+x_{2}+x_{24}+x_{1}+x_{13}+x_{12}\mod 2$\\
84 & 1110010000101000 & 2 & $(0,0,\bar{x}_1,\bar{x}_2)$ & 2 & $x_{34}+x_{2}+x_{24}+x_{1}+x_{13}\mod 2$\\
85 & 1110010000101001 & 2 & $(0,0,\bar{x}_1,\bar{x}_2)$ & 2 & $x_{34}+2x_{2}+x_{24}+2x_{1}+x_{13}+2x_{12}\mod 3$\\
86 & 1110011000000000 & 2 & $(0,0,\bar{x}_1,\bar{x}_2)$ & 2 & $x_{34}+2x_{2}+x_{24}+x_{23}+x_{1}+x_{12}\mod 3$\\
87 & 1110011000000001 & 2 & $(0,0,\bar{x}_1,\bar{x}_2)$ & 2 & $x_{34}+2x_{2}+x_{24}+x_{23}+2x_{1}+2x_{13}\mod 3$\\
88 & 1110011000000100 & 2 & $(0,0,\bar{x}_1,\bar{x}_2)$ & 2 & $x_{34}+2x_{2}+x_{24}+x_{23}+2x_{1}+2x_{13}+x_{12}\mod 3$\\
89 & 1110011000000101 & 2 & $(0,0,\bar{x}_1,\bar{x}_2)$ & 2 & $x_{34}+2x_{2}+3x_{24}+3x_{23}+x_{1}+x_{13}+4x_{12}\mod 5$\\
90 & 1110011000000110 & 2 & $(0,0,\bar{x}_1,\bar{x}_2)$ & 2 & $x_{34}+2x_{2}+x_{24}+x_{23}+x_{1}+2x_{12}\mod 3$\\
91 & 1110011000000111 & 2 & $(0,0,\bar{x}_1,\bar{x}_2)$ & 2 & $x_{34}+2x_{2}+x_{24}+x_{23}+x_{1}+x_{14}+x_{13}+x_{12}\mod 3$\\
92 & 1110011000001000 & 2 & $(0,0,\bar{x}_1,\bar{x}_2)$ & 2 & $x_{34}+2x_{2}+x_{24}+x_{23}+x_{1}+x_{14}+x_{13}\mod 3$\\
93 & 1110011000001001 & 2 & $(0,0,\bar{x}_1,\bar{x}_2)$ & 2 & $x_{34}+2x_{2}+x_{24}+x_{23}+x_{1}\mod 3$\\
94 & 1110011000001100 & 2 & $(0,0,\bar{x}_1,\bar{x}_2)$ & 2 & $x_{34}+2x_{2}+x_{24}+x_{23}+2x_{1}+2x_{14}+2x_{12}\mod 3$\\
95 & 1110011000001110 & 2 & $(0,0,\bar{x}_1,\bar{x}_2)$ & 2 & $x_{34}+2x_{2}+x_{24}+x_{23}+2x_{1}+2x_{14}+2x_{13}+2x_{12}\mod 3$\\
96 & 1110011000010000 & 2 & $(0,0,\bar{x}_1,\bar{x}_2)$ & 2 & $x_{34}+2x_{2}+x_{24}+x_{23}+2x_{1}\mod 3$\\
97 & 1110011000010001 & 2 & $(0,0,\bar{x}_1,\bar{x}_2)$ & 2 & $x_{34}+2x_{2}+3x_{24}+3x_{23}+2x_{1}+2x_{13}+2x_{12}\mod 5$\\
98 & 1110011000010100 & 2 & $(0,0,\bar{x}_1,\bar{x}_2)$ & 2 & $x_{34}+2x_{2}+x_{24}+x_{23}+x_{1}+x_{14}+x_{12}\mod 3$\\
99 & 1110011000010101 & 2 & $(0,0,\bar{x}_1,\bar{x}_2)$ & 2 & $x_{34}+2x_{2}+x_{24}+x_{23}+x_{1}+x_{13}+2x_{12}\mod 3$\\
100 & 1110011000010110 & 2 & $(0,0,\bar{x}_1,\bar{x}_2)$ & 2 & $x_{34}+2x_{2}+x_{24}+x_{23}+2x_{1}+x_{12}\mod 3$\\
101 & 1110011000011000 & 2 & $(0,0,\bar{x}_1,\bar{x}_2)$ & 2 & $x_{34}+2x_{2}+x_{24}+x_{23}+x_{1}+x_{13}\mod 3$\\
102 & 1110011000011001 & 2 & $(0,0,\bar{x}_1,\bar{x}_2)$ & 2 & $x_{34}+2x_{2}+x_{24}+x_{23}+2x_{1}+2x_{12}\mod 3$\\
103 & 1110011000011100 & 2 & $(0,0,\bar{x}_1,\bar{x}_2)$ & 2 & $x_{34}+2x_{2}+3x_{24}+3x_{23}+2x_{1}+2x_{14}+x_{12}\mod 5$\\
104 & 1110011000011110 & 2 & $(0,0,\bar{x}_1,\bar{x}_2)$ & 2 & $x_{34}+3x_{2}+2x_{24}+2x_{23}+3x_{1}+3x_{14}+3x_{13}+4x_{12}\mod 5$\\
105 & 1110011100000000 & 2 & $(0,0,\bar{x}_1,\bar{x}_2)$ & 2 & $x_{34}+x_{2}+2x_{24}+2x_{23}+x_{1}\mod 3$\\
106 & 1110011100010000 & 2 & $(0,0,\bar{x}_1,\bar{x}_2)$ & 2 & $x_{34}+x_{2}+2x_{24}+2x_{23}+x_{1}+x_{13}\mod 3$\\
107 & 1110011100011000 & 2 & $(0,0,\bar{x}_1,\bar{x}_2)$ & 2 & $x_{34}+x_{2}+x_{24}+x_{23}+x_{1}\mod 2$\\
108 & 1110100000000000 & 2 & $(0,0,\bar{x}_1,\bar{x}_2)$ & 2 & $x_{34}+x_{24}+2x_{23}+x_{1}+x_{14}+x_{13}+x_{12}\mod 3$\\
109 & 1110100000000001 & 2 & $(0,0,\bar{x}_1,\bar{x}_2)$ & 2 & $x_{34}+x_{24}+2x_{23}+2x_{1}+2x_{14}+2x_{13}+2x_{12}\mod 3$\\
110 & 1110100000010000 & 2 & $(0,0,\bar{x}_1,\bar{x}_2)$ & 2 & $x_{34}+x_{24}+2x_{23}+x_{1}+x_{14}+x_{12}\mod 3$\\
111 & 1110100000010001 & 2 & $(0,0,\bar{x}_1,\bar{x}_2)$ & 2 & $x_{34}+x_{24}+2x_{23}+x_{1}+x_{13}\mod 3$\\
112 & 1110100000010100 & 2 & $(0,0,\bar{x}_1,\bar{x}_2)$ & 2 & $x_{34}+x_{24}+2x_{23}+x_{1}+x_{13}+x_{12}\mod 3$\\
113 & 1110100000010101 & 2 & $(0,0,\bar{x}_1,\bar{x}_2)$ & 2 & $x_{34}+x_{24}+2x_{23}+2x_{1}\mod 3$\\
114 & 1110100000010110 & 2 & $(0,0,\bar{x}_1,\bar{x}_2)$ & 2 & $x_{34}+x_{24}+x_{23}+2x_{1}+x_{14}+x_{13}+x_{12}\mod 5$\\
115 & 1110100000010111 & 2 & $(0,0,\bar{x}_1,\bar{x}_2)$ & 2 & $x_{34}+x_{24}+x_{23}+x_{1}\mod 2$\\
116 & 1110100001000000 & 2 & $(0,0,\bar{x}_1,\bar{x}_2)$ & 2 & $x_{34}+x_{24}+2x_{23}+x_{1}+2x_{14}+x_{12}\mod 3$\\
117 & 1110100001000001 & 2 & $(0,0,\bar{x}_1,\bar{x}_2)$ & 2 & $x_{34}+x_{24}+2x_{23}+x_{1}+2x_{14}+x_{13}+x_{12}\mod 3$\\
118 & 1110100001000010 & 2 & $(0,0,\bar{x}_1,\bar{x}_2)$ & 2 & $x_{34}+x_{24}+x_{23}+x_{1}+x_{14}\mod 2$\\
119 & 1110100001000011 & 2 & $(0,0,\bar{x}_1,\bar{x}_2)$ & 2 & $x_{34}+x_{24}+x_{23}+2x_{1}+3x_{14}+2x_{12}\mod 5$\\
120 & 1110100001010000 & 2 & $(0,0,\bar{x}_1,\bar{x}_2)$ & 2 & $x_{34}+2x_{24}+x_{23}+2x_{1}+x_{14}+2x_{13}+2x_{12}\mod 3$\\
121 & 1110100001010001 & 2 & $(0,0,\bar{x}_1,\bar{x}_2)$ & 2 & $x_{34}+2x_{24}+x_{23}+2x_{1}+x_{14}+2x_{13}\mod 3$\\
122 & 1110100001010010 & 2 & $(0,0,\bar{x}_1,\bar{x}_2)$ & 2 & $x_{34}+2x_{24}+2x_{23}+2x_{1}+x_{14}+2x_{13}\mod 3$\\
123 & 1110100001010011 & 2 & $(0,0,\bar{x}_1,\bar{x}_2)$ & 2 & $x_{34}+x_{24}+2x_{23}+2x_{1}+x_{14}+2x_{13}\mod 3$\\
124 & 1110100001010100 & 2 & $(0,0,\bar{x}_1,\bar{x}_2)$ & 2 & $x_{34}+x_{24}+2x_{23}+2x_{1}+x_{14}+2x_{13}+2x_{12}\mod 3$\\
125 & 1110100001010110 & 2 & $(0,0,\bar{x}_1,\bar{x}_2)$ & 2 & $x_{34}+x_{24}+4x_{23}+3x_{1}+2x_{14}+4x_{13}+4x_{12}\mod 5$\\
126 & 1110100001100000 & 2 & $(0,0,\bar{x}_1,\bar{x}_2)$ & 2 & $x_{34}+x_{24}+2x_{23}+2x_{1}+x_{14}+x_{13}\mod 3$\\
127 & 1110100001100001 & 2 & $(0,0,\bar{x}_1,\bar{x}_2)$ & 2 & $x_{34}+2x_{24}+2x_{23}+2x_{1}+x_{14}+x_{13}\mod 3$\\
128 & 1110100001100100 & 2 & $(0,0,\bar{x}_1,\bar{x}_2)$ & 2 & $x_{34}+x_{24}+2x_{23}+2x_{1}+x_{14}+x_{13}+2x_{12}\mod 3$\\
129 & 1110100001100101 & 2 & $(0,0,\bar{x}_1,\bar{x}_2)$ & 2 & $x_{34}+x_{24}+2x_{23}+3x_{1}+2x_{14}+2x_{13}+4x_{12}\mod 5$\\
130 & 1110100001100110 & 2 & $(0,0,\bar{x}_1,\bar{x}_2)$ & 2 & $x_{34}+x_{24}+x_{23}+3x_{1}+2x_{14}+2x_{13}+4x_{12}\mod 5$\\
131 & 1110100001100111 & 2 & $(0,0,\bar{x}_1,\bar{x}_2)$ & 2 & $x_{34}+x_{24}+x_{23}+2x_{1}+3x_{14}+3x_{13}+4x_{12}\mod 5$\\
132 & 1110100001101000 & 2 & $(0,0,\bar{x}_1,\bar{x}_2)$ & 2 & $x_{34}+x_{24}+x_{23}+2x_{1}+3x_{14}+3x_{13}+3x_{12}\mod 5$\\
133 & 1110100001101001 & 2 & $(0,0,\bar{x}_1,\bar{x}_2)$ & 2 & $x_{34}+x_{24}+4x_{23}+3x_{1}+2x_{14}+2x_{13}+2x_{12}\mod 5$\\
134 & 1110100001110000 & 2 & $(0,0,\bar{x}_1,\bar{x}_2)$ & 2 & $x_{34}+2x_{24}+2x_{23}+x_{1}+2x_{14}+2x_{13}\mod 3$\\
135 & 1110100001110001 & 2 & $(0,0,\bar{x}_1,\bar{x}_2)$ & 2 & $x_{34}+x_{24}+x_{23}+x_{1}+x_{14}+x_{13}\mod 2$\\
136 & 1110100001110100 & 2 & $(0,0,\bar{x}_1,\bar{x}_2)$ & 2 & $x_{34}+2x_{24}+x_{23}+x_{1}+2x_{14}+2x_{13}+x_{12}\mod 3$\\
137 & 1110100001110110 & 2 & $(0,0,\bar{x}_1,\bar{x}_2)$ & 2 & $x_{34}+2x_{24}+2x_{23}+x_{1}+2x_{14}+2x_{13}+x_{12}\mod 3$\\
138 & 1110100001111000 & 2 & $(0,0,\bar{x}_1,\bar{x}_2)$ & 2 & $x_{34}+2x_{24}+3x_{23}+x_{1}+4x_{14}+4x_{13}+4x_{12}\mod 5$\\
139 & 1110100001111100 & 2 & $(0,0,\bar{x}_1,\bar{x}_2)$ & 2 & $x_{34}+x_{24}+2x_{23}+x_{1}+2x_{14}+2x_{13}+2x_{12}\mod 3$\\
140 & 1110100001111110 & 2 & $(0,0,\bar{x}_1,\bar{x}_2)$ & 2 & $x_{34}+x_{24}+x_{23}+x_{1}+x_{14}+x_{13}+x_{12}\mod 2$\\
141 & 1110100010000000 & 2 & $(0,0,\bar{x}_1,\bar{x}_2)$ & 2 & $x_{34}+x_{24}+2x_{23}+2x_{14}+x_{13}+x_{12}\mod 3$\\
142 & 1110100010000001 & 2 & $(0,0,\bar{x}_1,\bar{x}_2)$ & 2 & $x_{34}+x_{24}+x_{23}+x_{14}+x_{13}+x_{12}\mod 2$\\
143 & 1110100100000000 & 2 & $(0,0,\bar{x}_1,\bar{x}_2)$ & 2 & $x_{34}+x_{24}+x_{23}+x_{1}\mod 3$\\
144 & 1110100100000001 & 2 & $(0,0,\bar{x}_1,\bar{x}_2)$ & 2 & $x_{34}+x_{24}+x_{23}+2x_{1}+2x_{13}+2x_{12}\mod 3$\\
145 & 1110100100010000 & 2 & $(0,0,\bar{x}_1,\bar{x}_2)$ & 2 & $x_{34}+x_{24}+x_{23}+2x_{1}+2x_{12}\mod 3$\\
146 & 1110100100010001 & 2 & $(0,0,\bar{x}_1,\bar{x}_2)$ & 2 & $x_{34}+x_{24}+3x_{23}+x_{1}+3x_{13}+x_{12}\mod 5$\\
147 & 1110100100010100 & 2 & $(0,0,\bar{x}_1,\bar{x}_2)$ & 2 & $x_{34}+x_{24}+x_{23}+x_{1}+x_{14}\mod 3$\\
148 & 1110100100010101 & 2 & $(0,0,\bar{x}_1,\bar{x}_2)$ & 2 & $x_{34}+x_{24}+x_{23}+x_{1}+x_{13}+x_{12}\mod 3$\\
149 & 1110100100010110 & 2 & $(0,0,\bar{x}_1,\bar{x}_2)$ & 2 & $x_{34}+x_{24}+x_{23}+2x_{1}\mod 3$\\
150 & 1110100101000000 & 2 & $(0,0,\bar{x}_1,\bar{x}_2)$ & 2 & $x_{34}+x_{24}+x_{23}+x_{1}+2x_{14}+x_{13}+x_{12}\mod 3$\\
151 & 1110100101000001 & 2 & $(0,0,\bar{x}_1,\bar{x}_2)$ & 2 & $x_{34}+x_{24}+x_{23}+x_{1}+2x_{14}\mod 3$\\
152 & 1110100101000010 & 2 & $(0,0,\bar{x}_1,\bar{x}_2)$ & 2 & $x_{34}+x_{24}+x_{23}+x_{1}+2x_{14}+x_{12}\mod 3$\\
153 & 1110100101000011 & 2 & $(0,0,\bar{x}_1,\bar{x}_2)$ & 2 & $x_{34}+x_{24}+x_{23}+2x_{1}+x_{14}\mod 3$\\
154 & 1110100101010000 & 2 & $(0,0,\bar{x}_1,\bar{x}_2)$ & 2 & $x_{34}+x_{24}+x_{23}+2x_{1}+x_{14}+2x_{13}\mod 3$\\
155 & 1110100101010001 & 2 & $(0,0,\bar{x}_1,\bar{x}_2)$ & 2 & $x_{34}+x_{24}+3x_{23}+3x_{1}+2x_{14}+4x_{13}+x_{12}\mod 5$\\
156 & 1110100101010010 & 2 & $(0,0,\bar{x}_1,\bar{x}_2)$ & 2 & $x_{34}+x_{24}+3x_{23}+3x_{1}+2x_{14}+4x_{13}\mod 5$\\
157 & 1110100101010011 & 2 & $(0,0,\bar{x}_1,\bar{x}_2)$ & 2 & $x_{34}+x_{24}+3x_{23}+2x_{1}+3x_{14}+4x_{13}+x_{12}\mod 5$\\
158 & 1110100101010100 & 2 & $(0,0,\bar{x}_1,\bar{x}_2)$ & 2 & $x_{34}+x_{24}+x_{23}+2x_{1}+x_{14}+2x_{13}+2x_{12}\mod 3$\\
159 & 1110100101010110 & 2 & $(0,0,\bar{x}_1,\bar{x}_2)$ & 2 & $x_{34}+x_{24}+3x_{23}+4x_{1}+x_{14}+4x_{13}+4x_{12}\mod 5$\\
160 & 1110100101100000 & 2 & $(0,0,\bar{x}_1,\bar{x}_2)$ & 2 & $x_{34}+x_{24}+x_{23}+2x_{1}+x_{14}+x_{13}\mod 3$\\
161 & 1110100101100001 & 2 & $(0,0,\bar{x}_1,\bar{x}_2)$ & 2 & $x_{34}+x_{24}+3x_{23}+3x_{1}+2x_{14}+2x_{13}+3x_{12}\mod 5$\\
162 & 1110100101100100 & 2 & $(0,0,\bar{x}_1,\bar{x}_2)$ & 2 & $x_{34}+x_{24}+3x_{23}+2x_{1}+3x_{14}+3x_{13}+4x_{12}\mod 5$\\
163 & 1110100101100101 & 2 & $(0,0,\bar{x}_1,\bar{x}_2)$ & 2 & $x_{34}+x_{24}+3x_{23}+4x_{1}+x_{14}+x_{13}+4x_{12}\mod 5$\\
164 & 1110100101100110 & 2 & $(0,0,\bar{x}_1,\bar{x}_2)$ & 2 & $2x_{34}+4x_{24}+4x_{23}+3x_{1}+2x_{14}+2x_{13}+x_{12}\mod 5$\\
165 & 1110100101100111 & 2 & $(0,0,\bar{x}_1,\bar{x}_2)$ & 2 & $2x_{34}+2x_{24}+2x_{23}+x_{1}+2x_{14}+2x_{13}+x_{12}\mod 3$\\
166 & 1110100101101000 & 2 & $(0,0,\bar{x}_1,\bar{x}_2)$ & 2 & $x_{34}+x_{24}+x_{23}+2x_{1}+x_{14}+x_{13}+x_{12}\mod 3$\\
167 & 1110100101101001 & 2 & $(0,0,\bar{x}_1,\bar{x}_2)$ & 2 & $6x_{34}+6x_{24}+18x_{23}+15x_{1}+15x_{14}+15x_{13}+15x_{12}\mod 30$\\
168 & 1110100101110000 & 2 & $(0,0,\bar{x}_1,\bar{x}_2)$ & 2 & $x_{34}+x_{24}+x_{23}+x_{1}+2x_{14}+2x_{13}\mod 3$\\
169 & 1110100101110001 & 2 & $(0,0,\bar{x}_1,\bar{x}_2)$ & 2 & $x_{34}+x_{24}+x_{23}+x_{1}+2x_{14}+2x_{13}+x_{12}\mod 3$\\
170 & 1110100101110100 & 2 & $(0,0,\bar{x}_1,\bar{x}_2)$ & 2 & $x_{34}+3x_{24}+x_{23}+x_{1}+4x_{14}+4x_{13}+2x_{12}\mod 5$\\
171 & 1110100101110110 & 2 & $(0,0,\bar{x}_1,\bar{x}_2)$ & 2 & $2x_{34}+4x_{24}+4x_{23}+2x_{1}+3x_{14}+3x_{13}+x_{12}\mod 5$\\
172 & 1110100101111000 & 2 & $(0,0,\bar{x}_1,\bar{x}_2)$ & 2 & $x_{34}+2x_{24}+2x_{23}+x_{1}+4x_{14}+4x_{13}+4x_{12}\mod 5$\\
173 & 1110100101111001 & 2 & $(0,0,\bar{x}_1,\bar{x}_2)$ & 2 & $6x_{34}+x_{24}+5x_{23}+6x_{1}+6x_{14}+6x_{13}+6x_{12}\mod 12$\\
174 & 1110100101111100 & 2 & $(0,0,\bar{x}_1,\bar{x}_2)$ & 2 & $x_{34}+x_{24}+3x_{23}+x_{1}+4x_{14}+4x_{13}+4x_{12}\mod 5$\\
175 & 1110100101111110 & 2 & $(0,0,\bar{x}_1,\bar{x}_2)$ & 2 & $4x_{34}+4x_{24}+4x_{23}+4x_{1}+2x_{14}+2x_{13}+2x_{12}\mod 6$\\
176 & 1110100110000000 & 2 & $(0,0,\bar{x}_1,\bar{x}_2)$ & 2 & $x_{34}+x_{24}+x_{23}+x_{14}+2x_{13}+2x_{12}\mod 3$\\
177 & 1110100110000001 & 2 & $(0,0,\bar{x}_1,\bar{x}_2)$ & 2 & $x_{34}+x_{24}+x_{23}+2x_{14}+2x_{13}+2x_{12}\mod 3$\\
178 & 1110100110010000 & 2 & $(0,0,\bar{x}_1,\bar{x}_2)$ & 2 & $x_{34}+x_{24}+x_{23}+x_{14}+x_{13}+2x_{12}\mod 3$\\
179 & 1110100110010001 & 2 & $(0,0,\bar{x}_1,\bar{x}_2)$ & 2 & $x_{34}+x_{24}+3x_{23}+x_{14}+3x_{13}+x_{12}\mod 5$\\
180 & 1110100110010100 & 2 & $(0,0,\bar{x}_1,\bar{x}_2)$ & 2 & $x_{34}+x_{24}+3x_{23}+x_{14}+3x_{13}+3x_{12}\mod 5$\\
181 & 1110100110010101 & 2 & $(0,0,\bar{x}_1,\bar{x}_2)$ & 3 & $x_{34}+x_{24}+x_{23}+x_{14}+x_{13}+2x_{12}+2x_{124}\mod 3$\\
182 & 1110100110010110 & 2 & $(0,0,\bar{x}_1,\bar{x}_2)$ & 2 & $x_{34}+x_{24}+28x_{23}+13x_{14}+16x_{13}+16x_{12}\mod 30$\\
183 & 1110100110010111 & 2 & $(0,0,\bar{x}_1,\bar{x}_2)$ & 2 & $x_{34}+x_{24}+x_{23}+x_{14}+x_{13}+x_{12}\mod 3$\\
184 & 1111000000000000 & 1 & $(0,\bar{x}_1,0,0)$ & 1 & $x_{2}+x_{1}\mod 3$\\
185 & 1111000000001000 & 2 & $(\bar{x}_1,\bar{x}_1,0,\bar{x}_2)$ & 2 & $x_{2}+x_{23}+x_{1}+x_{14}+x_{12}\mod 3$\\
186 & 1111000000001001 & 2 & $(\bar{x}_1,\bar{x}_1,0,\bar{x}_2)$ & 2 & $x_{2}+x_{23}+2x_{1}+2x_{14}\mod 3$\\
187 & 1111000000001100 & 2 & $(\bar{x}_1,\bar{x}_1,\bar{x}_2,0)$ & 2 & $x_{2}+x_{1}+x_{13}+x_{12}\mod 3$\\
188 & 1111000000001110 & 3 & $(\bar{x}_1,\bar{x}_1,\bar{x}_2,\bar{x}_3)$ & 3 & $x_{2}+x_{1}+x_{134}+x_{12}\mod 3$\\
189 & 1111000000001111 & 1 & $(0,\bar{x}_1,0,0)$ & 1 & $x_{2}+x_{1}\mod 2$\\
190 & 1111100000000000 & 2 & $(0,\bar{x}_1,0,\bar{x}_2)$ & 2 & $x_{24}+x_{23}+2x_{1}+2x_{13}+2x_{12}\mod 3$\\
191 & 1111100000000001 & 2 & $(0,\bar{x}_1,0,\bar{x}_2)$ & 2 & $x_{24}+x_{23}+x_{1}\mod 3$\\
192 & 1111100000000100 & 2 & $(0,\bar{x}_1,0,\bar{x}_2)$ & 2 & $x_{24}+x_{23}+x_{1}+x_{13}+x_{12}\mod 3$\\
193 & 1111100000000101 & 2 & $(0,\bar{x}_1,0,\bar{x}_2)$ & 2 & $x_{24}+x_{23}+2x_{1}+2x_{13}\mod 3$\\
194 & 1111100000000110 & 2 & $(0,\bar{x}_1,0,\bar{x}_2)$ & 2 & $x_{24}+x_{23}+x_{1}+x_{12}\mod 3$\\
195 & 1111100000000111 & 3 & $(\bar{x}_1,\bar{x}_1,\bar{x}_2,\bar{x}_3)$ & 3 & $x_{24}+x_{23}+x_{234}+x_{1}\mod 2$\\
196 & 1111100000001000 & 2 & $(0,\bar{x}_1,0,\bar{x}_2)$ & 2 & $x_{24}+x_{23}+x_{1}+2x_{12}\mod 3$\\
197 & 1111100000001001 & 2 & $(0,\bar{x}_1,0,\bar{x}_2)$ & 2 & $x_{24}+x_{23}+x_{1}+x_{13}+2x_{12}\mod 3$\\
198 & 1111100000001100 & 2 & $(0,\bar{x}_1,0,\bar{x}_2)$ & 2 & $x_{24}+x_{23}+2x_{1}+2x_{14}+x_{12}\mod 3$\\
199 & 1111100000001101 & 3 & $(\bar{x}_1,\bar{x}_1,\bar{x}_2,\bar{x}_3)$ & 3 & $x_{24}+x_{23}+x_{234}+x_{1}+x_{12}+x_{124}\mod 2$\\
200 & 1111100000001110 & 3 & $(\bar{x}_1,\bar{x}_1,\bar{x}_2,\bar{x}_3)$ & 3 & $x_{24}+x_{23}+x_{234}+x_{1}+x_{12}+x_{124}+x_{123}\mod 2$\\
201 & 1111100000001111 & 2 & $(0,\bar{x}_1,0,\bar{x}_2)$ & 2 & $x_{24}+x_{23}+3x_{1}+4x_{14}+4x_{13}+2x_{12}\mod 5$\\
202 & 1111100000010000 & 2 & $(0,\bar{x}_1,0,\bar{x}_2)$ & 2 & $x_{24}+x_{23}+2x_{1}+2x_{14}+2x_{13}\mod 3$\\
203 & 1111100000010100 & 2 & $(0,\bar{x}_1,0,\bar{x}_2)$ & 2 & $x_{24}+x_{23}+x_{1}+x_{14}+3x_{13}+2x_{12}\mod 5$\\
204 & 1111100000010110 & 2 & $(0,\bar{x}_1,0,\bar{x}_2)$ & 2 & $x_{24}+x_{23}+x_{1}+x_{14}+x_{13}\mod 3$\\
205 & 1111100000011000 & 2 & $(0,\bar{x}_1,0,\bar{x}_2)$ & 2 & $x_{24}+x_{23}+x_{1}+x_{14}+x_{13}+2x_{12}\mod 3$\\
206 & 1111100000011001 & 2 & $(0,\bar{x}_1,0,\bar{x}_2)$ & 2 & $x_{24}+x_{23}+2x_{1}+x_{14}+2x_{13}+3x_{12}\mod 5$\\
207 & 1111100000011100 & 2 & $(0,\bar{x}_1,0,\bar{x}_2)$ & 2 & $x_{24}+x_{23}+3x_{1}+4x_{14}+3x_{13}+2x_{12}\mod 5$\\
208 & 1111100000011101 & 3 & $(\bar{x}_1,\bar{x}_1,\bar{x}_2,\bar{x}_3)$ & 3 & $x_{24}+x_{23}+x_{1}+2x_{134}+2x_{12}+2x_{124}\mod 3$\\
209 & 1111100000011110 & 3 & $(\bar{x}_1,\bar{x}_1,\bar{x}_2,\bar{x}_3)$ & 3 & $x_{24}+x_{23}+x_{1}+2x_{134}+2x_{12}+2x_{124}+2x_{123}\mod 3$\\
210 & 1111100000011111 & 2 & $(0,\bar{x}_1,0,\bar{x}_2)$ & 2 & $x_{24}+x_{23}+2x_{1}+2x_{14}+2x_{13}+x_{12}\mod 3$\\
211 & 1111100001000000 & 2 & $(0,\bar{x}_1,0,\bar{x}_2)$ & 2 & $x_{24}+x_{23}+x_{1}+2x_{14}+x_{13}+x_{12}\mod 3$\\
212 & 1111100001000010 & 2 & $(0,\bar{x}_1,0,\bar{x}_2)$ & 2 & $x_{24}+x_{23}+2x_{1}+3x_{14}+x_{13}+x_{12}\mod 5$\\
213 & 1111100001000011 & 2 & $(0,\bar{x}_1,0,\bar{x}_2)$ & 2 & $x_{24}+x_{23}+x_{1}+2x_{14}+x_{13}\mod 3$\\
214 & 1111100001001000 & 2 & $(0,\bar{x}_1,0,\bar{x}_2)$ & 2 & $x_{24}+x_{23}+2x_{1}+3x_{14}+x_{13}+3x_{12}\mod 5$\\
215 & 1111100001001001 & 2 & $(0,\bar{x}_1,0,\bar{x}_2)$ & 2 & $x_{24}+x_{23}+3x_{1}+2x_{14}+x_{13}+2x_{12}\mod 5$\\
216 & 1111100001001010 & 2 & $(0,\bar{x}_1,0,\bar{x}_2)$ & 2 & $x_{24}+x_{23}+2x_{1}+x_{14}+2x_{13}+x_{12}\mod 3$\\
217 & 1111100001001011 & 3 & $(\bar{x}_1,\bar{x}_1,\bar{x}_2,\bar{x}_3)$ & 3 & $x_{24}+x_{23}+x_{1}+2x_{14}+x_{134}+2x_{12}+2x_{124}+2x_{123}\mod 3$\\
218 & 1111100001001100 & 2 & $(0,\bar{x}_1,0,\bar{x}_2)$ & 2 & $x_{24}+x_{23}+x_{1}+2x_{14}+x_{13}+2x_{12}\mod 3$\\
219 & 1111100001001110 & 3 & $(\bar{x}_1,\bar{x}_1,\bar{x}_2,\bar{x}_3)$ & 3 & $x_{24}+x_{23}+x_{1}+2x_{14}+x_{134}+2x_{12}+2x_{123}\mod 3$\\
220 & 1111100001001111 & 2 & $(0,\bar{x}_1,0,\bar{x}_2)$ & 2 & $x_{24}+2x_{23}+x_{1}+4x_{14}+3x_{13}+4x_{12}\mod 5$\\
221 & 1111100010000000 & 2 & $(0,\bar{x}_1,0,\bar{x}_2)$ & 2 & $x_{24}+x_{23}+2x_{14}+2x_{13}+x_{12}\mod 3$\\
222 & 1111100010000001 & 2 & $(0,\bar{x}_1,0,\bar{x}_2)$ & 2 & $x_{24}+x_{23}+x_{14}+x_{13}+2x_{12}\mod 3$\\
223 & 1111100010000100 & 2 & $(0,\bar{x}_1,0,\bar{x}_2)$ & 2 & $x_{24}+x_{23}+x_{14}+2x_{13}+3x_{12}\mod 5$\\
224 & 1111100010000101 & 2 & $(0,\bar{x}_1,0,\bar{x}_2)$ & 2 & $x_{24}+x_{23}+2x_{14}+4x_{13}+2x_{12}\mod 5$\\
225 & 1111100010000110 & 2 & $(0,\bar{x}_1,0,\bar{x}_2)$ & 2 & $x_{24}+x_{23}+x_{14}+x_{13}+x_{12}\mod 3$\\
226 & 1111100010000111 & 3 & $(\bar{x}_1,\bar{x}_1,\bar{x}_2,\bar{x}_3)$ & 3 & $x_{24}+x_{23}+x_{14}+x_{13}+2x_{134}+2x_{12}+2x_{124}+2x_{123}\mod 3$\\
227 & 1111100010001000 & 2 & $(0,\bar{x}_1,0,\bar{x}_2)$ & 2 & $x_{24}+x_{23}+x_{14}+x_{13}\mod 3$\\
228 & 1111100010001001 & 2 & $(0,\bar{x}_1,0,\bar{x}_2)$ & 2 & $x_{24}+x_{23}+x_{14}+2x_{13}\mod 5$\\
229 & 1111100100000000 & 2 & $(0,\bar{x}_1,0,\bar{x}_2)$ & 2 & $x_{24}+2x_{23}+x_{1}+x_{13}\mod 3$\\
230 & 1111100100000100 & 2 & $(0,\bar{x}_1,0,\bar{x}_2)$ & 2 & $x_{24}+2x_{23}+x_{1}+x_{12}\mod 3$\\
231 & 1111100100000110 & 2 & $(0,\bar{x}_1,0,\bar{x}_2)$ & 2 & $x_{24}+x_{23}+x_{1}\mod 2$\\
232 & 1111100100001000 & 2 & $(0,\bar{x}_1,0,\bar{x}_2)$ & 2 & $x_{24}+2x_{23}+x_{1}+x_{14}+2x_{12}\mod 3$\\
233 & 1111100100001001 & 2 & $(0,\bar{x}_1,0,\bar{x}_2)$ & 2 & $x_{24}+x_{23}+x_{1}+x_{12}\mod 2$\\
234 & 1111100100001100 & 2 & $(0,\bar{x}_1,0,\bar{x}_2)$ & 2 & $x_{24}+2x_{23}+2x_{1}+2x_{14}+x_{12}\mod 3$\\
235 & 1111100100001101 & 3 & $(\bar{x}_1,\bar{x}_1,\bar{x}_2,\bar{x}_3)$ & 3 & $x_{24}+x_{23}+x_{234}+x_{1}+2x_{12}+2x_{124}+x_{123}\mod 3$\\
236 & 1111100100001110 & 3 & $(\bar{x}_1,\bar{x}_1,\bar{x}_2,\bar{x}_3)$ & 3 & $x_{24}+x_{23}+x_{234}+x_{1}+2x_{12}+2x_{124}+2x_{123}\mod 3$\\
237 & 1111100100001111 & 2 & $(0,\bar{x}_1,0,\bar{x}_2)$ & 2 & $x_{24}+4x_{23}+2x_{1}+4x_{14}+x_{13}+3x_{12}\mod 5$\\
238 & 1111100101000000 & 2 & $(0,\bar{x}_1,0,\bar{x}_2)$ & 2 & $x_{24}+2x_{23}+x_{1}+2x_{14}+x_{13}\mod 3$\\
239 & 1111100101000010 & 2 & $(0,\bar{x}_1,0,\bar{x}_2)$ & 2 & $x_{24}+2x_{23}+2x_{1}+x_{14}+2x_{13}\mod 3$\\
240 & 1111100101000100 & 2 & $(0,\bar{x}_1,0,\bar{x}_2)$ & 2 & $x_{24}+2x_{23}+2x_{1}+x_{14}+2x_{13}+2x_{12}\mod 3$\\
241 & 1111100101000110 & 2 & $(0,\bar{x}_1,0,\bar{x}_2)$ & 2 & $x_{24}+4x_{23}+3x_{1}+2x_{14}+4x_{13}+4x_{12}\mod 5$\\
242 & 1111100101001000 & 2 & $(0,\bar{x}_1,0,\bar{x}_2)$ & 2 & $x_{24}+4x_{23}+2x_{1}+3x_{14}+4x_{13}+3x_{12}\mod 5$\\
243 & 1111100101001001 & 2 & $(0,\bar{x}_1,0,\bar{x}_2)$ & 2 & $x_{24}+2x_{23}+2x_{1}+x_{14}+2x_{13}+x_{12}\mod 3$\\
244 & 1111100101001010 & 2 & $(0,\bar{x}_1,0,\bar{x}_2)$ & 2 & $x_{24}+4x_{23}+2x_{1}+3x_{14}+x_{13}+3x_{12}\mod 5$\\
245 & 1111100101001011 & 3 & $(\bar{x}_1,\bar{x}_1,\bar{x}_2,\bar{x}_3)$ & 3 & $x_{24}+x_{23}+x_{1}+x_{14}+x_{134}+x_{12}+x_{124}+x_{123}\mod 2$\\
246 & 1111100101001100 & 2 & $(0,\bar{x}_1,0,\bar{x}_2)$ & 2 & $x_{24}+4x_{23}+x_{1}+4x_{14}+2x_{13}+4x_{12}\mod 5$\\
247 & 1111100101001101 & 3 & $(\bar{x}_1,\bar{x}_1,\bar{x}_2,\bar{x}_3)$ & 3 & $x_{24}+x_{23}+x_{1}+x_{14}+x_{134}+x_{12}\mod 2$\\
248 & 1111100101001110 & 3 & $(\bar{x}_1,\bar{x}_1,\bar{x}_2,\bar{x}_3)$ & 3 & $x_{24}+x_{23}+x_{1}+x_{14}+x_{134}+x_{12}+x_{123}\mod 2$\\
249 & 1111100101001111 & 2 & $(0,\bar{x}_1,0,\bar{x}_2)$ & 2 & $x_{24}+2x_{23}+x_{1}+2x_{14}+x_{13}+2x_{12}\mod 3$\\
250 & 1111100101100000 & 2 & $(0,\bar{x}_1,0,\bar{x}_2)$ & 2 & $x_{24}+x_{23}+x_{1}+x_{14}+x_{13}\mod 2$\\
251 & 1111100101101000 & 2 & $(0,\bar{x}_1,0,\bar{x}_2)$ & 2 & $x_{24}+4x_{23}+2x_{1}+3x_{14}+3x_{13}+3x_{12}\mod 5$\\
252 & 1111100101101001 & 2 & $(0,\bar{x}_1,0,\bar{x}_2)$ & 2 & $4x_{24}+2x_{23}+3x_{1}+3x_{14}+3x_{13}+3x_{12}\mod 6$\\
253 & 1111100101101100 & 2 & $(0,\bar{x}_1,0,\bar{x}_2)$ & 2 & $x_{24}+2x_{23}+x_{1}+2x_{14}+2x_{13}+2x_{12}\mod 3$\\
254 & 1111100101101110 & 3 & $(\bar{x}_1,\bar{x}_1,\bar{x}_2,\bar{x}_3)$ & 3 & $x_{24}+x_{23}+x_{234}+x_{1}+2x_{14}+2x_{13}+2x_{12}\mod 3$\\
255 & 1111100101101111 & 2 & $(0,\bar{x}_1,0,\bar{x}_2)$ & 2 & $3x_{24}+3x_{23}+3x_{1}+3x_{14}+3x_{13}+3x_{12}\mod 6$\\
256 & 1111100110000000 & 2 & $(0,\bar{x}_1,0,\bar{x}_2)$ & 2 & $x_{24}+2x_{23}+x_{14}+x_{13}+2x_{12}\mod 3$\\
257 & 1111100110000001 & 2 & $(0,\bar{x}_1,0,\bar{x}_2)$ & 2 & $x_{24}+4x_{23}+x_{14}+2x_{13}+2x_{12}\mod 5$\\
258 & 1111100110000100 & 2 & $(0,\bar{x}_1,0,\bar{x}_2)$ & 2 & $x_{24}+4x_{23}+x_{14}+2x_{13}+3x_{12}\mod 5$\\
259 & 1111100110000101 & 2 & $(0,\bar{x}_1,0,\bar{x}_2)$ & 2 & $x_{24}+2x_{23}+x_{14}+x_{13}+x_{12}\mod 3$\\
260 & 1111100110000110 & 2 & $(0,\bar{x}_1,0,\bar{x}_2)$ & 2 & $x_{24}+4x_{23}+x_{14}+3x_{13}+3x_{12}\mod 5$\\
261 & 1111100110000111 & 3 & $(\bar{x}_1,\bar{x}_1,\bar{x}_2,\bar{x}_3)$ & 3 & $x_{24}+x_{23}+x_{14}+x_{13}+x_{134}+x_{12}+x_{124}+x_{123}\mod 2$\\
262 & 1111100110001000 & 2 & $(0,\bar{x}_1,0,\bar{x}_2)$ & 2 & $x_{24}+4x_{23}+x_{14}+2x_{13}\mod 5$\\
263 & 1111100110001001 & 3 & $(\bar{x}_1,\bar{x}_2,\bar{x}_3,\bar{x}_3)$ & 3 & $x_{24}+x_{23}+x_{234}+x_{14}+x_{13}+2x_{124}+2x_{123}\mod 3$\\
264 & 1111100110001100 & 2 & $(0,\bar{x}_1,0,\bar{x}_2)$ & 2 & $x_{24}+2x_{23}+2x_{14}+2x_{13}\mod 3$\\
265 & 1111100110001101 & 3 & $(\bar{x}_1,\bar{x}_1,\bar{x}_2,\bar{x}_3)$ & 3 & $x_{24}+x_{23}+x_{14}+x_{13}+x_{134}+x_{123}\mod 2$\\
266 & 1111100110010000 & 2 & $(0,\bar{x}_1,0,\bar{x}_2)$ & 2 & $x_{24}+x_{23}+x_{14}+x_{13}+x_{12}\mod 2$\\
267 & 1111100110010100 & 2 & $(0,\bar{x}_1,0,\bar{x}_2)$ & 2 & $x_{24}+2x_{23}+x_{14}+2x_{13}+x_{12}\mod 3$\\
268 & 1111100110010110 & 2 & $(0,\bar{x}_1,0,\bar{x}_2)$ & 2 & $x_{24}+11x_{23}+5x_{14}+7x_{13}+6x_{12}\mod 12$\\
269 & 1111100110011000 & 3 & $(\bar{x}_1,\bar{x}_2,\bar{x}_3,\bar{x}_3)$ & 3 & $x_{24}+x_{23}+x_{234}+x_{14}+x_{13}+x_{134}+2x_{123}\mod 3$\\
270 & 1111100110011001 & 2 & $(0,\bar{x}_1,0,\bar{x}_2)$ & 2 & $x_{24}+2x_{23}+x_{14}+2x_{13}\mod 3$\\
271 & 1111100110011100 & 3 & $(\bar{x}_1,\bar{x}_2,\bar{x}_3,\bar{x}_3)$ & 3 & $x_{24}+x_{23}+x_{14}+x_{13}+x_{123}\mod 2$\\
272 & 1111100110011101 & 3 & $(\bar{x}_1,\bar{x}_1,\bar{x}_2,\bar{x}_3)$ & 3 & $x_{24}+x_{23}+x_{234}+x_{14}+x_{13}+x_{134}+x_{124}+2x_{123}\mod 3$\\
273 & 1111100110011110 & 3 & $(\bar{x}_1,\bar{x}_1,\bar{x}_2,\bar{x}_3)$ & 3 & $x_{24}+x_{23}+x_{234}+x_{14}+x_{13}+x_{134}+x_{124}+x_{123}\mod 3$\\
274 & 1111100110011111 & 2 & $(0,\bar{x}_1,0,\bar{x}_2)$ & 2 & $x_{24}+x_{23}+x_{14}+x_{13}\mod 2$\\
275 & 1111110000000000 & 2 & $(0,\bar{x}_1,\bar{x}_2,0)$ & 2 & $x_{23}+x_{1}\mod 3$\\
276 & 1111110000000010 & 2 & $(0,\bar{x}_1,\bar{x}_2,0)$ & 2 & $x_{23}+2x_{1}+2x_{14}\mod 3$\\
277 & 1111110000000011 & 2 & $(0,\bar{x}_1,\bar{x}_2,0)$ & 2 & $x_{23}+x_{1}\mod 2$\\
278 & 1111110000100000 & 2 & $(0,\bar{x}_1,\bar{x}_2,0)$ & 2 & $x_{23}+x_{1}+x_{14}+2x_{13}\mod 3$\\
279 & 1111110000100001 & 2 & $(0,\bar{x}_1,\bar{x}_2,0)$ & 2 & $x_{23}+2x_{1}+2x_{14}+x_{13}\mod 3$\\
280 & 1111110000100010 & 2 & $(0,\bar{x}_1,\bar{x}_2,0)$ & 2 & $x_{23}+2x_{1}+x_{14}+3x_{13}+4x_{12}\mod 5$\\
281 & 1111110000100011 & 3 & $(\bar{x}_1,\bar{x}_2,\bar{x}_1,\bar{x}_3)$ & 3 & $x_{23}+x_{1}+2x_{13}+x_{134}+x_{12}+2x_{124}+x_{123}\mod 3$\\
282 & 1111110000100100 & 2 & $(0,\bar{x}_1,\bar{x}_2,0)$ & 2 & $x_{23}+2x_{1}+x_{14}+3x_{13}+2x_{12}\mod 5$\\
283 & 1111110000100110 & 2 & $(0,\bar{x}_1,\bar{x}_2,0)$ & 2 & $x_{23}+2x_{1}+2x_{14}+x_{13}+2x_{12}\mod 3$\\
284 & 1111110000100111 & 3 & $(\bar{x}_1,\bar{x}_1,\bar{x}_2,\bar{x}_3)$ & 3 & $x_{23}+x_{1}+x_{13}+x_{134}+x_{124}+x_{123}\mod 2$\\
285 & 1111110000101000 & 2 & $(0,\bar{x}_1,\bar{x}_2,0)$ & 2 & $x_{23}+2x_{1}+2x_{14}+x_{13}+x_{12}\mod 3$\\
286 & 1111110000101001 & 2 & $(0,\bar{x}_1,\bar{x}_2,0)$ & 2 & $x_{23}+2x_{1}+x_{14}+3x_{13}+3x_{12}\mod 5$\\
287 & 1111110000101010 & 2 & $(0,\bar{x}_1,\bar{x}_2,0)$ & 2 & $x_{23}+x_{1}+x_{14}+2x_{13}+2x_{12}\mod 3$\\
288 & 1111110000101011 & 3 & $(\bar{x}_1,\bar{x}_1,\bar{x}_2,\bar{x}_3)$ & 3 & $x_{23}+x_{1}+x_{13}+x_{134}+x_{12}+x_{124}\mod 2$\\
289 & 1111110000110000 & 2 & $(0,\bar{x}_1,\bar{x}_2,0)$ & 2 & $x_{23}+x_{1}+x_{13}\mod 2$\\
290 & 1111110000111000 & 3 & $(\bar{x}_1,\bar{x}_2,\bar{x}_2,\bar{x}_3)$ & 3 & $x_{23}+x_{1}+2x_{13}+2x_{12}+x_{124}+x_{123}\mod 3$\\
291 & 1111110000111001 & 3 & $(\bar{x}_1,\bar{x}_2,\bar{x}_1,\bar{x}_3)$ & 3 & $x_{23}+x_{1}+x_{13}+x_{12}+x_{124}+x_{123}\mod 2$\\
292 & 1111110000111010 & 3 & $(\bar{x}_1,\bar{x}_2,\bar{x}_1,\bar{x}_3)$ & 3 & $x_{23}+x_{1}+x_{13}+x_{12}+x_{124}\mod 2$\\
293 & 1111110000111100 & 2 & $(0,\bar{x}_1,\bar{x}_2,0)$ & 2 & $x_{23}+2x_{1}+x_{13}+x_{12}\mod 3$\\
294 & 1111110000111110 & 3 & $(\bar{x}_1,\bar{x}_1,\bar{x}_2,\bar{x}_3)$ & 3 & $x_{23}+x_{1}+x_{14}+2x_{13}+2x_{134}+2x_{12}+2x_{124}\mod 3$\\
295 & 1111110000111111 & 2 & $(0,\bar{x}_1,\bar{x}_2,0)$ & 2 & $x_{23}+x_{1}+x_{13}+x_{12}\mod 2$\\
296 & 1111110010000000 & 2 & $(0,\bar{x}_1,\bar{x}_2,0)$ & 2 & $x_{23}+2x_{14}+2x_{13}+2x_{12}\mod 3$\\
297 & 1111110010000001 & 2 & $(0,\bar{x}_1,\bar{x}_2,0)$ & 2 & $x_{23}+x_{14}+x_{13}+2x_{12}\mod 5$\\
298 & 1111110010000010 & 2 & $(0,\bar{x}_1,\bar{x}_2,0)$ & 2 & $x_{23}+x_{14}+x_{13}+x_{12}\mod 3$\\
299 & 1111110010000011 & 2 & $(0,\bar{x}_1,\bar{x}_2,0)$ & 3 & $x_{23}+x_{14}+x_{13}+x_{12}+2x_{124}\mod 3$\\
300 & 1111110010010000 & 2 & $(0,\bar{x}_1,\bar{x}_2,0)$ & 2 & $x_{23}+x_{14}+2x_{13}+x_{12}\mod 3$\\
301 & 1111110010010001 & 2 & $(0,\bar{x}_1,\bar{x}_2,0)$ & 2 & $x_{23}+2x_{14}+x_{13}+2x_{12}\mod 3$\\
302 & 1111110010010010 & 2 & $(0,\bar{x}_1,\bar{x}_2,0)$ & 2 & $x_{23}+2x_{14}+3x_{13}+x_{12}\mod 5$\\
303 & 1111110010010011 & 3 & $(\bar{x}_1,\bar{x}_2,\bar{x}_1,\bar{x}_3)$ & 3 & $x_{23}+x_{14}+x_{13}+x_{12}+x_{124}+x_{123}\mod 2$\\
304 & 1111110010010100 & 2 & $(0,\bar{x}_1,\bar{x}_2,0)$ & 2 & $x_{23}+2x_{14}+3x_{13}+3x_{12}\mod 5$\\
305 & 1111110010010101 & 2 & $(0,\bar{x}_1,\bar{x}_2,0)$ & 2 & $x_{23}+x_{14}+x_{13}+x_{12}\mod 2$\\
306 & 1111110010010110 & 2 & $(0,\bar{x}_1,\bar{x}_2,0)$ & 2 & $2x_{23}+x_{14}+2x_{13}+2x_{12}\mod 3$\\
307 & 1111110010010111 & 3 & $(\bar{x}_1,\bar{x}_1,\bar{x}_2,\bar{x}_3)$ & 3 & $x_{23}+x_{14}+x_{13}+x_{134}+x_{12}+x_{124}\mod 3$\\
308 & 1111110010100000 & 2 & $(0,\bar{x}_1,\bar{x}_2,0)$ & 2 & $x_{23}+x_{14}+x_{12}\mod 5$\\
309 & 1111110010100001 & 2 & $(0,\bar{x}_1,\bar{x}_2,0)$ & 2 & $x_{23}+x_{14}+x_{12}\mod 3$\\
310 & 1111110010100100 & 2 & $(0,\bar{x}_1,\bar{x}_2,0)$ & 2 & $x_{23}+2x_{14}+x_{12}\mod 3$\\
311 & 1111110010100101 & 3 & $(\bar{x}_1,\bar{x}_2,\bar{x}_3,\bar{x}_2)$ & 3 & $x_{23}+x_{14}+x_{12}+x_{123}\mod 2$\\
312 & 1111110010100110 & 2 & $(0,\bar{x}_1,\bar{x}_2,0)$ & 2 & $x_{23}+x_{14}+x_{12}\mod 2$\\
313 & 1111110010100111 & 3 & $(\bar{x}_1,\bar{x}_2,\bar{x}_3,\bar{x}_2)$ & 3 & $x_{23}+x_{14}+x_{134}+x_{12}+x_{124}+x_{123}\mod 3$\\
314 & 1111110010101000 & 2 & $(0,\bar{x}_1,\bar{x}_2,0)$ & 2 & $x_{23}+x_{14}\mod 3$\\
315 & 1111110010101001 & 2 & $(0,\bar{x}_1,\bar{x}_2,0)$ & 2 & $x_{23}+x_{14}\mod 2$\\
316 & 1111110011000000 & 2 & $(0,\bar{x}_1,\bar{x}_2,0)$ & 2 & $x_{23}+x_{13}+x_{12}\mod 2$\\
317 & 1111110011000010 & 2 & $(0,\bar{x}_1,\bar{x}_2,0)$ & 3 & $x_{23}+x_{13}+x_{12}+x_{124}\mod 3$\\
318 & 1111110011000011 & 2 & $(0,\bar{x}_1,\bar{x}_2,0)$ & 2 & $x_{23}+x_{13}+x_{12}\mod 3$\\
319 & 1111111000000000 & 3 & $(0,\bar{x}_1,\bar{x}_2,\bar{x}_3)$ & 3 & $x_{234}+x_{1}\mod 3$\\
320 & 1111111000000001 & 3 & $(0,\bar{x}_1,\bar{x}_2,\bar{x}_3)$ & 3 & $x_{234}+x_{1}\mod 2$\\
321 & 1111111000010000 & 3 & $(0,\bar{x}_1,\bar{x}_2,\bar{x}_3)$ & 3 & $x_{234}+x_{1}+x_{134}\mod 2$\\
322 & 1111111000010001 & 3 & $(0,\bar{x}_1,\bar{x}_2,\bar{x}_3)$ & 3 & $x_{234}+x_{1}+2x_{134}+x_{124}+x_{123}\mod 3$\\
323 & 1111111000010100 & 3 & $(0,\bar{x}_1,\bar{x}_2,\bar{x}_3)$ & 3 & $x_{234}+x_{1}+2x_{134}+2x_{124}+x_{123}\mod 3$\\
324 & 1111111000010101 & 3 & $(0,\bar{x}_1,\bar{x}_2,\bar{x}_3)$ & 3 & $x_{234}+x_{1}+x_{134}+x_{124}\mod 2$\\
325 & 1111111000010110 & 3 & $(0,\bar{x}_1,\bar{x}_2,\bar{x}_3)$ & 3 & $x_{234}+x_{1}+x_{134}+x_{124}+x_{123}\mod 2$\\
326 & 1111111000010111 & 3 & $(0,\bar{x}_1,\bar{x}_2,\bar{x}_3)$ & 3 & $x_{234}+x_{1}+x_{13}+x_{134}+x_{12}+x_{124}\mod 3$\\
327 & 1111111001000000 & 3 & $(0,\bar{x}_1,\bar{x}_2,\bar{x}_3)$ & 3 & $x_{234}+x_{1}+x_{14}+x_{134}+x_{124}\mod 2$\\
328 & 1111111001000001 & 3 & $(0,\bar{x}_1,\bar{x}_2,\bar{x}_3)$ & 3 & $x_{234}+x_{1}+2x_{14}+x_{134}+x_{124}\mod 3$\\
329 & 1111111001000010 & 3 & $(0,\bar{x}_1,\bar{x}_2,\bar{x}_3)$ & 3 & $x_{234}+x_{1}+2x_{14}+x_{134}+x_{124}+2x_{123}\mod 3$\\
330 & 1111111001000011 & 3 & $(0,\bar{x}_1,\bar{x}_2,\bar{x}_3)$ & 3 & $x_{234}+x_{1}+x_{14}+x_{134}+x_{124}+x_{123}\mod 2$\\
331 & 1111111001010000 & 3 & $(0,\bar{x}_1,\bar{x}_2,\bar{x}_3)$ & 3 & $x_{234}+x_{1}+2x_{14}+x_{124}\mod 3$\\
332 & 1111111001010001 & 3 & $(0,\bar{x}_1,\bar{x}_2,\bar{x}_3)$ & 3 & $x_{234}+x_{1}+x_{14}+x_{124}\mod 2$\\
333 & 1111111001010010 & 3 & $(0,\bar{x}_1,\bar{x}_2,\bar{x}_3)$ & 3 & $x_{234}+x_{1}+x_{14}+x_{124}+x_{123}\mod 2$\\
334 & 1111111001010011 & 3 & $(0,\bar{x}_1,\bar{x}_2,\bar{x}_3)$ & 3 & $x_{234}+x_{1}+2x_{14}+x_{12}+x_{123}\mod 3$\\
335 & 1111111001010100 & 3 & $(0,\bar{x}_1,\bar{x}_2,\bar{x}_3)$ & 3 & $x_{234}+x_{1}+x_{14}\mod 2$\\
336 & 1111111001010101 & 3 & $(0,\bar{x}_1,\bar{x}_2,\bar{x}_3)$ & 3 & $x_{234}+x_{1}+2x_{14}+x_{12}+2x_{124}+2x_{123}\mod 3$\\
337 & 1111111001010110 & 3 & $(0,\bar{x}_1,\bar{x}_2,\bar{x}_3)$ & 3 & $x_{234}+x_{1}+2x_{14}+x_{12}+2x_{124}+x_{123}\mod 3$\\
338 & 1111111001010111 & 3 & $(0,\bar{x}_1,\bar{x}_2,\bar{x}_3)$ & 3 & $x_{234}+x_{1}+x_{14}+x_{123}\mod 2$\\
339 & 1111111001100000 & 3 & $(0,\bar{x}_1,\bar{x}_2,\bar{x}_3)$ & 3 & $x_{234}+x_{1}+2x_{14}+2x_{13}+x_{124}+x_{123}\mod 3$\\
340 & 1111111001100001 & 3 & $(0,\bar{x}_1,\bar{x}_2,\bar{x}_3)$ & 3 & $x_{234}+x_{1}+x_{14}+x_{13}+x_{124}+x_{123}\mod 2$\\
341 & 1111111001100100 & 3 & $(0,\bar{x}_1,\bar{x}_2,\bar{x}_3)$ & 3 & $x_{234}+x_{1}+x_{14}+x_{13}+x_{123}\mod 2$\\
342 & 1111111001100101 & 3 & $(0,\bar{x}_1,\bar{x}_2,\bar{x}_3)$ & 3 & $x_{234}+x_{1}+2x_{14}+2x_{13}+x_{12}+2x_{124}\mod 3$\\
343 & 1111111001100110 & 3 & $(0,\bar{x}_1,\bar{x}_2,\bar{x}_3)$ & 3 & $x_{234}+x_{1}+2x_{14}+2x_{13}+x_{12}+2x_{124}+2x_{123}\mod 3$\\
344 & 1111111001100111 & 3 & $(0,\bar{x}_1,\bar{x}_2,\bar{x}_3)$ & 3 & $x_{234}+x_{1}+x_{14}+x_{13}\mod 2$\\
345 & 1111111001101000 & 3 & $(0,\bar{x}_1,\bar{x}_2,\bar{x}_3)$ & 3 & $x_{234}+x_{1}+x_{14}+x_{13}+x_{12}\mod 2$\\
346 & 1111111001101001 & 3 & $(0,\bar{x}_1,\bar{x}_2,\bar{x}_3)$ & 3 & $x_{234}+x_{1}+2x_{14}+2x_{13}+2x_{12}+2x_{124}+2x_{123}\mod 3$\\
347 & 1111111001110000 & 3 & $(0,\bar{x}_1,\bar{x}_2,\bar{x}_3)$ & 3 & $x_{234}+x_{1}+x_{14}+x_{13}+x_{134}+x_{124}+x_{123}\mod 2$\\
348 & 1111111001110100 & 3 & $(0,\bar{x}_1,\bar{x}_2,\bar{x}_3)$ & 3 & $x_{234}+x_{1}+2x_{14}+2x_{13}+x_{134}+x_{123}\mod 3$\\
349 & 1111111001110110 & 3 & $(0,\bar{x}_1,\bar{x}_2,\bar{x}_3)$ & 3 & $x_{234}+x_{1}+x_{14}+x_{13}+x_{134}\mod 2$\\
350 & 1111111001111000 & 3 & $(0,\bar{x}_1,\bar{x}_2,\bar{x}_3)$ & 3 & $x_{234}+x_{1}+2x_{14}+2x_{13}+x_{134}+2x_{12}+2x_{123}\mod 3$\\
351 & 1111111001111001 & 3 & $(0,\bar{x}_1,\bar{x}_2,\bar{x}_3)$ & 3 & $x_{234}+x_{1}+x_{14}+x_{13}+x_{134}+x_{12}\mod 2$\\
352 & 1111111001111100 & 3 & $(0,\bar{x}_1,\bar{x}_2,\bar{x}_3)$ & 3 & $x_{234}+x_{1}+x_{14}+x_{13}+x_{134}+x_{12}+x_{124}\mod 2$\\
353 & 1111111001111101 & 3 & $(0,\bar{x}_1,\bar{x}_2,\bar{x}_3)$ & 3 & $x_{234}+x_{1}+2x_{14}+2x_{13}+x_{134}+2x_{12}+x_{124}+2x_{123}\mod 3$\\
354 & 1111111001111110 & 3 & $(0,\bar{x}_1,\bar{x}_2,\bar{x}_3)$ & 3 & $x_{234}+x_{1}+2x_{14}+2x_{13}+x_{134}+2x_{12}+x_{124}+x_{123}\mod 3$\\
355 & 1111111001111111 & 3 & $(0,\bar{x}_1,\bar{x}_2,\bar{x}_3)$ & 3 & $x_{234}+x_{1}+x_{14}+x_{13}+x_{134}+x_{12}+x_{124}+x_{123}\mod 2$\\
356 & 1111111010000000 & 3 & $(0,\bar{x}_1,\bar{x}_2,\bar{x}_3)$ & 3 & $x_{234}+x_{14}+x_{13}+x_{134}+x_{12}+x_{124}+x_{123}\mod 2$\\
357 & 1111111010000001 & 3 & $(0,\bar{x}_1,\bar{x}_2,\bar{x}_3)$ & 3 & $x_{234}+x_{14}+x_{13}+x_{12}+2x_{123}\mod 3$\\
358 & 1111111010010000 & 3 & $(0,\bar{x}_1,\bar{x}_2,\bar{x}_3)$ & 3 & $x_{234}+x_{14}+x_{13}+x_{134}+x_{12}\mod 3$\\
359 & 1111111010010001 & 3 & $(0,\bar{x}_1,\bar{x}_2,\bar{x}_3)$ & 3 & $x_{234}+x_{14}+x_{13}+x_{12}+x_{124}+x_{123}\mod 2$\\
360 & 1111111010010100 & 3 & $(0,\bar{x}_1,\bar{x}_2,\bar{x}_3)$ & 3 & $x_{234}+x_{14}+x_{13}+x_{12}+x_{123}\mod 2$\\
361 & 1111111010010101 & 3 & $(0,\bar{x}_1,\bar{x}_2,\bar{x}_3)$ & 3 & $x_{234}+x_{14}+x_{13}+x_{134}+x_{12}+x_{124}\mod 3$\\
362 & 1111111010010110 & 3 & $(0,\bar{x}_1,\bar{x}_2,\bar{x}_3)$ & 3 & $x_{234}+x_{14}+x_{13}+x_{134}+x_{12}+x_{124}+x_{123}\mod 3$\\
363 & 1111111010010111 & 3 & $(0,\bar{x}_1,\bar{x}_2,\bar{x}_3)$ & 3 & $x_{234}+x_{14}+x_{13}+x_{12}\mod 2$\\
364 & 1111111011000000 & 3 & $(0,\bar{x}_1,\bar{x}_2,\bar{x}_3)$ & 3 & $x_{234}+x_{13}+x_{12}+2x_{123}\mod 3$\\
365 & 1111111011000001 & 3 & $(0,\bar{x}_1,\bar{x}_2,\bar{x}_3)$ & 3 & $x_{234}+x_{13}+x_{12}+x_{123}\mod 2$\\
366 & 1111111011000010 & 3 & $(0,\bar{x}_1,\bar{x}_2,\bar{x}_3)$ & 3 & $x_{234}+x_{13}+x_{12}\mod 2$\\
367 & 1111111011000011 & 3 & $(0,\bar{x}_1,\bar{x}_2,\bar{x}_3)$ & 3 & $x_{234}+x_{13}+2x_{12}+2x_{124}\mod 3$\\
368 & 1111111011010000 & 3 & $(0,\bar{x}_1,\bar{x}_2,\bar{x}_3)$ & 3 & $x_{234}+x_{13}+x_{134}+x_{12}+x_{123}\mod 2$\\
369 & 1111111011010010 & 3 & $(0,\bar{x}_1,\bar{x}_2,\bar{x}_3)$ & 3 & $x_{234}+x_{13}+2x_{134}+x_{12}+x_{124}+x_{123}\mod 3$\\
370 & 1111111011010011 & 3 & $(0,\bar{x}_1,\bar{x}_2,\bar{x}_3)$ & 3 & $x_{234}+x_{13}+x_{134}+x_{12}\mod 2$\\
371 & 1111111011100000 & 3 & $(0,\bar{x}_1,\bar{x}_2,\bar{x}_3)$ & 3 & $x_{234}+x_{134}+x_{12}\mod 2$\\
372 & 1111111011100001 & 3 & $(0,\bar{x}_1,\bar{x}_2,\bar{x}_3)$ & 3 & $x_{234}+x_{134}+x_{12}\mod 3$\\
373 & 1111111011100100 & 3 & $(0,\bar{x}_1,\bar{x}_2,\bar{x}_3)$ & 3 & $x_{234}+x_{134}+x_{12}+2x_{124}\mod 3$\\
374 & 1111111011100101 & 3 & $(0,\bar{x}_1,\bar{x}_2,\bar{x}_3)$ & 3 & $x_{234}+x_{134}+x_{12}+x_{124}\mod 2$\\
375 & 1111111011100110 & 3 & $(0,\bar{x}_1,\bar{x}_2,\bar{x}_3)$ & 3 & $x_{234}+x_{134}+x_{12}+x_{124}+x_{123}\mod 2$\\
376 & 1111111011100111 & 3 & $(0,\bar{x}_1,\bar{x}_2,\bar{x}_3)$ & 3 & $x_{234}+x_{134}+2x_{12}+x_{124}+x_{123}\mod 3$\\
377 & 1111111011101000 & 3 & $(0,\bar{x}_1,\bar{x}_2,\bar{x}_3)$ & 3 & $x_{234}+x_{134}+x_{124}+x_{123}\mod 3$\\
378 & 1111111011101001 & 3 & $(0,\bar{x}_1,\bar{x}_2,\bar{x}_3)$ & 3 & $x_{234}+x_{134}+x_{124}+x_{123}\mod 2$\\
379 & 1111111100000000 & 1 & $(\bar{x}_1,0,0,0)$ & 1 & $x_{1}\mod 2$\\
380 & 1111111110000000 & 2 & $(\bar{x}_1,0,0,\bar{x}_2)$ & 2 & $x_{14}+x_{13}+x_{12}\mod 5$\\
381 & 1111111110000001 & 2 & $(\bar{x}_1,0,0,\bar{x}_2)$ & 2 & $x_{14}+x_{13}+x_{12}\mod 3$\\
382 & 1111111110010000 & 2 & $(\bar{x}_1,0,0,\bar{x}_2)$ & 2 & $x_{14}+4x_{13}+2x_{12}\mod 5$\\
383 & 1111111110010100 & 2 & $(\bar{x}_1,0,0,\bar{x}_2)$ & 2 & $x_{14}+2x_{13}+2x_{12}\mod 3$\\
384 & 1111111110010110 & 2 & $(\bar{x}_1,0,0,\bar{x}_2)$ & 2 & $3x_{14}+3x_{13}+3x_{12}\mod 6$\\
385 & 1111111111000000 & 2 & $(\bar{x}_1,0,\bar{x}_2,0)$ & 2 & $x_{13}+x_{12}\mod 3$\\
386 & 1111111111000010 & 3 & $(\bar{x}_1,\bar{x}_2,\bar{x}_2,\bar{x}_3)$ & 3 & $x_{13}+x_{12}+x_{124}+x_{123}\mod 3$\\
387 & 1111111111000011 & 2 & $(\bar{x}_1,0,\bar{x}_2,0)$ & 2 & $x_{13}+x_{12}\mod 2$\\
388 & 1111111111100000 & 3 & $(\bar{x}_1,0,\bar{x}_2,\bar{x}_3)$ & 3 & $x_{134}+x_{12}\mod 3$\\
389 & 1111111111100001 & 3 & $(\bar{x}_1,0,\bar{x}_2,\bar{x}_3)$ & 3 & $x_{134}+x_{12}\mod 2$\\
390 & 1111111111100100 & 3 & $(\bar{x}_1,0,\bar{x}_2,\bar{x}_3)$ & 3 & $x_{134}+x_{12}+x_{124}\mod 2$\\
391 & 1111111111100110 & 3 & $(\bar{x}_1,0,\bar{x}_2,\bar{x}_3)$ & 3 & $x_{134}+2x_{12}+x_{124}+x_{123}\mod 3$\\
392 & 1111111111100111 & 3 & $(\bar{x}_1,0,\bar{x}_2,\bar{x}_3)$ & 3 & $x_{134}+x_{12}+x_{124}+x_{123}\mod 2$\\
393 & 1111111111101000 & 3 & $(\bar{x}_1,0,\bar{x}_2,\bar{x}_3)$ & 3 & $x_{134}+x_{124}+x_{123}\mod 2$\\
394 & 1111111111101001 & 3 & $(\bar{x}_1,0,\bar{x}_2,\bar{x}_3)$ & 3 & $x_{134}+x_{124}+x_{123}\mod 3$\\
395 & 1111111111110000 & 2 & $(\bar{x}_1,\bar{x}_2,0,0)$ & 2 & $x_{12}\mod 2$\\
396 & 1111111111111000 & 3 & $(\bar{x}_1,\bar{x}_2,0,\bar{x}_3)$ & 3 & $x_{124}+x_{123}\mod 3$\\
397 & 1111111111111001 & 3 & $(\bar{x}_1,\bar{x}_2,0,\bar{x}_3)$ & 3 & $x_{124}+x_{123}\mod 2$\\
398 & 1111111111111100 & 3 & $(\bar{x}_1,\bar{x}_2,\bar{x}_3,0)$ & 3 & $x_{123}\mod 2$\\
399 & 1111111111111110 & 4 & $(\bar{x}_1,\bar{x}_2,\bar{x}_3,\bar{x}_4)$ & 4 & $x_{1234}\mod 2$

\end{longtable}

\end{document}